\title{Regular Expressions with Backreferences on Multiple Context-Free Languages, and the Closed-Star Condition} 
\titlerunning{Regular Expressions with Backreferences on MCFL, and the Closed-Star Condition} 
\author{Taisei Nogami}{Waseda University, Tokyo, Japan}{sora410@fuji.waseda.jp}{}{}
\author{Tachio Terauchi}{Waseda University, Tokyo, Japan \and \url{https://www.f.waseda.jp/terauchi/}}{terauchi@waseda.jp}{https://orcid.org/0000-0001-5305-4916}{}
\authorrunning{T. Nogami and T. Terauchi}
\keywords{Regular expressions, Backreferences, Multiple context-free languages, Parallel multiple context-free languages, EDT0L languages, Nonerasing stack languages, Expressive power} 
\definecolor{mydeepred}{HTML}{c51b1d}
\theoremstyle{definition}
\newtheorem{definition2}[theorem]{Definition}
\begin{document}
\newif\ifcondensed
\condensedtrue

\maketitle

\begin{abstract}
    Due to miscellaneous practical extensions, regular expressions in the real world take on an entirely different look from the classical regular expressions found in the textbooks on formal languages and automata theory.  \emph{Backreference} is such a well known and practical extension of regular expressions and most modern programming languages, such as Java, Python, JavaScript and more, support regular expressions with backreferences (rewb) in their standard libraries for string processing.  A difficulty of backreference is non-regularity: unlike some other extensions, backreference strictly enhances the expressive power of regular expressions and thus rewbs can describe non-regular (in fact, even non-context-free) languages.  In this paper, we investigate the expressive power of rewbs by comparing rewbs to multiple context-free languages (MCFL) and parallel multiple context-free languages (PMCFL).  First, we prove that the language class of rewbs is a proper subclass of unary-PMCFLs.  The class of unary-PMCFLs coincides with that of EDT0L languages, and our result strictly improves the known upper bound of rewbs.  Additionally, we show that, however, the language class of rewbs is not contained in that of MCFLs even when restricted to rewbs with only one capturing group and no captured references.  Therefore, in general, the parallelism seems essential for rewbs.  Backed by these results, we define a novel syntactic condition on rewbs that we call \emph{closed-star} and observe that it provides an upper bound on the number of times a rewb references the same captured string.  The closed-star condition allows dispensing with the parallelism: that is, we prove that the language class of closed-star rewbs falls inside the class of unary-MCFLs, which is equivalent to that of EDT0L systems of finite index.  Furthermore, as additional evidence for the robustness of the condition, we show that the language class of closed-star rewbs also falls inside the class of nonerasing stack languages (NESL).
\end{abstract}

\section{Introduction}
\label{sec:intro}

\emph{Backreference}, an extension of regular expressions, is widely spread in the practical world.  Most modern regular expression engines, such as those in the standard libraries of Java, Python, JavaScript and more, support the backreference extension.  A difficulty of backreference is that it substantially increases the expressive power of regular expressions.  Let $\alpha$ be the rewb $(_1 (a+b)^\ast )_1\,\bs 1$ over the alphabet $\syuugou{a,b}$, for instance.  The language $L(\alpha)$ of $\alpha$ is the copy language $\syuugou[ww]{w \in \syuugou{a,b}^\ast}$, a typical example of a non-context-free language.  Intuitively, the subexpression $\bs 1$ matches only the string matched by $(a+b)^\ast$, the subexpression in the preceding $1$-indexed capturing group.  We rephrase this behavior as follows: $\bs 1$ \emph{backreferences} the string captured in $(_1 (a+b)^\ast )_1$.  In this manner, backreference can assert equality constraints on the substrings of a matching string.  
Thus, we reach the following natural question: how powerful is backreference?  In prior study, C\^{a}mpeanu et al.~\cite{campeanu2003formal}, and Berglund and van der Merwe~\cite{berglund2023re} showed that the language class of rewbs is a proper subclass of context-sensitive languages (CSL), but is incomparable with the class of context-free languages (CFL).  Our recent study~\cite{mfcs2023} showed that the language class of rewbs is a proper subclass of indexed languages (IL) but not a subclass of stack languages (SL), which was proposed by Ginsburg et al.~\cite{ginsburg1967stack, ginsburg1967one} and properly sits between the classes of CFLs and ILs~\cite{aho1969nested, greibach1969checking}.  Note that Aho, who invented ILs, has shown that $\text{IL} \subsetneq \text{CSL}$~\cite{aho1968indexed}.

We investigate the expressive power of rewbs relative to those of multiple context-free languages (MCFL) and parallel multiple context-free languages (PMCFL) proposed by Seki et al.~\cite{seki1991multiple} as the language classes of multiple context-free grammars (MCFG) and parallel multiple context-free grammars (PMCFG), respectively.
As for the expressive powers of MCFG and PMCFG, Seki et al.~\cite{seki1991multiple} showed the inclusions $\text{CFL} \subsetneq \text{MCFL} \subsetneq \text{PMCFL} \subsetneq \text{CSL}$.  Moreover, Nishida and Seki~\cite{nishida2000grouped} showed that a special subclass of PMCFLs, unary-PMCFLs, is equivalent to the language class of EDT0L systems.\footnote{See \cite{rozenberg1973extension,engelfriet1978tree} for the definition of EDT0L systems.}

We first show that the class of unary-PMCFLs, and hence that of EDT0L languages, contains the language class of rewbs by constructing a unary-PMCFG $G_{\alpha}$ equivalent to an arbitrary given rewb $\alpha$.
Since Ehrenfeucht et al.~\cite{ehrenfeucht1976relationship} have shown that the class of EDT0L languages is properly contained in that of ILs, this result strictly improves the best known upper bound of rewbs.\footnote{Actually, one can show from the results of \cite{schmid2013inside,uezato2024regular} that NL, the class of languages that are decidable by nondeterministic log-space Turing machines, also contains the language class of rewbs.  As noted in~\cite{uezato2024regular}, (1) $\text{NL} \nsubseteq \text{IL}$ and (2) $\text{IL} \nsubseteq \text{NL}$ unless $\text{NL} = \text{NP}$. Our result is strictly better than both IL and NL because of the fact (1) above and the fact that $\text{EDT0L} \subseteq \text{NL}$ as shown by Jones and Skyum~\cite{jones1976recognition}.}
Additionally, as the class of EDT0L languages is incomparable with that of CFLs~\cite{ehrenfeucht1977some}, we obtain the first upper bound of rewbs not containing CFLs.

Notable in our construction is that the parallelism of PMCFG plays a vital role.  In fact, we show a simple rewb with only one capturing group and no captured references\footnote{A \emph{captured reference} is, as its name would suggest, a reference $\bs i$ inside a capturing group $(_j\;)_j$~\cite{mfcs2023}.} that describes a non-MCFL using a pumping lemma for MCFL.  Therefore, the language class of rewbs and the class of MCFLs are incomparable.
The fact that this rewb has no captured reference is significant because, as shown by \cite{mfcs2023}, the language class of rewbs without captured references is contained in that of nonerasing stack languages (NESL), also proposed in the original paper of SL~\cite{ginsburg1967stack} and known as a proper subclass of SLs~\cite{ogden1969intercalation,greibach1969checking}.
Furthermore, motivated by these results, we propose a novel syntactic variant of rewbs that we call \emph{closed-star} rewbs.  Intuitively, being closed-star means that no looping subexpression can backreference strings captured outside the loop.  For example, $((_1 (a+b)^\ast )_1\, c\,\bs 1)^\ast$ is closed-star, whereas a simple example of a non-closed-star rewb is $(_1 a^\ast )_1 (c \bs 1)^\ast$ which also happens to describe a non-MCFL (cf.~Theorem~\ref{thm:notmcfg}).  As we shall show, a closed-star rewb has the property that we can compute an upper bound on the number of times the rewb references the same captured string.  We use this property of the closed-star condition to prove that every closed-star rewb describes a unary-MCFL. 
Note that Kanazawa and Salvati have shown that the language class of unary-MCFGs is equivalent to that of EDT0L systems of finite index~\cite{kanazawa2010copying}.
Moreover, we also show that every closed-star rewb describes an NESL.  The two inclusions show the robustness of the closed-star condition.
Below, we list the main contributions of the paper:
\begin{alphaenumerate}
    \item Every rewb describes a unary-PMCFL (Theorem~\ref{thm:rewbpmcfg}).
    \item There exists a rewb with only one capturing group and no captured references that describes a non-MCFL (Theorem~\ref{thm:notmcfg}).
    \item A novel syntactic variant of closed-star rewbs (Definition~\ref{def:sc}).
    \item Every closed-star rewb describes a unary-MCFL (Theorem~\ref{thm:rewbscmcfg}).
    \item Every closed-star rewb describes an NESL (Theorem~\ref{thm:rewbscisnesa}).
\end{alphaenumerate}
Additionally, it follows by (b) and (d) that the language class of closed-star rewbs is a proper subclass of rewbs (Corollary~\ref{cor:scsmaller}). 
As remarked above, we have recently shown that the language class of rewbs is not contained in SL and that of rewbs with no captured references is a proper subclass of NESLs~\cite{mfcs2023}.  From these results, it follows by (a) and (b) that unary-PMCFL is not contained in SL and NESL is not contained in MCFL, respectively (Corollary~\ref{cor:pmcfgsaandmcfgnesa}).
The rest of the paper is organized as follows: Section~\ref{sec:prelim} presents formal definitions of rewbs, MCFLs and PMCFLs.  In Sections~\ref{sec:rewbpmcfg}, \ref{sec:rewbscmcfg} and \ref{sec:rewbscnesa}, we prove the main contributions listed above. Section~\ref{sec:related} discusses related work.  Finally, Section~\ref{sec:conc} concludes the paper.
Omitted proofs are in the appendix.

\section{Preliminaries}
\label{sec:prelim}

We first formalize the syntax and semantics of rewbs following the prior study~\cite{schmid2016characterising,freydenberger2019deterministic,mfcs2023}.
Let $\Sigma$ denote an alphabet and $\mynat$ denote the set of all positive integers.
\begin{definition2} \label{def:rewb}
    Define the set of \emph{rewbs} by the following grammar:
    \ifcondensed
        $\alpha ::= a \in \Sigma \mid \varepsilon \mid \bs i (i \in \mynat) \mid \alpha_1 \alpha_2 \mid \alpha_1 + \alpha_2 \mid \alpha^\ast \mid (_i \alpha )_i$
    \else
        \[
            \alpha ::= a \mid \varepsilon \mid \bs i \mid \alpha_1 \alpha_2 \mid \alpha_1 + \alpha_2 \mid \alpha^\ast \mid (_i \alpha )_i
        \]
    \fi
    where, as for the last rule, $\alpha$ appearing in $(_i \alpha )_i$ can have neither the same indexed capturing group $(_i \; )_i$ nor reference $\bs i$ as a subexpression.
\end{definition2}

For example, none of $(_1 a^\ast \bs 1 )_1$, $(_1 (_1 a^\ast )_1 )_1$ and $(_1 (_2 (_1 a^\ast )_1 )_2 )_1$ are a valid expression.  
Note that we adopt the ``may repeat labels'' convention~\cite{berglund2023re}, i.e., the same indexed capturing group can appear more than once in an expression, like $(_1 a^\ast )_1 (\bs 1 (_1 b^\ast )_1 ) ^ \ast$.

Hereafter, we define the \emph{ref-language} of a rewb and the \emph{dereferencing function} for defining the formal semantics.  Roughly, the semantics first regards a rewb over $\Sigma$ as a pure regular expression over the extended alphabet $\Sigma \mydisjointu \brkset \mydisjointu \mynat$, where $\brkset = \syuugou[\lbrack_i, \rbrack_i]{i \in \mynat}$ and $\mydisjointu$ is a disjoint union, whose (regular) language is called the ref-language of the rewb.  Then, the semantics defines the language of the rewb as the image of its ref-language under the dereferencing function.
\begin{definition2}
    Let $\alpha$ be a rewb.  Define the \emph{ref-language} $\reflang(\alpha)$ of $\alpha$ inductively as follows:
    \ifcondensed
        $\reflang(a) = \syuugou{a}, \reflang(\varepsilon) = \syuugou{\varepsilon}, \reflang(\bs i) = \syuugou{i}, \reflang(\alpha_1 \alpha_2) = \reflang(\alpha_1) \reflang(\alpha_2), \reflang(\alpha_1 + \alpha_2) = \reflang(\alpha_1) \cup \reflang(\alpha_2), \reflang(\alpha^\ast) = \reflang(\alpha)^\ast$ and $\reflang((_i \alpha )_i) = \syuugou{\lbrack_i}\reflang(\alpha)\syuugou{\rbrack_i}$
    \else
        \begin{gather*}
            \reflang(a) = \syuugou{a}, \reflang(\varepsilon) = \syuugou{\varepsilon}, \reflang(\bs i) = \syuugou{i}, \reflang(\alpha_1 \alpha_2) = \reflang(\alpha_1) \reflang(\alpha_2), \\
            \reflang(\alpha_1 + \alpha_2) = \reflang(\alpha_1) \cup \reflang(\alpha_2), \reflang(\alpha^\ast) = \reflang(\alpha)^\ast, \reflang((_i \alpha )_i) = \syuugou{\lbrack_i}\reflang(\alpha)\syuugou{\rbrack_i}
        \end{gather*}
    \fi
    where $a \in \Sigma$ and $i \in \mynat$.
\end{definition2}

Next, we explain the dereferencing function $\deref$.  In essence, for each number character appearing in the argument string, $\deref$ sequentially replaces the number character with the string bracketed by the same number.  Let us give examples illustrating how $\deref$ works.
\begin{enumerate}
    \item $\lbrack_1 a \lbrack_2 b \rbrack_2\,2 \rbrack_1\,1$.  $\deref$ first encounters $2$, then goes back to fetch the $2$-bracketed string $b$ from the prefix string and replaces the $2$ with $b$.  Thus, we have the temporary string $\lbrack_1 a \lbrack_2 b \rbrack_2 b \rbrack_1 1$.  $\deref$ again processes the same replacement on this string.  Namely, $\deref$ encounters $1$, fetches $abb$ (but not $a \lbrack_2 b \rbrack_2 b$) and replaces the $1$ with $abb$.  Note that $\deref$ fetches the bracketed string but with inner bracket symbols omitted.  Now, we have $\lbrack_1 a \lbrack_2 b \rbrack_2 b \rbrack_1 a b b$ and this string has no number character.  Therefore, the value of $\deref$ is $abbabb$ by omitting bracket symbols.  Here is the diagram: $\lbrack_1 a\,\underline{\lbrack_2 b \rbrack_2}\, 2\,\rbrack_1\,1 \to \underline{\lbrack_1 a\,\lbrack_2 b \rbrack_2\, b\,\rbrack_1}\,1 \to \lbrack_1 a\,\lbrack_2 b \rbrack_2\, b\,\rbrack_1\,abb \to abbabb$.
    \item $\lbrack_1 a \rbrack_1\,1 \lbrack_1 bb \rbrack_1\,1$.  The diagram is $\underline{\lbrack_1 a \rbrack_1}\,1\,\lbrack_1 bb \rbrack_1\,1 \to \lbrack_1 a \rbrack_1\,a\,\underline{\lbrack_1 bb \rbrack_1}\,1 \to \lbrack_1 a \rbrack_1\,a\,\lbrack_1 bb \rbrack_1\,bb \to aabbbb$.  In the second step, $\deref$ fetches the rightmost bracketed string $bb$ for a number character if the same numbered brackets appear in multiple positions before the number character.
    \item $abc\,1\,2$.  The diagram is $abc\,1\,2 \to abc\, 2 \to abc$.  Namely, $\deref$ fetches $\varepsilon$ for a number character if there is no corresponding brackets before the number character.
\end{enumerate}

Formally, we define $\deref$ as follows.  For a character $c$ and a string $s$, we write $c \in s$ to mean that $c$ appears in $s$.
\begin{definition2} \label{def:deref}
    Let $i \in \mynat$. 
    Let $g: (\Sigma \mydisjointu \brkset)^\ast \to \Sigma^\ast$ denote the free monoid homomorphism where $g(a) = a$ for each $a \in \Sigma$ and $g(b) = \varepsilon$ for each $b \in \brkset$.
    \begin{alphaenumerate}
        \item Define the \emph{fetching function} $\myfet_i: (\Sigma \mydisjointu \brkset)^\ast \to \Sigma^\ast$ that maps $v$ as follows:
            (1) if $v$ decomposes as 
            $x_1 \lbrack_i x_2$ ($\lbrack_i \notin x_2$) and $x_2 = y_1 \rbrack_i y_2$ ($\rbrack_i \notin y_1$)
            , then $g(y_1)$, (2) if $v$ decomposes as $x_1 \lbrack_i x_2$ ($\lbrack_i \notin x_2$) and $\rbrack_i \notin x_2$, then $g(x_2)$, and (3) if $\lbrack_i \notin v$, then $\varepsilon$.\footnote{In fact, case (2) never happens when the argument string is from a ref-language.}
        \item Define the \emph{pre-dereferencing function} $\deref^{\circ}: (\Sigma \mydisjointu \brkset \mydisjointu \mynat)^\ast \to (\Sigma \mydisjointu \brkset)^\ast$ as follows: $\deref^{\circ}(\varepsilon) = \varepsilon$, $\deref^{\circ}(va) = \deref^{\circ}(v)a$, $\deref^{\circ}(vb) = \deref^{\circ}(v)b$ and $\deref^{\circ}(vi) = \deref^{\circ}(v) \myfet_{i}(\deref^{\circ}(v))$ where $a \in \Sigma$ and $b \in \brkset$. 
        \item Define the \emph{bracketed-string-projection function} $\mymem_i$ as $\myfet_{i} \cdot \deref^{\circ}$ where the interpunct $\cdot$ is the function composition.  Hence, the last rule above can be written as $\deref^{\circ}(vi) = \deref^{\circ}(v) \mymem_i(v)$.
        \item Define the \emph{dereferencing function} $\deref$ as $g \cdot \deref^{\circ}$.
    \end{alphaenumerate}
\end{definition2}

Note that $\mymem_i(v)$ represents the rightmost $i$-bracketed or semi-$i$-bracketed string (i.e., that without the closing bracket, e.g., $\mymem_1(\lbrack_1 ab) = ab$) as of when $\deref$ processed $v$.  Later, we present an alternative inductive characterization of $\mymem_i$ in Proposition~\ref{prop:memind}.
Finally, the language $L(\alpha)$ of a rewb $\alpha$ is defined as $\deref(\reflang(\alpha))$.  We give an example of calculating a rewb language.

\begin{example}
  \label{ex:copylang}
    Recall the rewb $\alpha = (_1 (a+b)^\ast )_1\,\bs 1$ from the introduction.  The ref-language $\reflang(\alpha)$ is $\syuugou[\lbrack_1 w \rbrack_1\,1]{w \in \syuugou{a,b}^\ast}$.  For each $\lbrack_1 w \rbrack_1\,1 \in \reflang(\alpha)$, $\deref$ processes it as $\underline{\lbrack_1 w \rbrack_1}\,1 \to \lbrack_1 w \rbrack_1 w \to ww$.  Therefore, we conclude $L(\alpha) = \deref(\reflang(\alpha)) = \syuugou[ww]{w \in \syuugou{a,b}^\ast}$, that is, the copy language. 
\end{example}

Next, we recall MCFG and PMCFG~\cite{seki1991multiple}.  
The most outstanding feature of the grammars is that a nonterminal can derive a tuple of strings instead of just a single string.  Each production rule is of the form $A \to f[A_1, \dots, A_l]$ and means that the nonterminal $A$ on the left-hand side can derive string tuples formed by mapping $l$ string tuples, each derived from the nonterminals $A_1, \dots, A_l$ on the right-hand side, with the string concatenation function $f: (\Sigma^\ast)^{d(A_1)} \times \cdots \times (\Sigma^\ast)^{d(A_l)} \to (\Sigma^\ast)^{d(A)}$.  Here, the dimension of tuples derived from $A$ is unified with the designated constant $d(A)$.  In particular, $d(S) = 1$ is required to finally derive strings but not string tuples.\footnote{We identify a string with a unidimensional tuple. \label{fn:unitplisstr}}
MCFGs lack the parallelism of PMCFGs in the sense that none of its string concatenation functions can use the same argument string more than once.
The formal syntax and semantics are as follows:
\begin{definition2}
    Let $m \in \mynat$.  An \emph{$m$-parallel multiple context-free grammar} ($m$-PMCFG) $G$ is a 5-tuple $(\mynonterminals,\Sigma,\mathcal{F},P,S)$, where $\mynonterminals$ is a finite set of nonterminals, $\Sigma$ is an alphabet, $\mathcal{F}$ is a finite set of functions, $P$ is a finite set of rules, and $S \in \mynonterminals$ is an initial nonterminal.
    Additionally, $G$ must satisfy the following conditions:
    \begin{itemize}
        \item A function $f \in \mathcal{F}$ has the type $(\Sigma^\ast)^{d_1} \times \dots \times (\Sigma^\ast)^{d_l} \to (\Sigma^\ast)^{d}$ for some integers $l \geq 0$, $d_i \geq 1 (1 \leq i \leq l)$ and $d \geq 1$ that depend on $f$.
            Note that we allow $l = 0$ (i.e., $f$ can be nullary) and we identify a nullary $f$ with a tuple of dimension $d$.
            Moreover, $f$ maps
            $x_{i,j}$'s ($1 \leq i \leq l$ and $1 \leq j \leq d_i$) into $(c_{k,0} x'_{k,1} c_{k,1} \cdots x'_{k,h_k} c_{k,h_k})$ ($1 \leq k \leq d$), where (1) $h_k$ is some integer depending on $f$, and (2) $x'_{k,r}$ is some $x_{i,j}$ and $c_{k,r} \in \Sigma^\ast$ for $0 \leq r \leq h_k$.
        \item A nonterminal $A \in \mynonterminals$ has an integer $1 \leq d(A) \leq m$ called the \emph{dimension} of $A$.  In particular, $d(S) = 1$.  
        \item Let $f \in \mathcal{F}$ and fix $l$, $d_1, \dots, d_l$ and $d$ as mentioned above.              
            %
            %
            A rule is of the form $A \to f[A_1, \dots, A_l]$ where $d = d(A)$ and $d_i = d(A_i)$ for each $1 \leq i \leq l$.  
    \end{itemize}
    In particular, an \emph{$m$-multiple context-free grammar} ($m$-MCFG) is an $m$-PMCFG with the restriction that no $f \in \mathcal{F}$ can use the same $x_{i,j}$ more than once (i.e.,~no functions are parallel).
    An $m$-PMCFG (resp.~$m$-MCFG) is called \emph{unary} if every function is nullary or unary (i.e.,~$l \leq 1$)~\cite{nishida2000grouped}.
\end{definition2}
\begin{definition2}
    Let $m \in \mynat$ and $G$ be an $m$-PMCFG or $m$-MCFG.  For each nonterminal $A$, define the set $L_G(A)$ of all tuples derived from $A$ as the smallest set satisfying the following condition:
    $f(\xi_1, \dots, \xi_l) \in L_G(A)$ for each rule $A \to f[A_1,\dots,A_l]$ and for every $\xi_i \in L_G(A_i)$ ($1 \leq i \leq l$).
    Define the \emph{language} $L(G)$ of $G$ as $L_G(S)$, where $S$ is the initial nonterminal of $G$.  
    The language classes of $m$-PMCFGs and $m$-MCFGs are called \emph{$m$-parallel multiple context-free languages} ($m$-PMCFL) and \emph{$m$-multiple context-free languages} ($m$-MCFL), respectively.
\end{definition2}

We denote the union of all $m$-PMCFGs over all $m \in \mynat$ simply as PMCFGs, and similarly for MCFGs, PMCFLs and MCFLs.  Because we only need unary PMCFGs and MCFGs except in Theorem~\ref{thm:notmcfg}, we introduce following short-hand notations.  We use $\lambda$-notation to denote a unary function $f: (\Sigma^\ast)^{d_1} \to (\Sigma^\ast)^d$ as $\lambda (x_1, \dots, x_{d_1}). (y_1, \dots, y_{d})$.  For example, $\lambda (x,y). xy$ is a unary (but not binary) function with $l = 1, d_1 = 2$ and $d = 1$ (recall footnote~\ref{fn:unitplisstr}).  Also, we simply write $A \to f\,A_1$ for a rule $A \to f[A_1]$ and a unary function $f$.
We borrow examples of a PMCFG and an MCFG from~\cite{seki1991multiple}.
\begin{example}    
    Let $\Sigma$ denote $\syuugou{a,b}$ and $G = (\syuugou{S,A},\Sigma,\mathcal{F},P,S)$ be a $1$-PMCFG, where $\mathcal{F}$ consists of $\mycp = \lambda x. xx$, $\myapp_{a} = \lambda x. xa$ and $\myapp_{b} = \lambda x. xb$, and $P = \{\, S \to \mycp\,A, A \to \myapp_{a}\,A \mid \myapp_{b}\,A \mid \varepsilon \,\}$.\footnote{$A \to \gamma_1 \mid \gamma_2 \mid \cdots$ is a shorthand for $A \to \gamma_1, A \to \gamma_2, \dots$.}
    %
    Note that $\mycp$ is parallel.
    $G$ derives $abab$ by $S \to \mycp\,A \to \mycp\cdot\myapp_{b}\,A \to \mycp\cdot\myapp_{b}\cdot\myapp_{a}\,A \to \mycp\cdot\myapp_{b}\cdot\myapp_{a}\,\varepsilon = \mycp\cdot\myapp_{b}\,a = \mycp\,(ab)  = abab$.  We can easily check that $L_G(A) = \syuugou{a,b}^\ast$ and $L(G) = \syuugou[ww]{w \in \syuugou{a,b}^\ast}$, that is, the copy language from the introduction and Example~\ref{ex:copylang}.
    
    Because $1$-MCFL coincides with CFL~\cite{seki1991multiple}, no $1$-MCFG can describe $L(G)$. However, a $2$-MCFG can.  Let $G' = (\syuugou{S,A},\Sigma,\mathcal{F}',P',S)$ be a $2$-MCFG, where $\mathcal{F}'$ consists of $\mycat = \lambda (x,y). xy$, $\myapp'_{a} = \lambda (x,y). (xa,ya)$ and $\myapp'_{b} = \lambda (x,y). (xb,yb)$, and $P' = \{\,S \to \mycat\,A, A \to \myapp'_{a}\,A \mid \myapp'_{b}\,A \mid (\varepsilon,\varepsilon)\,\}$.
    %
    Then, $L_{G'}(A) = \syuugou[(w,w)]{w \in \syuugou{a,b}^\ast}$ and $L(G') = L(G)$.
    For example, $G'$ derives $abab$ by $S \to \mycat\,A \to \mycat\cdot\myapp'_{b}\,A \to \mycat\cdot\myapp'_{b}\cdot\myapp'_{a}\,A \to \mycat\cdot\myapp'_{b}\cdot \myapp'_{a}\,(\varepsilon,\varepsilon) = \mycat\,(ab,ab) = abab$.  
\end{example}

\section{Every rewb describes a unary-PMCFL but not necessarily an MCFL}
\label{sec:rewbpmcfg}

First, we prove the following theorem. As remarked in the introduction, together with the results of \cite{nishida2000grouped,ehrenfeucht1976relationship}, this strictly improves the best known upper bound of rewbs~\cite{mfcs2023}.
\begin{restatable}{theorem}{thmrewbpmcfg}
  \label{thm:rewbpmcfg}
    Every rewb describes a unary-PMCFL.
\end{restatable}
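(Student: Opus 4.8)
The plan is to fix a rewb $\alpha$, let $I = \{1, \dots, k\}$ be the finite set of capturing-group indices occurring in $\alpha$, and build a single unary-PMCFG $G_\alpha$ that simulates the left-to-right processing performed by $\deref$ on the strings of the regular language $\reflang(\alpha)$ over the extended alphabet $\Sigma \mydisjointu \brkset \mydisjointu \mynat$. The central idea is a ``register machine'': $G_\alpha$ threads one tuple of dimension $k+1$ whose components are the dereferenced output produced so far together with, for each index $i \in I$, the current value of $\mymem_i$ (the rightmost $i$-bracketed or semi-$i$-bracketed content). Because a unary-PMCFG derivation is just a sequence of nullary/unary function applications to a single growing tuple, this matches exactly the sequential, prefix-driven shape of $\deref^{\circ}$ and $\mymem_i$ in Definition~\ref{def:deref}.

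First I would take an NFA $M$ for $\reflang(\alpha)$ in which every state $q$ carries a well-defined set $O(q) \subseteq I$ of currently open indices. This is obtainable by a product with a bracket-tracking component: every string of $\reflang(\alpha)$ has properly nested, balanced brackets, and the syntactic restriction in Definition~\ref{def:rewb} forbids an index from being nested inside itself, so each index is open at most once at any position and the open set is a finite, deterministically tracked subset of $I$. I would then define $G_\alpha$ with one nonterminal $N_q$ of dimension $k+1$ per state $q$, plus the initial $S$ with $d(S)=1$. The rules are a nullary rule $N_{q_0} \to (\varepsilon, \dots, \varepsilon)$ for the start state; for each transition $q \xrightarrow{x} q'$ a unary rule $N_{q'} \to f_x\,N_q$; and, for each accepting state $q_f$, a projection rule $S \to (\lambda(w, c_1, \dots, c_k).\,w)\,N_{q_f}$.

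The transition functions $f_x$ encode the $\deref$ semantics relative to the open set $O$ at the transition: reading $a \in \Sigma$ appends $a$ to $w$ and to every open $c_i$; reading $\lbrack_i$ resets $c_i$ to $\varepsilon$ (opening a fresh $i$-region); reading $\rbrack_i$ is the identity (freezing $c_i$); and reading a number $r$ appends $c_r$ to $w$ and to every open $c_i$, i.e.\ $f_r = \lambda(w, c_1, \dots, c_k).\,(w\,c_r, \dots)$. The last function is where parallelism becomes unavoidable: to append $c_r$ to the output while also retaining $c_r$ in its own register, $f_r$ must mention $c_r$ at least twice, which no MCFG function may do. Since every function is nullary or unary, $G_\alpha$ is a unary-PMCFG by construction.

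The main work is the correctness lemma, proved by induction: for every state $q$, $L_{G_\alpha}(N_q)$ is exactly the set of tuples $(\deref(u);\, \mymem_1(u), \dots, \mymem_k(u))$ as $u$ ranges over strings labelling a run from $q_0$ to $q$ in $M$. The forward inclusion follows the derivation and the backward one inducts on $|u|$, both hinging on matching each $f_x$ to the defining clauses of $\deref^{\circ}$ and to the inductive characterization of $\mymem_i$ (Proposition~\ref{prop:memind}); the delicate points are the nested case, where a character or a referenced block is appended simultaneously to all open registers, and the invariant that $c_i = \mymem_i(u)$ exactly when $i \in O(q)$ records the open indices. Applying this at accepting states and projecting onto $w$ yields $L(G_\alpha) = \{\,\deref(u) : u \in \reflang(\alpha)\,\} = L(\alpha)$. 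I expect this bookkeeping---keeping the register invariant synchronized with $\deref^{\circ}$ and $\mymem_i$ through the opening, closing and nesting of brackets---to be the principal obstacle, whereas the grammar construction itself is routine.
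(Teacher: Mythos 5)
Your proposal is correct and is essentially the paper's own construction: a nonterminal per NFA state threading a $(k+1)$-tuple of the output string plus one register per index, with input/reset/identity/paste functions annotated by the open-index set, the paste function being the unavoidably parallel one, and correctness established by the same invariant $(\deref(u),\mymem_1(u),\dots,\mymem_k(u))$ via the inductive characterization of $\mymem_i$ (Proposition~\ref{prop:memind}, Lemma~\ref{lem:base}). The only cosmetic difference is that you obtain well-defined open sets by a product construction on the NFA, whereas the paper imposes mild conditions on $\myautomaton_\alpha$ and proves in Lemma~\ref{lem:om} that $\myopen$ is already determined by the state.
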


We informally explain the proof.  For an arbitrary given rewb $\alpha$ using number characters each of whose value is bounded by $\kbound$, we construct a $(\kbound+1)$-unary-PMCFG $G_{\alpha}$ as follows.  In a $\kbound+1$ dimension tuple $(x,y_1,\dots,y_\kbound)$, we use the first component $x$ to store the generating string and the other $\kbound$ components $y_1, \dots, y_\kbound$ to keep the contents of $\kbound$ \emph{memory cells}.
A memory cell $k$ is a holder for the dereferencing process to place the string most recently captured by number character $k$.\footnote{We adopt the terminology from a class of automata equivalent to rewbs called memory automata~\cite{schmid2016characterising}, which are finite automata augmented with \emph{memories} that store the most recently captured string for each capturing group index.}
Let us fix an NFA $\myautomaton_{\alpha}$ equivalent to the ref-language $\reflang(\alpha)$.
Each nonterminal $A_i$ in $G_{\alpha}$ corresponds to a state $q_i$ in $\myautomaton_{\alpha}$ and each rule corresponds to a transition in $\myautomaton_{\alpha}$.
%
%
%
$G_{\alpha}$ guesses a run of $\myautomaton_\alpha$ accepting, say $v \in L(\myautomaton_\alpha) = \reflang(\alpha)$, backwardly while changing $A_j$ to $A_i$ but with the function corresponding to $c$ applied for each transition $q_i \myto{c}{} q_j$.
Some functions have a superscript $M \subseteq \{ 1, \dots, \kappa \}$ that denotes the set of open memory cells, where an open memory cell is, roughly, one whose content is currently being defined.
If the character being scanned is $a \in \Sigma$, then $G_{\alpha}$ applies $\symi_{a}^{M}$ (\emph{input}) which appends $a$ to $x$ and $y_k$ for all $k \in M$.  
If it is $i \in \mynat$, then $G_{\alpha}$ applies $\symp_{i}^{M}$ (\emph{paste}) which appends $y_i$ to $x$ and $y_k$ for all $k \in M$.
If it is $\lbrack_i$, then $G_{\alpha}$ applies $\symr_{i}$ (\emph{reset}) which empties $y_i$.  
If it is $\rbrack_i$, then $G_{\alpha}$ applies $\symid$ which is the identity function.
$G_{\alpha}$ ends the scan by applying a nullary function of the form $(\varepsilon,\varepsilon,\dots,\varepsilon)$.  
Therefore, the sequence of (non-nullary) functions are applied to $(\varepsilon,\varepsilon,\dots,\varepsilon)$ in the same order as $v$ to return a tuple whose first component is $\deref(v)$, the dereferenced string of $v$.
We call this part the \emph{functional dereferencing phase}.
Finally, $G_{\alpha}$ extracts this $\deref(v)$ by $\symo$ (\emph{output}) applied at the beginning of the derivation.
We give an example of the proof construction.  The formal proof is coming later.
\begin{example}
    \label{ex:pmcfgproof}
    Let $\Sigma = \syuugou{a,c}$, $\alpha = (_1 a^\ast )_1\,(c \bs 1)^\ast$ and $\kbound = 1$.  We present how to construct the $2$-unary-PMCFG $G_\alpha$. 
    The following diagram shows an NFA $\myautomaton_\alpha$ that recognizes $\reflang(\alpha)$:
    \begin{figure}[htb]
        \centering
        \begin{tikzpicture}[
                shorten >=1pt,
                node distance=2.25cm,
                on grid,
                >={Stealth[round]},
                auto
            ]
            \tikzset{el/.style = {inner sep=2pt, align=left, sloped}}
            \tikzset{every node/.style={font=\small}}
            \tikzset{every loop/.style={min distance=6mm,looseness=8}}
            \tikzset{every state/.style={inner sep=.1mm, minimum size=5mm}}
            \tikzset{initial text=}
            \tikzset{every initial by arrow/.style={->}}
            \tikzset{bend angle=30}

            \node[state,initial] (0) at (0,-3) {$0$};
            \node[state] (1) at (1.5,-3) {$1$};
            \node[state,accepting] (2) at (3,-3) {$2$};
            \node[state] (3) at (4.5,-3) {$3$};
            
            \path[->] (0) edge node {\footnotesize
                            $\lbrack_1$
                        } (1);
            \path[->] (1) edge [in=10,out=170,loop above] node {\footnotesize
                            $a$
                        } ();
            \path[->] (1) edge node {\footnotesize
                            $\rbrack_1$
                        } (2);
            \path[->] (2) edge [bend left=24] node {\footnotesize
                            $c$
                        } (3);
            \path[->] (3) edge [bend left=24] node {\footnotesize
                            $1$
                        } (2);
        \end{tikzpicture}
    \end{figure}

    Let $G_\alpha$ be the $2$-unary-PMCFG $(\syuugou{S, A_0, A_1, A_2, A_3}, \Sigma, \mathcal{F}, P, S)$, where $\mathcal{F}$ consists of $\symo = \lambda (x,y_1). x$, $\symi_{a}^{\syuugou{1}} = \lambda (x,y_1). (xa,y_1a)$, $\symi_{c}^{\syuugou{}} = \lambda (x,y_1). (xc,y_1)$, $\symp_{1}^{\syuugou{}} = \lambda (x,y_1). (x y_1, y_1)$, $\symr_{1} = \lambda (x,y). (x,\varepsilon)$ and $\symid = \lambda (x,y). (x,y)$, and $P$ consists of $S \to \symo A_2$, $A_3 \to \symi_{c}^{\syuugou{}} A_2$, $A_2 \to \symp_{1}^{\syuugou{}} A_3 \!\mid\! \symid A_1$, $A_1 \to \symi_{a}^{\syuugou{1}} A_1 \!\mid\! \symr_{1} A_0$ and $A_0 \to (\varepsilon,\varepsilon)$.
    %
    For instance, $G_{\alpha}$ derives $aca$ by 
    $S \to 
    \symo A_2 \to 
    \symo\cdot\symp_{1}^{\syuugou{}} A_3 \to 
    \symo\cdot\symp_{1}^{\syuugou{}}\cdot\symi_{c}^{\syuugou{}} A_2 \to 
    \symo\cdot\symp_{1}^{\syuugou{}}\cdot\symi_{c}^{\syuugou{}}\cdot\symid A_1 \to 
    \symo\cdot\symp_{1}^{\syuugou{}}\cdot\symi_{c}^{\syuugou{}}\cdot\symid\cdot\symi_{a}^{\syuugou{1}} A_1 \to
    \symo\cdot\symp_{1}^{\syuugou{}}\cdot\symi_{c}^{\syuugou{}}\cdot\symid\cdot\symi_{a}^{\syuugou{1}}\cdot\symr_1 A_0 \to
    \symo\cdot\symp_{1}^{\syuugou{}}\cdot\symi_{c}^{\syuugou{}}\cdot\symid\cdot\symi_{a}^{\syuugou{1}}\cdot\symr_1 (\varepsilon, \varepsilon)
     = \symo\cdot\symp_{1}^{\syuugou{}}\cdot\symi_{c}^{\syuugou{}}\cdot\symid\cdot\symi_{a}^{\syuugou{1}} (\varepsilon, \varepsilon)
     = \symo\cdot\symp_{1}^{\syuugou{}}\cdot\symi_{c}^{\syuugou{}}\cdot\symid (a, a)
     = \symo\cdot\symp_{1}^{\syuugou{}}\cdot\symi_{c}^{\syuugou{}} (a, a)
     = \symo\cdot\symp_{1}^{\syuugou{}} (ac, a)
     = \symo (aca, a)
     = aca$.
    The last equality chain is the functional dereferencing phase.
\end{example}

Later in this section, we shall show that $L(\alpha)$ in this example is not an MCFL.
Hence, the parallelism of $\symp_{1}^{\syuugou{}} = \lambda (x,y_1). (x y_1, y_1)$ is indispensable for the unary-PMCFG $G_{\alpha}$.
We introduce a notation for calculating open memory cells.
\begin{definition2} The mapping $\myopen: (\Sigma \mydisjointu \brkset \mydisjointu \mynat)^\ast \to \powerset(\mynat)$ is defined inductively as follows: $\myopen(\varepsilon) = \emptyset$, $\myopen(va) = \myopen(vi) = \myopen(v)$, $\myopen(v\lbrack_i) = \myopen(v) \cup \syuugou{i}$ and $\myopen(v\rbrack_i) = \myopen(v) \backslash \syuugou{i}$, where $a \in \Sigma$ and $i \in \mynat$.
\end{definition2}

For example, $\myopen(\lbrack_1 ab \lbrack_2 c) = \syuugou{1,2}$ and $\myopen(\lbrack_1 ab \rbrack_1 \lbrack_2 c) = \syuugou{2}$.  
Recall that $\mymem_j(v) = \myfet_{j} \cdot \deref^{\circ}$ represents the $j$-bracketed or semi-$j$-bracketed string in the dereference of $v$ (cf.~Definition~\ref{def:deref}).  For example, $\mymem_1(\lbrack_1 ab \rbrack_1 c) = ab$, $\mymem_1(\lbrack_1 ab ) = ab$, $\mymem_1(\lbrack_1 ab \rbrack_1 \lbrack_1) = \varepsilon$ and $\mymem_2(\lbrack_1 ab \rbrack_1 \lbrack_2 1 c) = abc$.
We present an inductive characterization for $\mymem_j(v)$ using $\myopen$.
\begin{restatable}{proposition}{propmemind} \label{prop:memind}
    Let $a \in \Sigma$ and $i,j \in \mynat$.
    Then, $(\mathrm{a})$ $\mymem_j(va) = \mymem_j(v) a$ if $j \in \myopen(v)$, otherwise $\mymem_j(va) = \mymem_j(v)$, $(\mathrm{b})$ $\mymem_j(vi) = \mymem_j(v) \mymem_i(v)$ if $j \in \myopen(v)$, otherwise $\mymem_j(vi) = \mymem_j(v)$, $(\mathrm{c})$ $\mymem_i(v\lbrack_i) = \varepsilon$ and $\mymem_j(v\lbrack_i) = \mymem_j(v)$ for $j \neq i$, and $(\mathrm{d})$ $\mymem_j(v\rbrack_i) = \mymem_j(v)$.
\end{restatable}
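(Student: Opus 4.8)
The plan is to reduce everything to the two ingredients $\mymem_j = \myfet_j \cdot \deref^{\circ}$ and the explicit three-case description of $\myfet_j$ from Definition~\ref{def:deref}. The single fact that bridges these two layers is that dereferencing never introduces new bracket symbols, so that $\deref^{\circ}$ preserves bracket structure. I would therefore first establish the auxiliary identity $\myopen(v) = \myopen(\deref^{\circ}(v))$ for every $v \in (\Sigma \mydisjointu \brkset \mydisjointu \mynat)^\ast$, by a straightforward induction on $|v|$. The cases $va$ and $vb$ (with $a \in \Sigma$ and $b \in \brkset$) are immediate from the matching clauses in the definitions of $\myopen$ and $\deref^{\circ}$, while the case $vi$ (with $i \in \mynat$) uses $\deref^{\circ}(vi) = \deref^{\circ}(v)\,\mymem_i(v)$ together with the key observation that $\mymem_i(v) \in \Sigma^\ast$ contains no bracket symbols, so appending it leaves $\myopen$ unchanged.

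Next I would record the correspondence between $\myopen$ and the case split of $\myfet_j$. For any $w \in (\Sigma \mydisjointu \brkset)^\ast$, membership $j \in \myopen(w)$ is equivalent to the last occurrence of a $j$-bracket in $w$ being an unclosed $\lbrack_j$, which is precisely case (2) in the definition of $\myfet_j$; dually, $j \notin \myopen(w)$ corresponds to case (1) (the last $\lbrack_j$ is followed by some $\rbrack_j$) or case (3) (no $\lbrack_j$ occurs). Combined with the auxiliary identity, this lets me translate each hypothesis ``$j \in \myopen(v)$'' directly into ``case (2) applies to $w := \deref^{\circ}(v)$''.

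With these two preparatory facts in hand, each of (a)--(d) becomes a direct computation. Fixing $v$ and writing $w = \deref^{\circ}(v)$, I unfold $\deref^{\circ}$ on the newly appended symbol and evaluate $\myfet_j$. For (a) I append $a \in \Sigma$ to $w$: if $j \in \myopen(v)$ then case (2) persists and the fetched content grows by $a$ (using $g(a)=a$), giving $\mymem_j(v)\,a$; otherwise cases (1)/(3) persist with unchanged content. For (b) I use $\deref^{\circ}(vi) = w\,\mymem_i(v)$ and the fact that $\mymem_i(v) \in \Sigma^\ast$ is bracket-free, so it behaves exactly like appending a block of $\Sigma$-symbols as in (a), contributing $g(\mymem_i(v)) = \mymem_i(v)$ to the content when $j$ is open and nothing otherwise. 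For (c) and (d) the appended symbol is a bracket: when $j \neq i$ it is neither $\lbrack_j$ nor $\rbrack_j$, so (using that $g$ erases brackets) the applicable case and the fetched string are both untouched; when $j = i$, appending $\lbrack_i$ creates a fresh unclosed bracket with empty content, yielding $\varepsilon$ for (c), and appending $\rbrack_i$ either turns case (2) into case (1) with the very same enclosed content or leaves cases (1)/(3) intact, yielding $\mymem_i(v)$ in both situations for (d).

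The main obstacle I anticipate is not any single calculation but getting the interface right: making sure the set-valued invariant $\myopen$, defined over the full alphabet including number characters, is correctly matched against the purely positional case analysis of $\myfet_j$ over $(\Sigma \mydisjointu \brkset)^\ast$. The delicate point hides in (d) for $j = i$, where the $\myfet_i$ case label changes from (2) to (1) yet the returned string must be shown to stay $\mymem_i(v)$; and throughout, one must keep invoking that dereferencing introduces no new brackets, which is exactly what keeps the bracket-indexed bookkeeping stable.
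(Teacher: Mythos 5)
Your proposal is correct and follows essentially the same route as the paper: both arguments rest on the bridging fact that $\deref^{\circ}$ inserts only bracket-free strings $\mymem_i(v) \in \Sigma^\ast$, so the $j$-bracket structure of $\deref^{\circ}(v)$ mirrors that of $v$, after which each case reduces to evaluating $\myfet_j$ on $\deref^{\circ}(v)$ extended by the appended material (the paper packages this as a single claim via the decomposition $v = x\lbrack_j y \Rightarrow \deref^{\circ}(v) = x'\lbrack_j y'$ from its Claim~\ref{clm:op}, while you make the same fact explicit as the invariance $\myopen(v) = \myopen(\deref^{\circ}(v))$ plus the case-label correspondence, and you handle (b) exactly as the paper does via $\deref^{\circ}(vi) = \deref^{\circ}(v)\,\mymem_i(v)$). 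Your flagged delicate point in (d) for $j = i$, where the $\myfet_i$ case switches from semi-bracketed to bracketed with unchanged content, is likewise resolved the same way in the paper's computation.
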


Next, we formally review NFA.  Here is the standard definition~\cite{hopcroft2001introduction}:
\begin{definition2} \label{def:nfa}
    A \emph{nondeterministic finite automaton} (NFA) is a $5$-tuple $(Q,\Sigma,\delta,q_0,F)$ where $Q$ is a finite set of states, $\Sigma$ is an alphabet, $\delta: Q \times \Sigma \to \powerset(Q)$ is a transition function, $q_0 \in Q$ is a start state and $F \subseteq Q$ is a set of final states.
\end{definition2}

Note that $\varepsilon$-transitions are not allowed.  As standard, we extend $\delta$ to $\hat{\delta}:Q\times \Sigma^\ast \to \powerset(Q)$ where $\hat{\delta}(q,w)$ is the set of all states reachable from $q$ via $w$.  For an NFA $\myautomaton$, we write $q \myto{a}{\myautomaton} q'$ for $q' \in \delta(q,a)$ and $q \mytto{w}{\myautomaton} q'$ for $q' \in \hat{\delta}(q,w)$.  We may omit $\myautomaton$'s when clear.

Let $\nfaalph = \Sigma \mydisjointu \syuugou[\lbrack_i,\rbrack_i,i]{1 \leq i \leq \kbound}$.  It is easy to see that, for any rewb $\alpha$, there is an NFA $\myautomaton_{\alpha} = (Q,\nfaalph,\delta,q_0,F)$ equivalent to $\reflang(\alpha)$ satisfying the following conditions: (a) every state in $\myautomaton_{\alpha}$ can reach some final state and can be reached from $q_0$, and (b) if either $q \myto{\lbrack_i}{} q'$ or $q \myto{\rbrack_i}{} q'$, then there is no other transition from $q$ to $q'$.
%
Let $n = \zettaiti{Q}$ and $Q = \syuugou{q_0,\dots,q_{n-1}}$.
\begin{restatable}{lemma}{lemom} \label{lem:om}
    Let $v, v_1, v_2 \in \nfaalph^\ast$.  Fix $q_j \in Q$. For any $v_1$ and $v_2$ such that $q_0 \mytto{v_1}{} q_j$ and $q_0 \mytto{v_2}{} q_j$, $\myopen(v_1) = \myopen(v_2)$. We define $M_j$ to be the unique set $\myopen(v)$ for some $q_0 \mytto{v}{} q_j$.
\end{restatable}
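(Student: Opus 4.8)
The plan is to reduce the statement to a structural property of the bracketed language $\reflang(\alpha)$, exploiting condition (a) that every state reaches a final state. First I would fix $q_j$ and, using condition (a), pick a \emph{single} suffix $w$ with $q_j \mytto{w}{} q_f$ for some $q_f \in F$. Then for every $v$ with $q_0 \mytto{v}{} q_j$ we have $vw \in L(\myautomaton_\alpha) = \reflang(\alpha)$. Hence it suffices to show that $\myopen(v)$ is determined by $w$ alone, independently of which prefix $v$ leads to $q_j$; applying this to $v_1$ and $v_2$ then gives the claim, and $M_j$ can be defined as this common value.

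Next I would establish that every word of $\reflang(\alpha)$ is \emph{well-bracketed}: each $\lbrack_i$ is matched by a unique later $\rbrack_i$, the matching is properly nested, and no two brackets carrying the same index $i$ are nested. This follows by a routine structural induction on $\alpha$ from the definition of $\reflang$; the clause $\reflang((_i \alpha )_i) = \syuugou{\lbrack_i}\reflang(\alpha)\syuugou{\rbrack_i}$ preserves proper nesting, and the rewb syntactic restriction that the $\alpha$ inside $(_i \alpha )_i$ contains neither a same-indexed capturing group $(_i \ )_i$ nor a reference $\bs i$ guarantees the no-same-index-nesting property. In particular, in any prefix of a word of $\reflang(\alpha)$ at most one $\lbrack_i$ is unmatched, consistent with $\myopen$ being a set rather than a multiset.

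The core step is a combinatorial lemma: for a well-bracketed word $u = vw$, the set $\myopen(v)$ equals $C(w)$, the set of indices $i$ such that $w$ contains some $\rbrack_i$ with no preceding matching $\lbrack_i$ inside $w$ (the indices \emph{closed from outside} by $w$). Since $u$ is fully matched, $\myopen(u) = \emptyset$, so every index in $\myopen(v)$ must be closed somewhere in $w$; by the no-same-index-nesting property its first closing in $w$ cannot be preceded by a reopening of $i$ inside $w$, so it lies in $C(w)$, giving $\myopen(v) \subseteq C(w)$. Conversely, each $\rbrack_i$ witnessing $i \in C(w)$ must, in $u$, match a $\lbrack_i$ occurring in $v$ that is still unmatched at the split point, whence $i \in \myopen(v)$; thus $C(w) \subseteq \myopen(v)$. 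Because $C(w)$ depends only on $w$, we conclude $\myopen(v_1) = C(w) = \myopen(v_2)$.

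The main obstacle is this combinatorial lemma, and specifically the careful use of the no-same-index-nesting property: without it an index could be closed inside $w$ by a bracket also opened inside $w$, which would break the identity $\myopen(v) = C(w)$ in the $\myopen(v)\subseteq C(w)$ direction. I would therefore phrase well-bracketedness precisely enough — for instance via an explicit matching bijection on bracket positions — to make both inclusions rigorous, and then the reduction via the common suffix $w$ is immediate.
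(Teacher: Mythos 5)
Your proof is correct, but it takes a genuinely different route from the paper's. The paper proves Lemma~\ref{lem:om} by backward induction on the (minimal) number of transitions from $q_j$ to a final state: the base case is Corollary~\ref{cor:finalopisempty}, and the inductive step extends $v_1, v_2$ by one transition each to a common successor $q'$, then cancels the effect of the appended character on $\myopen$ using Lemma~\ref{lem:opassert}. Crucially, that step needs NFA condition~(b): it is what guarantees that if $c_1$ is a bracket then $c_2$ is the \emph{same} bracket (and if $c_1 \in \Sigma \mydisjointu \mynat$ then $c_2$ is not a bracket), without which the equality $\myopen(v_1 c_1) = \myopen(v_2 c_2)$ could not be pushed back to $\myopen(v_1) = \myopen(v_2)$. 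You instead fix a \emph{single} accepting suffix $w$ (using only condition~(a)) and prove a purely word-combinatorial identity: for any factorization $vw \in \reflang(\alpha)$, $\myopen(v)$ equals the set of indices $i$ such that $w$ contains a $\rbrack_i$ unmatched within $w$. Your supporting structural facts are sound and essentially already in the paper: the per-index alternation of $\lbrack_i$ and $\rbrack_i$ is Proposition~\ref{prop:refworddyck} (your ``proper nesting plus no same-index nesting'' implies it, with the no-same-index-nesting part indeed resting on the syntactic restriction in Definition~\ref{def:rewb}), and the bridge between the inductive definition of $\myopen$ and unmatched open brackets is Claim~\ref{clm:op}; both inclusions of your key identity then go through exactly as you argue, and $\myopen(vw) = \emptyset$ follows from full matching (Corollary~\ref{cor:finalopisempty}). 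What each approach buys: yours dispenses with condition~(b) entirely and isolates a stronger, automaton-free statement --- $\myopen(v)$ is determined by any completion of $v$ to a word of $\reflang(\alpha)$ --- while the paper's induction stays inside the automaton, reuses Lemma~\ref{lem:opassert} with almost no extra combinatorics, but is correspondingly tied to the specially normalized NFA. Do make sure, when writing your argument out in full, to state the matching bijection and the $\myopen$-vs-unmatched-brackets correspondence explicitly, since $\myopen$ is defined by recursion on the word rather than via matching; that is the only place where your sketch leaves real work.
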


We now present the construction of a unary-PMCFG that proves the main theorem (Theorem~\ref{thm:rewbpmcfg}).  Let $\alpha$ be a rewb using number characters each of whose value is bounded by $\kbound$.
Let $G_\alpha$ be the $(\kbound+1)$-unary-PMCFG $(\syuugou{S, A_0, \dots, A_{n-1}},\Sigma,\mathcal{F},P,S)$ defined as follows:
\begin{itemize}
    \item $\mathcal{F} = \syuugou[\symo{},\symi_{a}^{M},\symp_{k}^{M},\symr_{k},\symid]{a \in \Sigma, 1 \leq k \leq \kbound, M \subseteq \syuugou{1,\dots,\kbound}}$, where
        \begin{itemize}
            \item $\symo = \lambda(x,y_1,\dots,y_\kbound).x$, $\symid = \lambda(x,y_1,\dots,y_\kbound).(x,y_1,\dots,y_\kbound)$ (the identity function),
            \item $\symi_{a}^M = \lambda(x,y_1,\dots,y_\kbound).(xa,z_1,\dots,z_\kbound)$ where $z_j$ is $y_j a$ for $j \in M$ and $y_j$ for $j \notin M$,
            \item $\symp_{k}^M = \lambda(x,y_1,\dots,y_\kbound).(xy_k,z_1,\dots,z_\kbound)$ where $z_j$ is $y_j y_k$ for $j \in M$ and $y_j$ for $j \notin M$,
            \item $\symr_{k}^M = \lambda(x,y_1,\dots,y_\kbound).(x,z_1,\dots,z_\kbound)$ where $z_k = \varepsilon$ and $z_j = y_j$ for $j \neq k$.
        \end{itemize}
        We use the following abbreviations: $\varphi_{a}^{M} = \symi_{a}^{M}$, $\varphi_{k}^{M} = \symp_{k}^{M}$, $\varphi_{\lbrack_k}^{M} = \symr_{k}$ and $\varphi_{\rbrack_k}^{M} = \symid$.
    \item $P = \syuugou[S \to \symo A_f]{q_f \in F}$ $\mydisjointu$ $\syuugou[A_j \to \varphi_c^{M_j} A_i]{q_i \myto{c}{\myautomaton_\alpha} q_j}$ $\mydisjointu$ $\syuugou{A_0 \to (\varepsilon, \varepsilon, \dots, \varepsilon)}$.
\end{itemize}

\begin{lemma} \label{lem:base}
    Let $v = v_1 \cdots v_{\zettaiti{v}} \in \reflang(\alpha)$ where $v_1, \dots, v_{\zettaiti{v}} \in \nfaalph$. Then, for every $i \geq 0$,
    \ifcondensed
        $\varphi_{v_i}^{\myopen(v_1\cdots v_i)} \cdots \varphi_{v_1}^{\myopen(v_1)} (\varepsilon, \varepsilon, \dots, \varepsilon) = (\deref(v_1 \cdots v_i), \mymem_1(v_1 \cdots v_i), \dots, \mymem_k(v_1 \cdots v_i))$.
    \else
        \[
            \varphi_{v_i}^{\myopen(v_1\cdots v_i)} \cdots \varphi_{v_1}^{\myopen(v_1)} (\varepsilon, \varepsilon, \dots, \varepsilon) = (\deref(v_1 \cdots v_i), \mymem_1(v_1 \cdots v_i), \dots, \mymem_k(v_1 \cdots v_i)).
        \]
    \fi
    In particular, $\symo\cdot\varphi_{v_{\zettaiti{v}}}^{\myopen(v_1\cdots v_{\zettaiti{v}})} \cdots \varphi_{v_1}^{\myopen(v_1)} (\varepsilon, \varepsilon, \dots, \varepsilon) = \deref(v)$.
\end{lemma}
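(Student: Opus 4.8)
The plan is to prove the displayed identity by induction on $i$ and then derive the ``in particular'' clause by one application of $\symo$. Throughout I would write $u = v_1 \cdots v_{i-1}$, $c = v_i$, and abbreviate, for any prefix $w$ of $v$, the target tuple as $T(w) = (\deref(w), \mymem_1(w), \dots, \mymem_\kbound(w))$, so that the claim reads: the composition of the first $i$ functions applied to $(\varepsilon, \dots, \varepsilon)$ equals $T(v_1 \cdots v_i)$. For the base case $i = 0$ the left-hand side is the empty composition applied to $(\varepsilon, \dots, \varepsilon)$ and the right-hand side is $T(\varepsilon)$; here I would only check $\deref(\varepsilon) = \varepsilon$ (immediate from $\deref^{\circ}(\varepsilon) = \varepsilon$) and $\mymem_j(\varepsilon) = \myfet_j(\varepsilon) = \varepsilon$ for every $j$ (case (3) of the fetching function, since $\lbrack_j \notin \varepsilon$).

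For the inductive step I would assume the identity for $i-1$, so that the composition of the first $i-1$ functions yields $T(u)$, and then apply the single function $\varphi_c^{\myopen(uc)}$ to $T(u)$, verifying componentwise that the result is $T(uc)$. This splits into the four cases defining $\varphi$. When $c = a \in \Sigma$, the superscript equals $\myopen(ua) = \myopen(u)$ and $\symi_a$ appends $a$ to the first component and to exactly the memory components indexed by $\myopen(u)$: the first component matches since $\deref(ua) = \deref(u)\,a$ (because $g(a) = a$), and the memory components match by Proposition~\ref{prop:memind}(a). When $c$ is a number, $\symp_c$ appends the $c$-th memory entry $\mymem_c(u)$ to the first component and to the components in $\myopen(u) = \myopen(uc)$: the first component matches because $\deref^{\circ}(uc) = \deref^{\circ}(u)\,\mymem_c(u)$ and $\mymem_c(u) \in \Sigma^\ast$, so $g$ fixes it, while the memory components match by Proposition~\ref{prop:memind}(b). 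When $c = \lbrack_p$, the function $\symr_p$ ignores its superscript, empties the $p$-th memory, and leaves the others; this matches since $g(\lbrack_p) = \varepsilon$ gives $\deref(u\lbrack_p) = \deref(u)$, and Proposition~\ref{prop:memind}(c) handles the memories. Finally, when $c = \rbrack_p$ the function is $\symid$, and this matches since $g(\rbrack_p) = \varepsilon$ and Proposition~\ref{prop:memind}(d) leaves every memory unchanged.

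For the ``in particular'' clause I would set $i = \zettaiti{v}$, so the full composition produces $T(v)$, and then observe that $\symo = \lambda(x, y_1, \dots, y_\kbound).x$ projects out the first component $\deref(v)$. I do not expect a genuine obstacle: the proof is a direct structural induction in which each case unfolds Definition~\ref{def:deref} for the first component and invokes the matching part of Proposition~\ref{prop:memind} for the memory components. The only two points that require attention are (i) that the superscript in the statement is $\myopen(uc)$ rather than $\myopen(u)$, which is harmless because the two cases that actually consult the superscript ($c \in \Sigma$ and $c \in \mynat$) satisfy $\myopen(uc) = \myopen(u)$, whereas the bracket cases use superscript-independent functions; and (ii) that $\mymem_c(u)$ is a string over $\Sigma$, so that $g$ acts as the identity on it in the number case.
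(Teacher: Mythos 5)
Your proof is correct and takes essentially the same approach as the paper: the paper's entire proof of Lemma~\ref{lem:base} is the one-line remark that it proceeds by induction on $i$ using Proposition~\ref{prop:memind}, which is precisely your argument. Your componentwise case analysis (including the observations that $\myopen(uc)=\myopen(u)$ when $c\in\Sigma\mydisjointu\mynat$, that the bracket cases use superscript-independent functions, and that $g$ fixes $\mymem_c(u)\in\Sigma^\ast$) correctly fills in the details the paper leaves implicit.
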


The proof is by induction on $i$ using Proposition~\ref{prop:memind}.
Theorem~\ref{thm:rewbpmcfg} follows from Lemmas~\ref{lem:om} and \ref{lem:base} (see Appendix~\ref{app:rewbpmcfgpf}).
%
%
%
We have shown that the language class of rewbs falls inside the class of unary-PMCFLs.  However, we also show a simple rewb with only one capturing group and no captured references that describes a non-MCFL:
\begin{restatable}{theorem}{thmnotmcfg}
\label{thm:notmcfg}
    $L((_1 a^\ast )_1\,(c \bs 1)^\ast) = \syuugou[w(cw)^k]{k \geq 0, w \in \syuugou{a}^\ast}$ is not an MCFL.
\end{restatable}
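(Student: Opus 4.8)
The plan is to first rewrite the language in a more convenient form and then apply the pumping lemma for MCFLs. By a computation analogous to Example~\ref{ex:copylang}, one checks that $L((_1a^\ast)_1(c\bs1)^\ast)=\{\,a^n(ca^n)^k : n,k\ge 0\,\}$, i.e.\ exactly the strings over $\{a,c\}$ of the form $a^{n_0}ca^{n_1}c\cdots ca^{n_k}$ in which every maximal block of $a$'s has the same length. Call this language $L$ and suppose for contradiction that $L$ is an MCFL, hence an $m$-MCFL for some $m$. I would then invoke the pumping lemma for $m$-MCFLs~\cite{seki1991multiple}, which yields a constant $p$ such that every $z\in L$ with $|z|>p$ admits a decomposition $z=\beta_0\gamma_1\beta_1\cdots\gamma_{2m}\beta_{2m}$ with $0<|\gamma_1\cdots\gamma_{2m}|\le p$ and $\beta_0\gamma_1^{\,k}\beta_1\cdots\gamma_{2m}^{\,k}\beta_{2m}\in L$ for all $k\ge0$. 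As the test word I take $z=a^N(ca^N)^q$ with $N>p$ and $q$ chosen large (say $q>4m+1$), so that $|z|>p$.

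The key structural observation is that, because consecutive $c$'s in $z$ are separated by a full block $a^N$, any substring of $z$ containing two $c$'s has length at least $N+2$. Since each $\gamma_i$ satisfies $|\gamma_i|\le p<N$, every $\gamma_i$ therefore contains at most one $c$; in particular a $c$-carrying piece has the shape $\gamma_i=a^{l_i}ca^{r_i}$ with $l_i+r_i<N$. Moreover, since the total length of the pieces is at most $p$ and each piece (having at most one $c$) overlaps at most two blocks, the pieces meet at most $4m$ of the $q+1$ blocks of $z$; as $q$ is large, at least one block of $z$ is disjoint from every piece and from the surrounding $c$'s, hence persists untouched, with length exactly $N$, in every pumped word $z_k$.

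I would finish by a case analysis driven by whether any piece carries a $c$. If no $\gamma_i$ contains a $c$, then each nonempty piece is a run of $a$'s lying inside a single block, so pumping to $k=2$ lengthens some block beyond $N$ (using $\sum|\gamma_i|\ge1$) while the untouched block remains of length $N$; the two blocks have unequal length, so $z_2\notin L$, a contradiction. If some piece $\gamma_i=a^{l_i}ca^{r_i}$ does contain a $c$, then $\gamma_i^{2}=a^{l_i}ca^{\,l_i+r_i}ca^{r_i}$ creates a complete internal block (flanked by its two $c$'s) of length $l_i+r_i<N$, whereas the untouched block still has length $N$; again $z_2$ contains two $a$-blocks of different lengths, so $z_2\notin L$, a contradiction. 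Either way $L$ cannot be an $m$-MCFL, proving the claim.

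The main obstacle is pinning down the exact form of the MCFL pumping lemma that the argument needs: I rely on the bound $|\gamma_1\cdots\gamma_{2m}|\le p$ on the total length of the pumped factors (a bounded-window version works equally well), which is what forces each factor to contain at most one $c$ once $N>p$. The remaining work is purely combinatorial bookkeeping—verifying that a block untouched by all factors genuinely persists with length $N$ in $z_k$ (so that $q$ must be taken large enough that such a block exists) and that pumping a $c$-carrying factor really manufactures a maximal internal block strictly shorter than $N$.
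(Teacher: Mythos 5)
There is a genuine gap, and it lies exactly where you flagged your ``main obstacle'': the pumping lemma you invoke is not a theorem for MCFLs. You assume a strong, universal form --- a constant $p$ such that \emph{every} $z \in L$ with $|z| > p$ decomposes as $z=\beta_0\gamma_1\beta_1\cdots\gamma_{2m}\beta_{2m}$ with $0<|\gamma_1\cdots\gamma_{2m}|\le p$ --- and your entire combinatorial analysis rests on both features: choosing the test word $z=a^N(ca^N)^q$ with $N>p$, and the length bound forcing each $\gamma_i$ to contain at most one $c$. But the only pumping lemma available for general $m$-MCFLs, due to Seki et al.~\cite{seki1991multiple} (Theorem~\ref{thm:pumpMCFL} in this paper), is merely existential: it asserts that \emph{some} $z_0\in L$ admits pumping, gives you no control over which $z_0$, and provides no bound whatsoever on the lengths of the pumped factors. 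The paper remarks on this weakness explicitly right after stating the lemma. Worse, this is not a gap one can close by ``pinning down the exact form'': the strong pumping lemma of the kind you need is known to \emph{fail} for general (non-well-nested) MCFLs, by a result of Kanazawa, Kobele, Michaelis, Salvati and Yoshinaka (\emph{The failure of the strong pumping lemma for multiple context-free languages}, Theory Comput.\ Syst.\ 2014); a usual-form lemma is only known for the well-nested subclass~\cite{kanazawa2009pumping}, and there is no reason an alleged MCFG for this language would be well-nested. So neither the choice of $z$ nor the ``at most one $c$ per piece'' step can be justified, and the argument collapses at its foundation.

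Your block-counting instinct (more $a$-blocks than pump sites, hence an untouched block of length exactly $N$) is nevertheless the right one, and the paper's proof shows how to salvage it with only the weak lemma. Instead of choosing a long test word, the paper forces the block count structurally: since $m$-MCFL is closed under intersection with regular languages~\cite{seki1991multiple}, intersect $L(\alpha)$ with $a^\ast c a^\ast c\cdots c a^\ast$ ($2m$ occurrences of $c$) to get $L=\{\,w(cw)^{2m} : w\in a^\ast\,\}$, in which \emph{every} word has exactly $2m+1$ blocks while the lemma supplies only $2m$ pump sites $v_1,\dots,v_{2m}$. Then, for whatever witness $z_0=w(cw)^{2m}$ the weak lemma happens to produce: no $v_i$ may contain $c$ (otherwise $z_1$ would have more than $2m$ occurrences of $c$ and leave $L$) --- note this replaces your length-bound argument by a pure counting argument; writing $z_1=x(cx)^{2m}$, some $v_i$ is nonempty so $|x|>|w|$, yet by pigeonhole one of the $2m+1$ copies of $x$ meets none of the $2m$ pump sites, forcing $x=w$ --- contradiction. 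No bound on $|v_i|$ and no freedom to pick $z_0$ are ever needed.
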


As a corollary, we also obtain the following from the facts that the language class of rewbs is not a subclass of stack languages (SL) (a), and every rewb without captured references describes a nonerasing stack language (NESL) (b)~\cite{mfcs2023}:

\begin{corollary}
    \label{cor:pmcfgsaandmcfgnesa}
    \begin{alphaenumerate}
        \item The class of SLs does not contain that of unary-PMCFLs.
        \item The class of MCFLs does not contain that of NESLs.
    \end{alphaenumerate}
\end{corollary}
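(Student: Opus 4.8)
The plan is to obtain both parts as immediate consequences of the two theorems of this section together with the two facts of~\cite{mfcs2023} recalled just above. In each case I would simply exhibit a single witness language separating the two classes, so no new machinery is needed.

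For part~(a), I would invoke fact~(a): the language class of rewbs is not contained in~SL. This supplies a rewb~$\beta$ with $L(\beta) \notin \text{SL}$. Since Theorem~\ref{thm:rewbpmcfg} guarantees that \emph{every} rewb describes a unary-PMCFL, $L(\beta)$ is in particular a unary-PMCFL. Hence $L(\beta)$ is a unary-PMCFL that is not an~SL, which is exactly the assertion that SL does not contain unary-PMCFL.

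For part~(b), I would reuse the concrete rewb $\gamma = (_1 a^\ast )_1\,(c \bs 1)^\ast$ from Theorem~\ref{thm:notmcfg}. The key point to verify is that $\gamma$ has no captured reference: its only reference $\bs 1$ sits inside $(c \bs 1)^\ast$, which is not nested within any capturing group, while the single capturing group $(_1 a^\ast )_1$ contains only $a^\ast$. Consequently fact~(b) applies and $L(\gamma)$ is an~NESL, whereas Theorem~\ref{thm:notmcfg} tells us $L(\gamma)$ is not an~MCFL. Thus $L(\gamma)$ is an NESL that is not an MCFL, giving that MCFL does not contain NESL.

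Because the corollary is a pure composition of previously established statements, I do not expect any genuine obstacle. The only subtlety deserving explicit mention is the checking of the side conditions of the imported facts—that Theorem~\ref{thm:rewbpmcfg} applies to the witness of part~(a) (it does, being stated for arbitrary rewbs) and that the witness $\gamma$ of part~(b) is free of captured references so that fact~(b) applies. Both verifications are immediate, as noted above.
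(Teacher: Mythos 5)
Your proposal is correct and follows exactly the paper's own route: part~(a) composes the fact from~\cite{mfcs2023} that rewbs are not contained in SL with Theorem~\ref{thm:rewbpmcfg}, and part~(b) composes Theorem~\ref{thm:notmcfg}'s witness $(_1 a^\ast )_1\,(c \bs 1)^\ast$ (which indeed has no captured references) with the NESL inclusion for such rewbs from~\cite{mfcs2023}. Your explicit check of the side conditions matches what the paper leaves implicit; no gaps.
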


\section{Closed-star rewbs and unary-MCFLs}
\label{sec:rewbscmcfg}

As we have shown in the previous section, the language class of rewbs falls into the class of unary-PMCFLs, but not the class of MCFLs.
Therefore, the parallelism of PMCFG is essential for the inclusion.  
To see why, let us recall the construction of the unary-PMCFG $G_\alpha$ equivalent to the rewb $\alpha = (_1 a^\ast )_1\,(c \bs 1)^\ast$ from Example~\ref{ex:pmcfgproof}.
Note that only $\symp_{1}^{\syuugou{}} = \lambda (x,y_1). (x y_1, y_1)$ is parallel and we cannot deprive its parallelism because $L(\alpha)$ is not an MCFL as shown in Theorem~\ref{thm:notmcfg}.
Roughly, this is due to the fact that the subexpression $(c \bs 1)^\ast$ may reference memory cell $1$ defined outside the star from the inside of the star.
When processing number character $1$, $G_{\alpha}$ needs to not only append the content of memory cell $1$ to the generating string but also keep the content of memory cell $1$ for the future, because $G_{\alpha}$ cannot foresee how many times memory cell $1$ will be referenced in the future derivation.  
Backed by these observations, in this section, we propose a novel syntactic condition that we call \emph{closed-star} which asserts that no subexpression inside a star can backreference strings captured outside the star.
We show that the closed-star condition gives an upper bound on the number of times each content of memory cells will be referenced, and prove that the language class of closed-star rewbs falls inside the class of unary-MCFLs.

We call a rewb \emph{closed} if every reference $\bs i$ in the rewb has the corresponding capturing group $(_i\,)_i$ (see Definition~\ref{def:closed} in Appendix~\ref{app:rewbscmcfgpf} for the formal definition from \cite{chida2022lookaheads}).
For instance, $(_1 a^\ast )_1 \bs 1$ is closed but $(_1 a^\ast )_1 \bs 2$ and $\bs 1$ are not because neither $\bs 2$ in the former nor $\bs 1$ in the latter has the corresponding capturing group.
Here is our definition of closed-star rewbs.
\begin{definition2} \label{def:sc}
    A rewb is \emph{closed-star} if for each its starred subexpression $\beta^\ast$, $\beta$ is closed.
\end{definition2}

For instance, $(_1 a^\ast )_1 c \bs 1$ and $((_1 a^\ast )_1 c \bs 1)^\ast$ are closed-star but $(_1 a^\ast )_1 (c \bs 1)^\ast$ is not even though the rewb itself is (i.e., globally) closed.  Conversely, $\bs 1 + (_1 a^\ast )_1 c \bs 1$ is closed-star even though the rewb itself is not closed.
It is easy to see that the language class of closed-star rewbs contains the class of regular languages but is not contained in that of CFLs as it can describe, for example, the copy language (cf.~Example~\ref{ex:copylang}).
Our goal is to prove the following:

\begin{restatable}{theorem}{thmrewbscmcfg} \label{thm:rewbscmcfg}
    Every closed-star rewb describes a unary-MCFL.
\end{restatable}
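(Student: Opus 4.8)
The plan is to reuse the unary-PMCFG $G_\alpha$ from the proof of Theorem~\ref{thm:rewbpmcfg} and surgically remove its only source of parallelism. Among the functions of $G_\alpha$, only the paste functions $\symp_k^M = \lambda(x,y_1,\dots,y_\kbound).(xy_k,z_1,\dots,z_\kbound)$ (with $z_j = y_j y_k$ for $j \in M$) reuse an argument, namely $y_k$, which is simultaneously appended to the output component $x$ and to every open memory cell $y_j$, $j \in M$. The input functions $\symi_a^M$ only append the \emph{constant} $a$, the reset $\symr_k$ only erases, and $\symid,\symo$ are plainly non-parallel, so these already obey the MCFG restriction. Thus it suffices to simulate each $\symp_k^M$ without reusing any single string. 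The idea is to keep several \emph{copies} of each memory cell so that every copy is consumed (appended somewhere) at most once, and to use the closed-star condition to guarantee that a constant number of copies always suffices.

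The conceptual heart is a boundedness lemma: for every closed-star rewb $\alpha$ there is a constant $N$, computable from $\alpha$, such that for every $v \in \reflang(\alpha)$ and every cell $k$, the content placed in cell $k$ between two consecutive occurrences of $\lbrack_k$ (one ``generation'' of the cell) is referenced at most $N$ times. I would prove this by exploiting the shape of $\reflang(\alpha)$ forced by Definition~\ref{def:sc}: a number character $k$ occurring \emph{inside} a starred subexpression $\beta^\ast$ corresponds to a reference $\bs k$ in $\beta$, and since $\beta$ is closed its matching group $(_k\;)_k$ --- hence a fresh $\lbrack_k$ resetting the cell --- also lies inside $\beta$. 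Consequently, within each unfolding of the star the cell is reset before its content can be read across iterations, so the reads charged to any single generation of cell $k$ cannot cross a star boundary and are bounded by the static number of references to $k$ in the enclosing scope; the finitely many references lying outside all stars contribute only a constant. Combining this direct bound $N$ with the fixed maximal star-nesting depth of $\alpha$ (pastes into an open cell propagate transitively, but only through boundedly many levels) yields a single constant $C$ bounding the total, transitive number of times any captured string is extracted.

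With $C$ in hand I would build a $(1+\kbound C)$-unary-MCFG $G'_\alpha$ that maintains $C$ copies $y_k^{(1)},\dots,y_k^{(C)}$ of each cell $k$ together with the output component $x$. The non-parallel functions carry over directly: $\symi_a^M$ appends the constant $a$ to $x$ and, for each $k \in M$, to all copies of cell $k$; $\symr_k$ resets every copy of cell $k$ to $\varepsilon$. The paste $\symp_k^M$ is replaced by a function that appends \emph{distinct} copies of cell $k$ to the various destinations: one copy to $x$ and, for each open cell $j \in M$, enough further copies of cell $k$ to feed the live copies of cell $j$. Maintaining the invariant that, at each point, the number of ``live'' copies of a cell equals the number of times its current content will still be extracted --- a quantity bounded by $C$ via the lemma --- ensures that the required copies are always available and that each copy is used at most once, so no argument is reused and $G'_\alpha$ is a genuine unary-MCFG. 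Correctness, i.e.\ that $L(G'_\alpha) = L(\alpha)$, then follows by an induction paralleling Lemma~\ref{lem:base}, reading $\deref(v)$ off the first component while the copies faithfully track $\mymem_k(v)$.

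The main obstacle I anticipate is the bookkeeping in this last step rather than the boundedness lemma itself. Because a paste into an open cell $j$ must replicate cell $k$'s content across all the live copies of $j$, the copies of a cell are not interchangeable, and the number of copies of $k$ consumed equals $1 + \sum_{j \in M}(\text{live copies of } j)$; making this match the available supply requires the transitive, not merely the direct, bound and a carefully stated invariant tracking exactly how many future extractions each cell still owes. Making this invariant precise --- so that copies never run out yet each is used only once --- is where the real work lies; once it is in place, verifying the MCFG restriction and the equivalence with $\alpha$ is routine.
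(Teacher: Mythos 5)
Your overall architecture is the same as the paper's: retain the unary-PMCFG of Theorem~\ref{thm:rewbpmcfg}, observe that only the paste functions are parallel, keep boundedly many copies of each memory cell, bound the number of \emph{direct} references per generation of a cell, bound the depth of \emph{transitive} propagation through open cells, and combine the two via the recurrence ``copies consumed $=1+\sum_{j}(\text{live copies of }j)$'' --- this is exactly the paper's $\mydepth_l = \tconst(1+\kbound\cdot\mydepth_{l-1})$ with $\rbound = \mydepth_{\sconst}$, and your direct bound $N$ together with its proof sketch is Proposition~\ref{prop:tconst}. The genuine gap is in the second pillar. You claim the transitive propagation depth is bounded by the maximal star-nesting depth of $\alpha$. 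That is false: propagation depth is driven by chains of captured references, not by stars. The star-free (hence vacuously closed-star) rewb $(_1 a )_1 (_2 \bs 1 )_2 (_3 \bs 2 )_3\, \bs 3$ passes the content of cell $1$ through cell $2$ and then cell $3$ before it reaches the output --- a dependency chain of depth $3$ (in the paper's notation, $s_{i,1}=3$) --- while its star-nesting depth is $0$; chains of any fixed length arise this way, so a copy supply computed from star-nesting depth runs out and your construction fails already on star-free inputs.

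What is actually needed is a separate boundedness result for this depth --- the paper's Proposition~\ref{prop:sconst} for the quantity $s_{i,k}$ of Definition~\ref{def:s} --- proved by induction on the syntax of $\alpha$ with the statement strengthened over arbitrary initial values (to make the concatenation case compose), where the star case invokes closedness a \emph{second} time: your reset argument is made only for direct reads, but without closedness the paste-chains themselves wrap around iterations, as in $(_1 a )_1 \bigl((_2 \bs 1 )_2 (_1 \bs 2 )_1\bigr)^\ast\, \bs 1$ (not closed-star), where the initial capture travels cell $1 \to 2 \to 1 \to 2 \to \cdots$ and the depth grows with the number of unfoldings. Once the depth bound is stated and proved correctly, the rest of your plan goes through as in the paper: the ``carefully stated invariant tracking how many future extractions each cell still owes'' is realized there by threading the current values of the tuples $t$ and $s$ through superscripts on the nonterminals (so each paste knows which block of copies to consume and how many to replicate into each open cell), and the correctness induction you anticipate is Lemma~\ref{lem:mcfgbase}, the analogue of Lemma~\ref{lem:base}.
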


The proof idea is as follows.  Let $\alpha$ be a closed-star rewb using number characters each of whose value is bounded by $\kbound$.
Recall that in the previous section, we constructed an equivalent $(\kbound+1)$-unary-PMCFG $G_\alpha$ using a tuple $(x,y_1,\dots,y_\kbound)$ in a derivation.  
This time, we construct an equivalent $(\kbound\cdot\rbound + 1)$-unary-MCFG $G'_\alpha$ using a tuple $(x,y_{1,1},\dots,y_{1,\rbound},\dots,y_{\kbound,1},\dots,y_{\kbound,\rbound})$ for some $\rbound \in \mynat$.  We let $G'_{\alpha}$ have $\rbound$ copies for each component $y_k$ $(1 \leq k \leq \kbound)$ of $G_\alpha$ to handle multiple references to the same string without parallelism.  The hard part of the proof is showing that such a finite $\rbound$ exists.

Let us fix $v = v_1 \cdots v_{\zettaiti{v}} \in \reflang(\alpha)$ where $v_1, \dots, v_{\zettaiti{v}} \in \nfaalph$.  
We introduce two tuples $t_i = (t_{i,1},\dots,t_{i,\kbound})$ and $s_i = (s_{i,1},\dots,s_{i,\kbound})$ for each position $i = 0, 1, \dots, \zettaiti{v}$ for calculating $\rbound$.  Intuitively, for each $1 \leq k \leq \kbound$, $t_{i,k}$ denotes the number of times the content of memory cell $k$ after dereferencing $v_1 \cdots v_i$ will be directly referenced when dereferencing the remaining string $v_{i+1} \cdots v_{\zettaiti{v}}$.
Similarly, $s_{i,k}$ denotes the ``dependency depth'' of memory cell $k$ at position $i$, that is, roughly, a degree of depth to which the content of cell $k$ after dereferencing $v_1 \cdots v_i$ will be passed to the other memory cells or the generating string.
For example, let $v = \lbrack_1 aa \rbrack_1 \lbrack_2\, 1\, 1\,\rbrack_2 \lbrack_1\,2\,\rbrack_1\,1$.
Then, we calculate $t_{5,1} = 2$ and $s_{5,1} = 3$ (cf.~Definitions~\ref{def:t} and \ref{def:s} in Appendix~\ref{app:rewbscmcfgpf} for details).  This means that at position $5$ (i.e.,~between $\lbrack_2$ and $1$), the current string $aa$ in cell $1$ will be directly referenced twice and passed three times (in the order: $\to$ cell~$2$ (and generating string) $\to$ cell~$1$ (and generating string) $\to$ generating string).

We define the tuple $t_i$ (resp.~$s_i$) inductively on the positions of $v$, and show in Proposition~\ref{prop:tconst} (resp.~Proposition~\ref{prop:sconst}) that the rewb $\alpha$ being closed-star ensures the existence of upper bounds $\tconst$ (resp.~$\sconst$) on all components $t_{i,k}$ (resp.~$s_{i,k}$) no matter what $v \in \reflang(\alpha)$ is considered.
Using $\tconst$ and $\sconst$, we define $\mydepth_l$ as $\mydepth_0 = 0$ and $\mydepth_l = \tconst (1 + \kbound\cdot\mydepth_{l-1} )$ inductively, and define $\rbound = \mydepth_{\sconst}$.  Note that each $\mydepth_l$, and hence also $\rbound$, is finite.

For each $1 \leq k \leq \kbound$, we shall design the subcomponents $(y_k) = (y_{k,1},\dots,y_{k,\rbound})$ as shown in the diagram below.\footnote{We might as well write $(y_k)$ as $y_k$, but we shall use the former notation to emphasize that it is a tuple.}
This design is intended for $G'_\alpha$ to keep the contents of the components from $y_{k,1}$ to $y_{k,\psi_i(k,\kbound)}$ (where $\psi_i(k,\kbound) = (t_{i,k}/\tconst) \mydepth_{s_{i,k}}$) to be the copies of $\mu_k(v_1\cdots v_i)$ immediately before applying the function corresponding to $v_{i+1}$ in the functional dereferencing phase.\footnote{For the term ``functional dereferencing phase,'' see its description given in the paragraph immediately below Theorem~\ref{thm:rewbpmcfg} and Example~\ref{ex:pmcfgproof} in the previous section.}
The case $v_{i+1} \in \mynat$ is the most difficult case for preserving this intended structure.  Let us look at the diagram.  In this case, the function $\symp_{k}^{\myopen(v_1\cdots v_{i+1}),t_i,s_i}$ corresponding to $v_{i+1}=k$ (cf.~the definition of $G'_\alpha$ before Lemma~\ref{lem:mcfgbase}) (1) appends the first component of the $t_{i,k}$-th block of $(y_k)$ to the first component of the entire tuple which is maintaining the generating string, (2) for each open memory cell $j \in \myopen(v_1 \cdots v_{i+1})$, appends each component $y_{k,r}$ $(\psi_i(k,j-1) \leq r \leq \psi_i(k,j))$ of the subblock to the first $\mydepth_{s_{i,k}-1}$ components of $(y_j)$, and (3) empties all components of the $t_{i,k}$-th block of $(y_k)$ lest non-parallelism would be violated.

\begin{figure}[htb]
    \centering
    \begin{tikzpicture}[x=4mm,y=4mm]
        \tikzset{
            position label/.style={
               above = 0.4pt,
               text height = 1ex,
               text depth = 0.6ex
            },
            position label under/.style={
               below = 3pt,
               text height = 1.5ex,
               text depth = 1ex
            },
            brace/.style={
               decoration={brace},
               decorate
            },
            underbrace/.style={
               decoration={brace, mirror},
               decorate
            }
        }
        \node[below=0.35mm] (x1) at (0.2,-1.8) {\scriptsize $0$};
        \draw[dashed] (0.2,0) -- (0.2,-1.8);
        \draw (0.4,0) rectangle (1.4,1);
        \node[below=0.35mm] (x2) at (1.5999999999999999,-1.8) {\scriptsize $1$};
        \draw[dashed] (1.5999999999999999,0) -- (1.5999999999999999,-1.8);
        \draw (1.7999999999999998,0) rectangle (4.1,1);
        \draw[|<->|] ($ (1.7999999999999998,0) + (0,-0.4) $) -- node[below] {\footnotesize $\mydepth_{s_{i,k}-1}$} ($ (4.1,0) + (0,-0.4) $);
        \node (b1) at (5.0,0.5) {$\scriptstyle\cdots$};
        \draw (5.9,0) rectangle (8.2,1);
        \draw[|<->|] ($ (5.9,0) + (0,-0.4) $) -- node[below] {\footnotesize $\mydepth_{s_{i,k}-1}$} ($ (8.2,0) + (0,-0.4) $);
        \node (c1) at (9.299999999999999,0.5) {$\cdots$};
        \draw (10.4,0) rectangle (11.4,1);
        \node[below=0.35mm] (x7) at (11.6,-1.8) {\scriptsize $\psi_i(k,0)$};
        \draw[dashed] (11.6,0) -- (11.6,-1.8);
        \draw (11.8,0) rectangle (14.100000000000001,1);
        \draw[|<->|] ($ (11.8,0) + (0,-0.4) $) -- node[below] {\footnotesize $\mydepth_{s_{i,k}-1}$} ($ (14.100000000000001,0) + (0,-0.4) $);
        \node[below=0.35mm] (x8) at (14.3,-1.8) {\scriptsize $\psi_i(k,1)$};
        \draw[dashed] (14.3,0) -- (14.3,-1.8);
        \node (b2) at (15.000000000000002,0.5) {$\scriptstyle\cdots$};
        \draw (15.900000000000002,0) rectangle (18.200000000000003,1);
        \draw[|<->|] ($ (15.900000000000002,0) + (0,-0.4) $) -- node[below] {\footnotesize $\mydepth_{s_{i,k}-1}$} ($ (18.200000000000003,0) + (0,-0.4) $);
        \node[below=0.35mm] (x10) at (18.400000000000002,-1.8) {\scriptsize $\psi_i(k,\kbound)$};
        \draw[dashed] (18.400000000000002,0) -- (18.400000000000002,-1.8);
        \node (c2) at (19.3,0.5) {$\cdots$};
        \draw (20.4,0) rectangle (21.4,1);
        \draw (21.799999999999997,0) rectangle (24.099999999999998,1);
        \draw[|<->|] ($ (21.799999999999997,0) + (0,-0.4) $) -- node[below] {\footnotesize $\mydepth_{s_{i,k}-1}$} ($ (24.099999999999998,0) + (0,-0.4) $);
        \node (b3) at (24.999999999999996,0.5) {$\scriptstyle\cdots$};
        \draw (25.899999999999995,0) rectangle (28.199999999999996,1);
        \draw[|<->|] ($ (25.899999999999995,0) + (0,-0.4) $) -- node[below] {\footnotesize $\mydepth_{s_{i,k}-1}$} ($ (28.199999999999996,0) + (0,-0.4) $);
        \node[below=0.35mm] (x15) at (28.399999999999995,-1.8) {\scriptsize $\mydepth_{s_{i,k}}$};
        \draw[dashed] (28.399999999999995,0) -- (28.399999999999995,-1.8);
        \draw (28.599999999999994,0) rectangle (32.0,1);
        \node[below=0.35mm] (x16) at (32.2,-1.8) {\scriptsize $\rbound$};
        \draw[dashed] (32.2,0) -- (32.2,-1.8);
        \draw [underbrace] (0.4,-3.3) -- node [position label under, pos = 0.5] {\footnotesize valid copies} (18.200000000000003,-3.3);
        \draw [underbrace] (18.6,-3.3) -- node [position label under, pos = 0.5] {\footnotesize indeterminate} (32,-3.3);
        \draw [brace] (0.4,1.5) -- node [position label, pos = 0.5] {\footnotesize 1\textsuperscript{st}} (8.2,1.5);
        \draw [brace] (10.4,1.5) -- node [position label, pos = 0.5] {\footnotesize $t_{i,k}$-th} (18.2,1.5);
        \draw [brace] (20.4,1.5) -- node [position label, pos = 0.5] {\footnotesize $\tconst$\textsuperscript{th}} (28.2,1.5);
        \node (x) at (10,5) {$(x, (y_1), \dots,\;(y_{k,1},\dots,y_{k,\rbound})\;, \dots, (y_\kbound))$};
        \draw[dashed] (0.2,2) -- (7.25,4.35);
        \draw[dashed] (32.2,2) -- (13.35,4.35);
    \end{tikzpicture}
\end{figure}
\begin{example} \label{ex:mcfgproof}
    We give an example of the construction of $(\kbound\cdot\rbound+1)$-unary-MCFG $G'_\alpha$.
    Let $\alpha = \big( (_1 a^\ast )_1 (_2 \bs 1 )_2 \bs 2 \bs 2\big)^\ast$ and $\kbound=2$. 
    We assure that $\tconst = \sconst = 2$ is sufficient for $\alpha$, and thus $\rbound = \mydepth_2 = 10$.
    We use the following NFA $\myautomaton_\alpha = (Q,\nfaalph,\delta,q_0,F)$ which recognizes $\reflang(\alpha)$:
\begin{figure}[htb]
        \centering
        \begin{tikzpicture}[
                shorten >=1pt,
                node distance=2.25cm,
                on grid,
                >={Stealth[round]},
                auto
            ]
            \tikzset{el/.style = {inner sep=2pt, align=left, sloped}}
            \tikzset{every node/.style={font=\small}}
            \tikzset{every loop/.style={min distance=6mm,looseness=8}}
            \tikzset{every state/.style={inner sep=.1mm, minimum size=5mm}}
            \tikzset{initial text=}
            \tikzset{every initial by arrow/.style={->}}
            \tikzset{bend angle=30}

            \node[state,initial,accepting] (0) at (0,-3) {$0$};
            \node[state] (1) at (1,-2.3) {$1$};
            \node[state] (2) at (2,-3) {$2$};
            \node[state] (3) at (3,-2.3) {$3$};
            \node[state] (4) at (4,-2.3) {$4$};
            \node[state] (5) at (5,-3) {$5$};
            \node[state] (6) at (6,-3) {$6$};

            \path[->] (0) edge node [yshift=-2pt] {\footnotesize
                            $\lbrack_1$\!
                        } (1);
            \path[->] (1) edge [in=10,out=170,loop above] node {\footnotesize
                            $a$
                        } ();
            \path[->] (1) edge node [yshift=-2pt] {\footnotesize
                            $\rbrack_1$
                        } (2);
            \path[->] (2) edge node [yshift=-2pt] {\footnotesize
                            $\lbrack_2$\!
                        } (3);
            \path[->] (3) edge node {\footnotesize
                            $1$
                        } (4);
            \path[->] (4) edge node [yshift=-2pt] {\footnotesize
                            $\rbrack_2$
                        } (5);
            \path[->] (5) edge node {\footnotesize
                            $2$
                        } (6);
            \draw[->, rounded corners=5] (6) |- node {} ++(-0.7,-0.7) -| (0);
            \node (label) at (6.25,-3.5) {\footnotesize $2$};
        \end{tikzpicture}
\end{figure}

    The nonterminals are the initial nonterminal $S$ and nonterminals $A_i^{(\tau_1,\tau_2),(\sigma_1,\sigma_2)}$ for all $0 \leq i < \zettaiti{Q} = 7$, $0 \leq \tau_1,\tau_2 \leq \tconst = 2$ and $0 \leq \sigma_1,\sigma_2 \leq \sconst = 2$.  In a derivation, the two superscripts of each nonterminal except $S$ would represent the values of $t$ and $s$ at the string position corresponding to its nonterminal.
    The functions, as in Example~\ref{ex:pmcfgproof}, are divided into the five kinds (i.e., $\symo$, $\symi$, $\symp$, $\symr$ and $\symid$) but with some annotated with additional information.
    The rules are also similar to those in Example~\ref{ex:pmcfgproof} but this time the grammar has to update the superscripts to respect the character on the corresponding transition edge in $\myautomaton_\alpha$.
    We introduce a notation for tuples used only here: $(x)_m$ stands for $(x,\dots,x)$ ($m$ times) and $(x)_m(y)_n$ stands for the tuple of which the first $m$ are $x$'s and the last $n$ are $y$'s.
    $G'_\alpha$ derives $a^4$ backed by $v = \lbrack_1 a \rbrack_1 \lbrack_2\,1\,\rbrack_2\,2\,2 \in \reflang(\alpha)$ as follows:
    $S \to 
     \symo\cdot A_{0}^{(0,0),(0,0)} \to
     \symo\cdot\symp_2^{\{\},(0,1),(0,1)} A_{6}^{(0,1),(0,1)} 
    $, where the superscripts of the nonterminals are updated and provided to $\symp_2$, and continues as
    $ \to \cdots \to
      \symo\cdot\symp_2^{\{\},(0,1),(0,1)}\cdot\symp_2^{\{\},(0,2),(0,1)}\cdot\symid\cdot\symp_1^{\{2\},(1,2),(2,1)}\cdot\symr_2\cdot\symid\cdot\symi_a^{\{1\}}\cdot\symr_1(\varepsilon,(\varepsilon)_{10},(\varepsilon)_{10})
      = \symo\cdot\symp_2^{\{\},(0,1),(0,1)}\cdot\symp_2^{\{\},(0,2),(0,1)}\cdot\symp_1^{\{2\},(1,2),(2,1)}(a,(a)_{10},(\varepsilon)_{10})
      = \symo\cdot\symp_2^{\{\},(0,1),(0,1)}\cdot\symp_2^{\{\},(0,2),(0,1)}(a^2,(\varepsilon)_{1}(a)_{2}(\varepsilon)_{2}(a)_{5},(a)_{2}(\varepsilon)_8)
      = \symo\cdot\symp_2^{\{\},(0,1),(0,1)}(a^3,\dots,(a)_{1}(\varepsilon)_9)
      = \symo(a^4,\dots,(\varepsilon)_{10})
      = a^4
    $.
    Remark that $\symp_1^{\{2\},(1,2),(2,1)}$ conveys one copy of the captured string $a$ from cell~$1$ to the generating string, and two copies from cell~$1$ to cell~$2$ for the two further use of cell~$2$.     
\end{example}

See Appendix~\ref{app:rewbscmcfgpf} for the proof of Theorem~\ref{thm:rewbscmcfg}.
With Theorem~\ref{thm:notmcfg}, we also obtain the following corollary.
\begin{corollary}
    \label{cor:scsmaller}
    The language class of closed-star rewbs is a proper subclass of rewbs.
\end{corollary}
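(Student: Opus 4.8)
The plan is to derive the corollary directly from the two headline results already in hand, namely Theorem~\ref{thm:notmcfg} and Theorem~\ref{thm:rewbscmcfg}. The forward inclusion is immediate: every closed-star rewb is, by Definition~\ref{def:sc}, a rewb, so the language class of closed-star rewbs is contained in that of all rewbs. It therefore remains only to exhibit a single language that is describable by some rewb but by no closed-star rewb. The natural candidate is the witness already used in the previous section, and the crucial point is that the separation must be established at the level of \emph{languages}, not merely at the level of syntax.

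First I would take $\alpha = (_1 a^\ast )_1\,(c \bs 1)^\ast$. As observed right after Definition~\ref{def:sc}, this $\alpha$ is itself not closed-star, since the subexpression $c \bs 1$ inside the outer star backreferences cell $1$ whose capturing group $(_1 \; )_1$ lies outside the star; but this syntactic fact alone does not yet yield properness, because a priori some \emph{other}, closed-star, rewb might describe the same language. The real leverage is Theorem~\ref{thm:notmcfg}, which shows that $L(\alpha) = \syuugou[w(cw)^k]{k \geq 0, w \in \syuugou{a}^\ast}$ is not an MCFL. Next I would invoke Theorem~\ref{thm:rewbscmcfg} contrapositively: that theorem says every closed-star rewb describes a unary-MCFL, and since a unary-MCFG is simply an MCFG all of whose functions are nullary or unary, we have $\text{unary-MCFL} \subseteq \text{MCFL}$. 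Consequently, any language describable by some closed-star rewb is an MCFL. Because $L(\alpha)$ is not an MCFL, no closed-star rewb can describe it, whereas the plain rewb $\alpha$ does. This separates the two classes and establishes the strict inclusion.

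I do not expect a genuine obstacle here, as the corollary is a straightforward combination of contributions (b) and (d): the substantive analytical work already resides in the MCFL pumping argument behind Theorem~\ref{thm:notmcfg} and in the bounded-reference construction behind Theorem~\ref{thm:rewbscmcfg}. The only step that warrants a sentence of care is the inclusion chain $\text{unary-MCFL} \subseteq \text{MCFL}$, which is what lets ``$L(\alpha)$ is not an MCFL'' transfer into ``$L(\alpha)$ is not describable by any closed-star rewb.'' With that link in place, the argument closes at once.
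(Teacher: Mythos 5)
Your proposal is correct and follows exactly the paper's route: the corollary is obtained by combining Theorem~\ref{thm:notmcfg} (the witness $L((_1 a^\ast )_1\,(c \bs 1)^\ast)$ is not an MCFL) with Theorem~\ref{thm:rewbscmcfg} via the inclusion $\text{unary-MCFL} \subseteq \text{MCFL}$, just as the paper indicates by deriving it from contributions (b) and (d). Your extra care in noting that the separation must hold at the level of languages rather than syntax, and in spelling out the unary-MCFL-to-MCFL link, is exactly the right emphasis and matches the paper's intent.
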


\section{Closed-star rewbs and NESLs}
\label{sec:rewbscnesa}

Ginsburg et al.~\cite{ginsburg1967stack, ginsburg1967one} proposed \emph{stack automata} (SA), pushdown automata (PDA) extended with the ability to read non-top-most positions of the stack.  \emph{Nonerasing stack automata} (NESA) are SA but with stack popping prohibited~\cite{ginsburg1967stack,ogden1969intercalation}, and \emph{nested stack automata} (Nested-SA) are SA extended with the ability to create stacks nested inside other stacks~\cite{aho1968indexed,aho1969nested}.
    As stated in the introduction, these papers have shown that IL, the language class corresponding to Nested-SA, is properly contained in CSL; SL, the language class corresponding to SA, is properly contained in IL; and NESL, the language class corresponding to NESA, is properly contained in SL: $\text{NESL} \subsetneq \text{SL} \subsetneq \text{IL} \subsetneq \text{CSL}$.  Also as stated there, our recent paper~\cite{mfcs2023} has shown that the language class of rewbs is a proper subclass of ILs but not a subclass of SLs, and that that of rewbs without captured references is a proper subclass of NESLs.
    As we have shown, the language class of rewbs without captured references is not a subclass of MCFLs (Theorem~\ref{thm:notmcfg}), but that of closed-star rewbs is (Theorem~\ref{thm:rewbscmcfg}).  
    The aim of this section is to prove that the language class of closed-star rewbs also falls inside NESL:
    
    \begin{restatable}{theorem}{thmrewbscisnesa} \label{thm:rewbscisnesa}
        Every closed-star rewb describes an NESL.
    \end{restatable}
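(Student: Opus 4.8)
The plan is to build, for a given closed-star rewb $\alpha$, a nonerasing stack automaton (NESA) $M_\alpha$ with $L(M_\alpha)=L(\alpha)$, reusing the functional-dereferencing picture developed for Theorem~\ref{thm:rewbpmcfg}. As there, $M_\alpha$ guesses a run of an NFA $\myautomaton_\alpha$ for $\reflang(\alpha)$ and reads the input left to right, but it now records the current memory contents $\mymem_1,\dots,\mymem_{\kbound}$ on its (nonerasing) stack as properly nested bracketed blocks -- well nested because every ref-word of a rewb is -- while the output $\deref(v)$ is matched against the input on the fly. By Proposition~\ref{prop:memind}, $\mymem_i$ is exactly the topmost $i$-bracketed block with inner markers ignored. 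Guessing $a\in\Sigma$ makes $M_\alpha$ match one input symbol and, if some cell is open, push $a$ (which extends all currently open cells at once, by nesting); guessing $\lbrack_i$ or $\rbrack_i$ pushes the corresponding marker.

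The interesting transition is a reference $\bs i$. First $M_\alpha$ locates the most recent cell-$i$ block by scanning its stack head down to the topmost $\lbrack_i$; this is well defined because the restriction forbidding $\bs i$ inside $(_i\,)_i$ guarantees that the most recent cell-$i$ capture is already complete when $\bs i$ is processed, and that $i$-blocks never nest. If no capturing group is currently open (a \emph{plain} reference), $M_\alpha$ simply sweeps the head upward through that block, matching its $\Sigma$-symbols against the input and skipping inner markers, then returns to the top; this is precisely the captured-reference-free mechanism behind the NESA inclusion of~\cite{mfcs2023}. Correctness of these cases follows by induction on the prefix length, mirroring Lemma~\ref{lem:base}: the stack keeps holding $\mymem_1,\dots,\mymem_{\kbound}$ and the consumed input equals $\deref(v_1\cdots v_i)$.

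The hard part, and the place where Definition~\ref{def:sc} is essential, is a \emph{captured} reference, i.e. a $\bs i$ occurring while some cells are open: then $\mymem_i$ must simultaneously be matched against the input and appended to every enclosing open cell, i.e. copied to the stack top. On a pop-free stack with a single head, copying an interior block to the top forces the head to leave the cell-$i$ block, push at the top, and then resume reading cell $i$ at an interior position that cannot be stored in the finite control; a general-purpose such copy cannot exist, since it would make every rewb describe an NESL, contradicting that the rewb class is not even contained in SL~\cite{mfcs2023}. The plan is to show that closed-starness tames exactly this interaction. Because no reference under a star may point outside that star, every reference occurring inside a loop is resolved by a capture made within the \emph{same} iteration; combined with Propositions~\ref{prop:tconst} and~\ref{prop:sconst}, this bounds the number of times a captured string is reused by $\tconst$ and the depth of the chain of captured references feeding any cell by $\sconst$. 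I would use these two bounds to realize the incorporation as a recursively nested sweep of depth at most $\sconst$, whose frames each carry only a cell index ($\leq\kbound$ choices) and a position within a \emph{fixed} loop body, so that the return after each nested sweep is determined by the $\leq\kbound$ bracket markers together with a depth-$\leq\sconst$ counter held in the finite control, rather than by unbounded positional data.

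I expect this incorporation step to be the main obstacle, both to implement on a nonerasing (pop-free) stack and to verify: the crux is proving that closed-starness caps the nesting of the reference-resolution sweeps so that a finite-state head discipline suffices, while each cell's block is traversed consistently with the intended $\mymem_i$ values. Everything else -- the simulation of $\myautomaton_\alpha$, the maintenance of $\myopen$, and the final matching of $\deref(v)$ against the input -- is routine bookkeeping already set up for Theorem~\ref{thm:rewbpmcfg}.
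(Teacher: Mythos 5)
Your construction diverges from the paper's at exactly the point you flag as the main obstacle, and there the proposal has a genuine gap. You maintain the invariant that each $i$-block on the stack physically contains $\mymem_i$ (so that, in your plain-reference case, a sweep that skips inner markers and matches $\Sigma$-symbols suffices), which forces you to eagerly incorporate $\mymem_i$ into every open cell when a captured reference is processed. As you yourself observe, a single pop-free head cannot interleave reading an interior block with pushing at the top, since after each push it must resume reading at an interior position that the finite control cannot remember; your proposed fix---frames carrying ``a position within a fixed loop body''---does not repair this, because a syntactic position in $\alpha$ does not determine a stack position: loop bodies contain starred subexpressions, so the captured blocks, and hence the resumption offsets, are unbounded. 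The paper's construction avoids the copy altogether: the stack stores the raw guessed ref-word including the number characters themselves (its rule (3) pushes $i$ onto the stack), so blocks are never eagerly expanded and $\mymem_i$ is recovered lazily by a recursive sweep at dereference time. This also shows your plain/captured dichotomy is off: even when no cell is open, the referenced block may contain number characters (e.g.\ processing $\bs 2$ in $(_1 a^\ast )_1 (_2 \bs 1 )_2\, \bs 2$), which must be expanded recursively rather than skipped, so \emph{every} dereference, not just the captured-reference case, goes through the recursive machinery.

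The second missing idea is the bookmark that lets the head return after a nested expansion. The paper counts the occurrences of $i$ between the matching $\lbrack_i \cdots \rbrack_i$ block and the reference being simulated, stores this count in the call-stack frame of the finite control, and after the sweep climbs back up decrementing it; this is well defined because Definition~\ref{def:rewb} forbids $\bs i$ inside $(_i\,)_i$ (so no $i$ occurs within the block itself), and it is finite-state because the count equals a value of $t$ from Definition~\ref{def:t}, bounded by $\tconst$ by Proposition~\ref{prop:tconst}, while the call-stack depth is bounded by $\sconst$ by Proposition~\ref{prop:sconst} (see Proposition~\ref{prop:det} and Lemma~\ref{lem:irules}). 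Your sketch invokes $\tconst$ and $\sconst$ with roughly the right intuitions, but your return mechanism---bracket markers plus a depth-bounded counter---cannot disambiguate among the several occurrences of $i$ that reference the same block. So your overall architecture (NFA simulation, on-the-fly matching of $\deref(v)$, bounded-depth recursive sweeps) is the right one, but without the lazy ref-word-on-the-stack representation and the occurrence-counting bookmark, the central step of your construction does not go through.
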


        We informally explain the proof idea (see Appendix~\ref{app:rewbscnesapf} for the formal proof).
Roughly, NESA can record information about the preceding process in its nonerasing stack (i.e., list of increasing length) and later review it by moving its stack pointer.
        \cite{mfcs2023} shows an explicit construction of a Nested-SA that is equivalent to the given rewb.
The proof we shall present is similar to this idea in recording a string from the ref-language of the given rewb in its stack memory and simulating a dereferencing process by stack pointer manipulations.
        The main difference is how we bookmark where to come back after a dereference within another dereference has finished.
Although the construction of \cite{mfcs2023} utilized a nested stack as a bookmarker for this purpose, it would not work for the current case because NESA does not have the ability to create a nested stack.    
In fact, as remarked above, there exists a rewb language that is not an SL, much less an NESL.  Thus, there is no NESA construction that works for all rewbs.
However, such a construction is possible for closed-star rewbs.

Our NESA does the bookmarking by implementing a ``call-stack'' for fetching bracketed strings in dereferencing by its states.
Roughly, when simulating a replacement of a number character $i$, the NESA (i) extends the call-stack, (ii) refers to the corresponding bracketed string $\lbrack_i \gamma \rbrack_i$ in its nonerasing stack, (iii) handles $\gamma$ recursively, and (iv) reverts the call-stack.
The subtle point in (ii) is that the NESA counts the number of occurrences of $i$ between the brackets and the starting $i$ and records it in the call-stack-top so that it can come back to the starting $i$ after handling $\gamma$ at (iii).  Namely, the count plays the role of a bookmarker.
The closed-star condition ensures that both the length of the call-stack and the count are bounded thus making this construction possible (cf.~Appendix~\ref{app:rewbscnesapf} for details).

\section{Related Work}
\label{sec:related}

%
The seminal work for regular expressions with backreferences is by Aho~\cite{10.5555/114872.114877}.  
%
%
%
C\^{a}mpeanu et al.~formalized the semantics of rewbs using an ordered tree~\cite{campeanu2003formal}.
More recently, Schmid formalized rewbs with the notions of ref-languages and dereferencing functions, and proposed \emph{memory automata} (MFA), a class of automata equivalent to rewbs~\cite{schmid2016characterising}.  
Freydenberger and Schmid examined deterministic variants of rewbs and MFAs~\cite{freydenberger2019deterministic}.
%
%
%
As stated in the introduction, the relationship between the expressive power of rewbs and those of other well-known language classes such as CFL, NESL, SL, IL and CSL has been investigated~\cite{campeanu2003formal, berglund2023re, mfcs2023}.

%
%
%
%

We want to refer to syntactic restrictions of rewbs related to this paper.
%
As mentioned in Definition~\ref{def:closed}, the closedness condition was originally proposed by C\^{a}mpeanu et al.~\cite{campeanu2003formal} and named \emph{no unassigned references} (NUR) by Chida and Terauchi~\cite{chida2022lookaheads}.  However, unlike us, these papers asserted the condition globally on the entire rewb and did not consider asserting the condition locally at subexpressions of the rewb.
Berglund and van~der~Merwe defined a \emph{non-circular} rewb, which excludes mutual references such as $((_1\bs 2 a)_1 (_2 \bs 1 \bs 1)_2)^\ast$~\cite{berglund2023re}.
Larsen introduced the notion of nested levels and showed a language $L_i$ expressible by a rewb at each nested level $i$ but not at any levels below $i$~\cite{larsen1998regular}.  Roughly, the nested level of a rewb represents the nesting depth of capturing groups and references.
Our recent work showed that every rewb without captured references describes an NESL and each $L_i$ of Larsen's hierarchy is also an NESL~\cite{mfcs2023}.  

%
The use of a reference $\bs i$ inside a star has been considered as a bizarre syntax increasing the difficulty of a rewb.
Freydenberger and Holldack proposed a syntactic condition called \emph{variable star-free} (vstar-free)~\cite{freydenberger2018document}, which asserts that every subexpression immediately under a star must be a pure regular expression.
The closed-star condition proposed in our paper is similar to vstar-free in that it also asserts a condition on the subexpression immediately under a star.  However, our closed-star condition is strictly more lenient than vstar-freeness: it is easy to see that every vstar-free rewb is closed-star, but for instance, $((_1 a^\ast )_1 \bs 1)^\ast$ is closed-star but not vstar-free.
%
Freydenberger and Schmid investigated the computational complexity of some basic problems such as intersection-emptiness, inclusion and equivalence of vstar-free rewbs that additionally satisfy a certain form of determinism~\cite{freydenberger2019deterministic}. 

%
%

%
Next, we discuss PMCFLs and MCFLs.
Vijay-Shanker et al.~\cite{vijay1987characterizing} and Weir~\cite{weir1988characterizing} proposed a model called \emph{linear context-free rewriting systems} (LCFRS) and showed its semilinearity and polynomial time recognition.
Independently, Seki et al.~\cite{seki1991multiple} introduced PMCFG and MCFG and examined those expressive powers and properties.
In particular, they showed that MCFG is equivalent to LCFRS.
It is widely recognized that MCFG is a natural generalization of CFG and has several characterizations besides LCFRS, such as 
\emph{context-free hyper-graph grammar} (CFHG)~\cite{bauderon1987graph,habel1987some} and 
\emph{deterministic tree-walking transducer} (DTWT)~\cite{aho1969translations}.\footnote{The name seems to have been introduced by \cite{engelfriet1991cfhgdtwt} and the original paper~\cite{aho1969translations} used another name \emph{tree automaton}.}
%
The equivalence between CFHG and DTWT is shown by Engelfriet and Heyker~\cite{engelfriet1991cfhgdtwt} and that between LCFRS and DTWT is shown by Weir~\cite{weir1992lcfrsdtwt}.
More recently, Denkinger characterized $m$-MCFG by an automata model called \emph{$m$-restricted tree stack automata}~\cite{denkinger2016automata}.
Seki et al.~\cite{seki1991multiple} also proved a pumping lemma for $m$-MCFLs.  However, this pumping lemma is somewhat weaker than ordinary ones (see Theorem~\ref{thm:pumpMCFL} in Appendix~\ref{app:rewbpmcfgpf}).
Kanazawa proved a pumping lemma of the usual form for a subclass of MCFLs called \emph{well-nested MCFLs}~\cite{kanazawa2009pumping}.
As remarked in the introduction, unary PMCFLs and MCFLs can be characterized by EDT0L systems and those of finite index~\cite{nishida2000grouped, kanazawa2010copying}.
In particular, Rozenberg and Vermeir investigated the properties of ET0L systems of finite index~\cite{rozenberg1978etol}.  One of their results is the equivalence of $\text{ET0L}_\text{FIN}$ and $\text{EDT0L}_\text{FIN}$.

\section{Conclusion}
\label{sec:conc}

In this paper, we have shown (1) that every rewb describes a unary-PMCFL (Theorem~\ref{thm:rewbpmcfg}), (2) however that there exists a rewb without captured references whose language is not an MCFL (Theorem~\ref{thm:notmcfg}).  As previously stated, this strictly improves the upper bound on the expressive power of rewbs.  As a corollary, we also obtain that unary-PMCFL is not contained in SL and NESL is not contained in MCFL (Corollary~\ref{cor:pmcfgsaandmcfgnesa}).
Furthermore, we sought for a simple syntactic restriction of rewbs that produces a sufficient condition for the inclusion by the class of unary-MCFLs and presented the closed-star condition, which asserts each subexpression immediately under a star be closed (Definition~\ref{def:sc}).
%
We have shown that the condition provides an upper bound on the number of times a rewb references the same captured string, and we utilized this property to prove (3) that the language class of closed-star rewbs falls inside the intersection of the class of unary-MCFLs and NESLs (Theorems~\ref{thm:rewbscmcfg} and \ref{thm:rewbscisnesa}).
From (2) and (3), we obtained that closed-star rewbs have weaker expressive power than general rewbs (Corollary~\ref{cor:scsmaller}).
Nonetheless, the closed-star condition is a simple syntactic condition, and despite the simplicity, as we have shown in the paper, closed-star rewbs can express interesting non-context-free languages like $L(((_1 (a+b)^\ast )_1\, c\,\bs 1)^\ast)$ for instance.
Note that the inclusions (1) and (3) are strict because a unary-CFG (i.e., linear CFG) and an NESA can easily describe $\{a^n b^n \,|\, n \geq 0 \}$ but rewbs cannot~\cite{berglund2023re}.


\begin{figure}[htb] \centering
    \begin{tikzcd}[column sep={between origins,11em}, row sep={between origins,4.4em}]
        \text{REWB{\textbackslash}CR} \arrow[r, "\text{\cite{mfcs2023}}"] \arrow[rd] &
        \text{NESL} \arrow[r] &
        \text{SL} \arrow[rd] &
        \\ 
        &
        \text{REWB} \arrow[ru, "/" marking, dash, "\text{\cite{mfcs2023}}"] \arrow[rd, mydeepred, dashed, crossing over, pos=0.9, "\text{Th.\,\ref{thm:rewbpmcfg}}"] &
        \text{MCFL} \arrow[r, "/" marking, dash] \arrow[from=ld, crossing over] \arrow[from=llu, mydeepred, "/" marking, dash, dashed, crossing over] \arrow[from=llu, phantom, mydeepred, shift left=4, pos=0.95, "\text{\scriptsize Th.\,\ref{thm:notmcfg}}"] &
        \text{IL} \\ 
        \text{CS-REWB} \arrow[ruu, mydeepred, dashed, crossing over, pos=0.19, "\text{Th.\,\ref{thm:rewbscisnesa}}"] \arrow[ru, mydeepred, dashed] \arrow[r, mydeepred, dashed, swap, "\text{Th.\,\ref{thm:rewbscmcfg}}"] &
        \text{unary-MCFL} \arrow[ru] \arrow[d, equal] \arrow[r] &
        \text{unary-PMCFL} \arrow[ru, crossing over, swap] \arrow[d, equal] &
        \\ [-1.5em]
        &
        \text{EDT0L$_{\text{FIN}}$} &
        \text{EDT0L} &
    \end{tikzcd}
\end{figure}

    The diagram above depicts the selected relations among the classes mentioned in the paper.  Here, an arrow denotes a strict inclusion and a negated line denotes being incomparable.  A citation refers to the evidence.  Red dashed arrows and lines indicate novel results proved in this paper, where for each strict inclusion, we show for the first time the inclusion itself in addition to the fact that it is strict.  ``REWB{\textbackslash}CR'' stands for the language class of rewbs without captured references and ``CS-REWB'' for that of closed-star rewbs.  The full relations are available in Appendix~\ref{app:table}.

A possible direction for future work is 
    to further improve the upper bound on rewbs and closed-star rewbs by considering unidirectionality in the sense that concatenate functions in our PMCFG and MCFG construction do tail-appending only.
%
Additionally, we would like to investigate open inclusions among various syntactic conditions of rewbs proposed by others and us, such as nested levels~\cite{larsen1998regular}, no captured references~\cite{mfcs2023}, non-circular~\cite{berglund2023re}, NUR~\cite{campeanu2003formal,carle2009extended,chida2022lookaheads}, vstar-free~\cite{freydenberger2018document,freydenberger2019deterministic}, and closed-star.



\bibliography{refs}

\appendix

\section{Omitted proofs in Section~\ref{sec:rewbpmcfg}}
\label{app:rewbpmcfgpf}

\begin{claim} \label{clm:op}
    Let $v$ be a string over $\Sigma \mydisjointu \brkset \mydisjointu \mynat$.  For all $i \in \mynat$, $i \in \myopen(v)$ if and only if there exist $x,y \in (\Sigma \mydisjointu \brkset \mydisjointu \mynat)^\ast$ such that $v = x\mathord{\lbrack_i}y$ and $\mathord{\lbrack_i},\mathord{\rbrack_i} \notin y$.
\end{claim}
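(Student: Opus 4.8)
The plan is to prove the equivalence by induction on the length $\zettaiti{v}$, peeling off the last symbol of $v$ and matching each case of the inductive definition of $\myopen$ against the decomposition condition. The guiding intuition is that $i \in \myopen(v)$ should hold exactly when the last $i$-bracket occurring in $v$ (either $\lbrack_i$ or $\rbrack_i$) is an \emph{opening} bracket; the stated decomposition $v = x\lbrack_i y$ with $\lbrack_i,\rbrack_i \notin y$ is precisely a reformulation of this, asserting the existence of an $\lbrack_i$ after which no $i$-bracket appears.

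For the base case $v = \varepsilon$ both sides are false: $\myopen(\varepsilon) = \emptyset$, and no decomposition $\varepsilon = x\lbrack_i y$ exists since $\varepsilon$ contains no symbol. For the inductive step I write $v = v'c$ and split on the last symbol $c$. When $c \in \Sigma$ or $c \in \mynat$, the definition gives $\myopen(v) = \myopen(v')$, and since $c$ is neither $\lbrack_i$ nor $\rbrack_i$, any valid suffix $y'$ for $v'$ extends to $y'c$ for $v$ (and conversely the last symbol of any valid $y$ must be this $c$, which can be stripped off); the equivalence thus transfers directly from the induction hypothesis applied to $v'$. When $c = \lbrack_j$, I further split on whether $i = j$: if $i = j$, both sides hold, witnessed by $x = v'$ and $y = \varepsilon$; if $i \neq j$, then $\myopen(v) = \myopen(v') \cup \{j\}$ leaves membership of $i$ unchanged, and since $\lbrack_j \neq \lbrack_i,\rbrack_i$ the same suffix argument transfers the equivalence from $v'$.

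The case $c = \rbrack_j$ is where the only real care is needed. For $i \neq j$ it is routine, as $\myopen(v) = \myopen(v') \setminus \{j\}$ does not affect membership of $i$ and $\rbrack_j$ is not an $i$-bracket. For $i = j$ we have $i \notin \myopen(v)$ directly from the definition, and I must show no valid decomposition exists: any purported $v = x\lbrack_i y$ ends in $\rbrack_i$, so either $y = \varepsilon$ (forcing the last symbol of $v$ to be $\lbrack_i \neq \rbrack_i$, a contradiction) or $y$ ends in $\rbrack_i$, contradicting $\rbrack_i \notin y$; hence both sides are false. I expect this subcase---ruling out that the designated $\lbrack_i$ could coincide with, or survive past, a trailing closing bracket---to be the only genuine (and still mild) obstacle, with the remainder of the induction being bookkeeping confirming that the last appended symbol is absorbed into or excluded from the suffix $y$ exactly as the clauses defining $\myopen$ prescribe.
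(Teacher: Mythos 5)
Your proof is correct. The paper states Claim~\ref{clm:op} without proof, treating it as immediate from the inductive definition of $\myopen$, and your induction on $\zettaiti{v}$ with a case split on the last symbol is exactly the routine argument being implicitly invoked: all cases check out, including the two points that need actual care (ruling out $y = \varepsilon$ when the last symbol is not $\lbrack_i$ before stripping it from the suffix, and the $c = \mathord{\rbrack_i}$ case where any purported witness $\lbrack_i$ is refuted because $y$ would either be empty or end in $\rbrack_i$).
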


We show that $\myopen$ of a string in $\reflang(\alpha)$ has certain properties (Lemma~\ref{lem:opassert}).
\begin{proposition} \label{prop:refworddyck} Let $v \in \reflang(\alpha)$ and $i \in \mynat$. If $v$ decomposes as $x_0 b_1 x_1 \cdots b_m x_m$ where $\lbrack_i, \rbrack_i \notin x_r$ and $b_r \in \{ \mathord{\lbrack_i}, \mathord{\rbrack_i} \}$, then $m$ is even, odd-numbered $b$ is $\lbrack_i$, and even-numbered $b$ is $\rbrack_i$.
\end{proposition}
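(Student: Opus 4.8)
The plan is to prove the statement by structural induction on the rewb $\alpha$, establishing for every $i \in \mynat$ and every $v \in \reflang(\alpha)$ the equivalent assertion that the subsequence of all occurrences of $\lbrack_i$ and $\rbrack_i$ in $v$ (read left to right) strictly alternates, begins with $\lbrack_i$, and ends with $\rbrack_i$. This is precisely the claim that $m$ is even with the odd-numbered $b_r$ equal to $\lbrack_i$ and the even-numbered $b_r$ equal to $\rbrack_i$, allowing the empty subsequence ($m = 0$) as the degenerate case. The one structural fact I would isolate and reuse is a gluing observation: if two strings each have an $i$-bracket subsequence of this alternating form (or empty), then so does their concatenation, since at the seam the pattern reads $\cdots \rbrack_i \lbrack_i \cdots$ and the even length of the first subsequence keeps the parity of positions aligned for the second.

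First I would dispatch the base cases $\reflang(a) = \syuugou{a}$, $\reflang(\varepsilon) = \syuugou{\varepsilon}$ and $\reflang(\bs j) = \syuugou{j}$: none of these strings contains an $i$-bracket, so $m = 0$ and the claim holds vacuously. The union case $\alpha_1 + \alpha_2$ is immediate, since $v$ lies in one of $\reflang(\alpha_1), \reflang(\alpha_2)$ and the induction hypothesis applies directly. For concatenation $\alpha_1 \alpha_2$ and for the star $\alpha^\ast$ (which is an iterated concatenation $v = w_1 \cdots w_p$ with each $w_j \in \reflang(\alpha)$), I would apply the induction hypothesis to each factor and then invoke the gluing observation to assemble the conclusion for $v$.

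The crux is the capturing-group case $(_j \alpha)_j$, where $\reflang((_j \alpha)_j) = \{\lbrack_j\}\reflang(\alpha)\{\rbrack_j\}$, so $v = \lbrack_j u \rbrack_j$ for some $u \in \reflang(\alpha)$. When $i \neq j$ the outer brackets contribute nothing to the $i$-bracket subsequence, which therefore coincides with that of $u$, and the induction hypothesis on $\alpha$ closes the case. When $i = j$, I would invoke the syntactic restriction of Definition~\ref{def:rewb}: the subexpression $\alpha$ inside $(_j \alpha)_j$ contains no capturing group $(_j\;)_j$, so $\reflang(\alpha)$, and hence $u$, contains no occurrence of $\lbrack_j$ or $\rbrack_j$ whatsoever. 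Consequently the entire $j$-bracket subsequence of $v$ is exactly $\lbrack_j \rbrack_j$, which alternates, starts with $\lbrack_j$, ends with $\rbrack_j$, and has even length $m = 2$. This last subcase is the main obstacle, or rather the only genuinely essential step: it is exactly where the non-nesting restriction on like-indexed capturing groups is used, since without it an inner $(_j\;)_j$ could introduce nested $j$-brackets and break the alternation. Everything else is routine, the only point needing care being the parity bookkeeping inside the gluing observation.
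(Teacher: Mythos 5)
Your proof is correct and takes essentially the same route as the paper, whose entire proof reads ``Immediate by induction on $\alpha$''; you have simply spelled out that induction, correctly identifying the capturing-group case with $i = j$ (where the non-nesting restriction of Definition~\ref{def:rewb} forbids $\lbrack_i, \rbrack_i$ inside $u$) as the only step that is not routine gluing. No gaps.
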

\begin{proof} Immediate by induction on $\alpha$.
\end{proof}
\begin{corollary} \label{cor:prefixdyck} For any prefix $p$ of a string in $\reflang(\alpha)$, if $p$ decomposes as $x_0 b_1 x_1 \cdots b_m x_m$ where $\lbrack_i, \rbrack_i \notin x_r$ and $b_r \in \{ \mathord{\lbrack_i}, \mathord{\rbrack_i} \}$, odd-numbered $b$ is $\lbrack_i$ and even-numbered $b$ is $\rbrack_i$.
\end{corollary}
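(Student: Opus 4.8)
The plan is to derive this directly from Proposition~\ref{prop:refworddyck} by viewing the bracket symbols appearing in the prefix $p$ as an initial segment of the bracket symbols appearing in some full string of $\reflang(\alpha)$ that extends $p$. The only content beyond the proposition is that a prefix may cut off inside a matched pair, which is precisely why the statement drops the ``$m$ is even'' conclusion while retaining the parity-of-positions conclusion.

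Concretely, first I would fix $v \in \reflang(\alpha)$ together with a factorization $v = p w$ witnessing that $p$ is a prefix. Given the decomposition $p = x_0 b_1 x_1 \cdots b_m x_m$ with $\lbrack_i, \rbrack_i \notin x_r$ and each $b_r \in \{\lbrack_i, \rbrack_i\}$, I would note that $b_1, \dots, b_m$ are exactly the occurrences of $\lbrack_i$ and $\rbrack_i$ in $p$, listed from left to right.

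Next, I would extend this to a decomposition of $v$. Since $p$ is an initial segment of $v$, every occurrence of $\lbrack_i$ or $\rbrack_i$ inside $p$ sits at the same position in $v$ and precedes every such occurrence contributed by $w$. Hence, writing the corresponding decomposition of $v$ as $v = x_0 b_1 x_1 \cdots b_m x'_m b_{m+1} \cdots b_M x_M$, where $x'_m$ absorbs $x_m$ together with the part of $w$ up to its first $\lbrack_i$ or $\rbrack_i$ and $b_{m+1}, \dots, b_M$ are the brackets coming from $w$, the first $m$ bracket symbols of $v$ coincide with $b_1, \dots, b_m$.

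Finally, I would apply Proposition~\ref{prop:refworddyck} to $v$: it asserts that for every $1 \le r \le M$ the symbol $b_r$ equals $\lbrack_i$ when $r$ is odd and $\rbrack_i$ when $r$ is even. Restricting to $r \le m$ yields exactly the claim for $p$. I expect no genuine obstacle here; the whole argument reduces to the bookkeeping observation that the brackets of a prefix form an initial segment (in left-to-right order) of the brackets of any extension, so the parity assignment of the proposition transfers verbatim. The only point to state with care is that the final bracket $b_m$ of $p$ need not be a closing bracket, which is why only the parity conclusion survives and not the evenness of the count.
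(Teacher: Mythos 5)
Your proposal is correct and matches the paper's intended argument: the paper states this as an immediate corollary of Proposition~\ref{prop:refworddyck} with no written proof, and the only content is exactly what you spell out, namely that the $i$-brackets of a prefix form an initial segment of the $i$-brackets of a full string in $\reflang(\alpha)$, so the parity conclusion transfers while the evenness of $m$ does not. Your care about the prefix possibly cutting inside a matched pair is precisely the right observation.
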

\begin{corollary} \label{cor:finalopisempty} For any $v \in \reflang(\alpha)$, $\myopen(v) = \emptyset$.
\end{corollary}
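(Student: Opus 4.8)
The plan is to fix an arbitrary $i \in \mynat$ and prove $i \notin \myopen(v)$; since this holds for every $i$, it gives $\myopen(v) = \emptyset$. The key first move is to re-express membership in $\myopen(v)$ via the combinatorial characterization already in hand: by Claim~\ref{clm:op}, we have $i \in \myopen(v)$ exactly when $v$ decomposes as $v = x\lbrack_i y$ with $\lbrack_i, \rbrack_i \notin y$. In other words, $i \in \myopen(v)$ says precisely that the rightmost occurrence in $v$ of a bracket from $\{\lbrack_i, \rbrack_i\}$ is an \emph{opening} bracket $\lbrack_i$ (a ``dangling'' open bracket). So it suffices to rule out the existence of such a trailing unmatched $\lbrack_i$.

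Next I would invoke Proposition~\ref{prop:refworddyck}, which governs the bracket structure of any $v \in \reflang(\alpha)$. Writing the decomposition of $v$ along its $i$-brackets as $x_0 b_1 x_1 \cdots b_m x_m$, where $\lbrack_i, \rbrack_i \notin x_r$ and each $b_r \in \{\lbrack_i, \rbrack_i\}$, the proposition tells us that $m$ is even and that the $b_r$ alternate $\lbrack_i \rbrack_i \cdots \lbrack_i \rbrack_i$; in particular the last bracket $b_m$, if one exists, is a \emph{closing} bracket $\rbrack_i$. Hence either $v$ contains no $i$-brackets at all (the case $m = 0$), or the rightmost $i$-bracket is $\rbrack_i$. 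In either case the decomposition demanded by Claim~\ref{clm:op} is impossible, so $i \notin \myopen(v)$. As $i \in \mynat$ was arbitrary, $\myopen(v) = \emptyset$.

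I do not expect any genuine obstacle: the corollary is essentially bookkeeping once the two supporting results are combined, and Proposition~\ref{prop:refworddyck} carries all the real content (the balancedness of the $i$-brackets in a ref-language word). The single point deserving care is simply aligning the two characterizations correctly, namely observing that the conclusion ``the last $i$-bracket is $\rbrack_i$'' from Proposition~\ref{prop:refworddyck} is exactly the negation of the ``trailing unmatched $\lbrack_i$'' description of $i \in \myopen(v)$ supplied by Claim~\ref{clm:op}. (One could alternatively prove the statement by a direct induction on $\alpha$ tracking $\myopen$, but routing through these two already-established facts is shorter and avoids re-deriving the bracket balancedness.)
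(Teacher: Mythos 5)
Your proof is correct and matches the paper's intended argument: the paper presents this statement as an immediate corollary of Proposition~\ref{prop:refworddyck}, combining it with the characterization of $\myopen$ in Claim~\ref{clm:op} exactly as you do (the last $i$-bracket, if any, must be $\rbrack_i$ by evenness and alternation, contradicting the trailing unmatched $\lbrack_i$ that $i \in \myopen(v)$ would require). No gaps; your aside that a direct induction on $\alpha$ would also work is accurate but unnecessary, as you note.
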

\begin{lemma} \label{lem:opassert} Let $p$ denote a nonempty prefix of some string in $\reflang(\alpha)$. If the last character of $p$ is $\lbrack_i$, then $i \notin \myopen(p^{\circ})$ and if it is $\rbrack_i$, then $i \in \myopen(p^{\circ})$, where $p^{\circ}$ denotes $p$ but with the last character erased.
\end{lemma}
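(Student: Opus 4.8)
The plan is to reduce the lemma to the bracket-alternation structure of prefixes (Corollary~\ref{cor:prefixdyck}) by way of the membership test for $\myopen$ supplied by Claim~\ref{clm:op}. Since Claim~\ref{clm:op} holds for arbitrary strings, it applies to $p^{\circ}$: we have $i \in \myopen(p^{\circ})$ exactly when the rightmost occurrence in $p^{\circ}$ of a bracket from $\{ \lbrack_i, \rbrack_i \}$ is an opening bracket $\lbrack_i$, and $i \notin \myopen(p^{\circ})$ when that rightmost bracket is $\rbrack_i$ or when $p^{\circ}$ contains no $i$-bracket at all. Thus the whole statement reduces to identifying the last $i$-bracket of $p^{\circ}$ in each of the two cases on the final character of $p$.

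To carry this out I would write the canonical decomposition $p = x_0 b_1 x_1 \cdots b_m x_m$ with respect to $\{ \lbrack_i, \rbrack_i \}$ as in Corollary~\ref{cor:prefixdyck}, so that each $x_r$ contains no $i$-bracket, each $b_r \in \{ \lbrack_i, \rbrack_i \}$, and (this is the only real input) the $b_r$ strictly alternate with $b_r = \lbrack_i$ for odd $r$ and $b_r = \rbrack_i$ for even $r$. In the case where the last character is $c = \lbrack_i$, we have $x_m = \varepsilon$ and $b_m = \lbrack_i$, forcing $m$ odd; erasing this final bracket yields $p^{\circ} = x_0 b_1 \cdots b_{m-1} x_{m-1}$, whose rightmost $i$-bracket, if any, is $b_{m-1}$, and since $m-1$ is even this is $\rbrack_i$. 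Either way the last $i$-bracket of $p^{\circ}$ is not $\lbrack_i$, so Claim~\ref{clm:op} gives $i \notin \myopen(p^{\circ})$. Symmetrically, when $c = \rbrack_i$ we get $b_m = \rbrack_i$, so $m$ is even and $\ge 2$; then $b_{m-1}$ with $m-1$ odd is $\lbrack_i$, and $b_{m-1}$ is the rightmost $i$-bracket of $p^{\circ}$, whence $i \in \myopen(p^{\circ})$.

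I do not expect a serious obstacle here: all the content sits in the alternation of brackets, which is exactly Corollary~\ref{cor:prefixdyck} (itself resting on the inductive Dyck property of $\reflang(\alpha)$ in Proposition~\ref{prop:refworddyck}), after which the lemma is a routine two-way case split. The points that merely need care are the degenerate subcase of the first case where $m=1$ and $p^{\circ}$ carries no $i$-bracket at all (still giving $i \notin \myopen(p^{\circ})$), ruling out the spurious possibility $m=1$ in the second case (which would force $b_1 = \lbrack_i \neq \rbrack_i$), and making sure the decomposition is invoked on $p$ as a genuine prefix of a word of $\reflang(\alpha)$ so that Corollary~\ref{cor:prefixdyck} is licensed.
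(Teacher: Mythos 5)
Your proof is correct, and all the degenerate cases you flag are handled properly: in the $\lbrack_i$ case, $b_m = \lbrack_i$ forces $m$ odd, so the last $i$-bracket of $p^{\circ}$ is either absent ($m=1$) or $b_{m-1} = \rbrack_i$; in the $\rbrack_i$ case, $m$ even and $m \geq 1$ correctly force $m \geq 2$, so $b_{m-1} = \lbrack_i$ exists. The route differs mildly from the paper's: the paper proves Lemma~\ref{lem:opassert} by induction on the number of occurrences of $\lbrack_i$ and $\rbrack_i$ in $p$, together with Corollary~\ref{cor:prefixdyck}, whereas you eliminate the induction entirely by invoking Claim~\ref{clm:op}, which characterizes $i \in \myopen(v)$ as ``the rightmost $i$-bracket of $v$, if any, is $\lbrack_i$.'' Both arguments rest on the same structural input --- the alternation of $i$-brackets in prefixes of $\reflang(\alpha)$-strings, which ultimately comes from Proposition~\ref{prop:refworddyck} --- and Claim~\ref{clm:op} is stated in the paper (though its one-line proof of this lemma does not cite it), so your use of it is licensed. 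What your version buys is locality and transparency: membership in $\myopen(p^{\circ})$ is decided by looking only at $b_{m-1}$, so the whole lemma collapses to a two-way case split on $b_m$ with no inductive bookkeeping; the paper's induction implicitly re-derives the same rightmost-bracket information step by step. The only caveat is that Claim~\ref{clm:op} itself needs a (routine) induction on the definition of $\myopen$, so the total work is comparable --- you have simply factored the induction into the claim, which is arguably the cleaner organization.
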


\begin{proof} By Corollary~\ref{cor:prefixdyck} and induction on the number of times $\lbrack_i$ and $\rbrack_i$ occur in $p$.
\end{proof}

\propmemind*
\begin{proof}
    The former case of (c) can be easily checked from the definition of $\mymem$.  
    %
    For the remaining cases, it suffices to prove the following claim.
    \begin{claim*}
    Let $u$ be in $(\Sigma \mydisjointu (\brkset \backslash \syuugou{\lbrack_j,\rbrack_j}))^\ast$ or $\rbrack_j$.
    If $j \in \myopen(v)$, then $\mymem_j(vu) = \mymem_j(v) g(u)$.  Otherwise, $\mymem_j(vu) = \mymem_j(v)$. 
    \end{claim*}
    \begin{claimproof}
        Suppose that $j \in \myopen(v)$.  By Claim~\ref{clm:op}, we have the decomposition $v = x \lbrack_j y$ where $\mathord{\lbrack_j},\mathord{\rbrack_j} \notin y$.  
        From the definition, $\deref^{\circ}(x \lbrack_j y)$ can be written as $x' \lbrack_j y'$ where $\mathord{\lbrack_j}, \mathord{\rbrack_j} \notin y'$. 
    Hence, we have $\mymem_j(vu) = \myfet_j( \deref^{\circ}(vu) ) = \myfet_j( \deref^{\circ}(v) u ) = \myfet_j( x' \lbrack_j y' u ) = g(y') g(u)$ and
    $\mymem_j(v) = \myfet_j( \deref^{\circ}(v) ) = \myfet_j( x' \lbrack_j y') = g(y')$.  
    Suppose that $j \notin \myopen(v)$.  
        If $\mathord{\lbrack_j} \notin v$, then $\deref^{\circ}(v)$ also has no $\mathord{\lbrack_j}$ and thus $\mymem_j(vu) = \mymem_j(v) = \varepsilon$.  
        Otherwise (i.e., $\mathord{\lbrack_j} \in v$), $v$ can be written as $x \lbrack_j y \rbrack_j z$ where $y$ has neither $\mathord{\lbrack_j}$ nor $\mathord{\rbrack_j}$ and $z$ has no $\mathord{\lbrack_j}$ by Claim~\ref{clm:op}.  We can write $\deref^{\circ}(x \lbrack_j y \rbrack_j z) = x' \lbrack_j y' \rbrack_j z'$ where $y'$ also has neither $\mathord{\lbrack_j}$ nor $\mathord{\rbrack_j}$.  
    We can easily check that $\mymem_j(vu) = \mymem_j(v) = g(y')$.
    \end{claimproof}

    Note that for case (b), $\deref^{\circ}(v \mymem_i(v)) = \deref^{\circ}(v) \mymem_i(v)$ by the definition of $\deref^{\circ}$ and thus $\mymem_j(vi) = \mymem_j(v\mymem_i(v))$ by the definition of $\mymem$.
\end{proof}

\lemom*
\begin{proof}
    Thanks to the NFA condition (a), the proof can be done by induction on the number of transitions from $q_{j}$ to some final state.  The base case is by Corollary~\ref{cor:finalopisempty}.  For the inductive case, assume that $q'$ can reach some final state such that the claim holds for $q'$. For all predecessor state $q$ of $q'$ (i.e., $q' \in \delta(q,c)$ for some $c \in \nfaalph$), we show that the claim also holds for $q$.  Take any strings $v_1, v_2$ such that $q_0 \mytto{v_1}{} q$ and $q_0 \mytto{v_2}{} q$. We have to prove $\myopen(v_1) = \myopen(v_2)$.  Suppose that $v_1$ extends to $v_1 c_1$ and $v_2$ to $v_2 c_2$ by moves from $q$ to $q'$.  Notice that the induction hypothesis assures $\myopen(v_1 c_1) = \myopen(v_2 c_2)$.  If $c_1 \in \Sigma \mydisjointu \mynat$, by the NFA condition (b), $c_2$ must not be a bracket and the claim holds. Otherwise, if $c_1 = \mathord{\lbrack_i}$, so is $c_2$ by the NFA condition (b), therefore $\myopen(v_1 \lbrack_i) = \myopen(v_2 \lbrack_i)$, namely $\myopen(v_1) \cup \syuugou{i} = \myopen(v_2) \cup \syuugou{i}$ holds. Because by Lemma~\ref{lem:opassert}, neither $\myopen(v_1)$ nor $\myopen(v_2)$ has $i$, $\myopen(v_1) = \myopen(v_2)$ holds as required. The case of $c_1 = \mathord{\rbrack_i}$ is similar.
\end{proof}


\begin{proof}[Proof of the main theorem (Theorem~\ref{thm:rewbpmcfg})]
    It suffices to prove $L(\alpha) = L(G_\alpha)$.
    For proving $L(\alpha) \subseteq L(G_\alpha)$, take any $w \in L(\alpha)$ and fix $v = v_1 \cdots v_{\zettaiti{v}} \in \reflang(\alpha)$ ($v_1, \dots, v_{\zettaiti{v}} \in \nfaalph$) such that $\deref(v) = w$. Let $m = \zettaiti{v}$.  Fix a transition sequence $q_{\myperm_0} = q_0 \myto{v_1}{} q_{\myperm_1} \cdots \myto{v_m}{} q_{\myperm_m} \in F$. For each $i = 1, \dots, m$, we have $q_{\myperm_{i-1}} \myto{v_i}{} q_{\myperm_i}$, so $G_{\alpha}$ has the rule $A_{\myperm_i} \to \varphi_{v_i}^{M_{\myperm_i}} A_{\myperm_{i-1}}$.  Combine those as
    \[
        S \to \symo A_{\myperm_m} \to \symo \cdot \varphi_{v_m}^{M_{\myperm_m}} A_{\myperm_{m-1}} \to \cdots \to \symo \cdot \varphi_{v_m}^{M_{\myperm_m}} \cdots \varphi_{v_1}^{M_{\myperm_1}} A_{\myperm_0} \to \symo \cdot \varphi_{v_m}^{M_{\myperm_m}} \cdots \varphi_{v_1}^{M_{\myperm_1}} (\varepsilon,\varepsilon,\dots,\varepsilon).
    \]
    By Lemma~\ref{lem:om} and $q_0 \mytto{v_1\cdots v_i}{} q_{\myperm_{i}}$ for $i = 1, \dots, m$, the rightmost derived expression equals $\symo \cdot \varphi_{v_m}^{\myopen(v_1\cdots v_m)} \cdots \varphi_{v_1}^{\myopen(v_1)} (\varepsilon,\varepsilon,\dots,\varepsilon)$. By Lemma~\ref{lem:base}, it also equals $\deref(v) = w$. Hence, $G_{\alpha}$ derives $w$.
    Conversely, take any $w \in L(G_\alpha)$ and fix a derivation $S \to \symo A_{\myperm_m} \to \symo \cdot f_{m}^{M_{\myperm_m}} A_{\myperm_{m-1}} \to \cdots \to \symo \cdot f_{m}^{M_{\myperm_m}} \cdots f_{1}^{M_{\myperm_1}} A_{\myperm_0} \to \symo \cdot f_{m}^{M_{\myperm_m}} \cdots f_{1}^{M_{\myperm_1}} (\varepsilon,\varepsilon,\dots,\varepsilon) = w$. 
    For each $i = 1, \dots, m$, $G_{\alpha}$ has the rule $A_{\myperm_i} \to f_{i}^{M_{\myperm_i}} A_{\myperm_{i-1}}$. From the construction of $G_{\alpha}$, there exists $v_i$ such that $\varphi_{v_i}^{M_{\myperm_i}} = f_i^{M_{\myperm_i}}$, $q_{\myperm_{i-1}} \myto{v_i}{} q_{\myperm_i}$, $q_{\myperm_0} = q_0$ and $q_{\myperm_m} \in F$. In particular, we have $v = v_1 \cdots v_m \in \reflang(\alpha)$. Because $q_0 \mytto{v_1 \cdots v_i}{} q_{\myperm_i}$ and Lemma~\ref{lem:om}, $M_{\myperm_i}$ coincides with $\myopen(v_1\cdots v_i)$. Therefore, $w = \symo\cdot\varphi_{v_m}^{\myopen(v_1\cdots v_m)} \cdots \varphi_{v_1}^{\myopen(v_1)} (\varepsilon, \varepsilon, \dots, \varepsilon)$ and the rightmost expression equals $\deref(v)$ by Lemma~\ref{lem:base}. Therefore, we have $w \in L(\alpha)$.
\end{proof}

In what follows, we shall prove Theorem~\ref{thm:notmcfg}.
We need the pumping lemma for $m$-MCFLs established by Seki et al.~\cite{seki1991multiple}:
\begin{restatable}{theorem}{thmpumpMCFL}
\label{thm:pumpMCFL}
    For any infinite $m$-MCFL $L$, there exist $z_0 \in L$ and strings 
    $u_1, \dots, u_{2m+1}$, $v_1, \dots, v_{2m}$ that satisfy the following conditions:
    $(\mathrm{a})$ $z_0 = u_1 u_2 \cdots u_{2m} u_{2m+1}$, $(\mathrm{b})$ at least one of $v_1, \dots, v_{2m}$ is nonempty, and $(\mathrm{c})$ $z_i = u_1 v_1^i u_2 v_2^i \cdots u_{2m} v_{2m}^i u_{2m+1} \in L$ for all $i \geq 1$.
\end{restatable}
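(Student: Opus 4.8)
The plan is to prove this by the standard derivation-tree-and-pigeonhole method, adapted to the tuple-valued derivations of MCFGs and exploiting the crucial \emph{non-duplication} restriction that distinguishes MCFGs from PMCFGs. Fix an $m$-MCFG $G = (N, \Sigma, \mathcal{F}, P, S)$ with $L(G) = L$, which I may assume reduced (every nonterminal both reachable from $S$ and productive). For a derivation tree $t$, each node is labelled by a nonterminal $A$ and carries the $d(A)$-tuple of strings derived at that node; the root carries a single string, since $d(S) = 1$. Because $G$ is finite, the branching degree and the constant strings occurring in rules are bounded, so there is a finite function $B$ such that any tree of height at most $h$ yields a string of length at most $B(h)$. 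Since $L$ is infinite, I would pick $z \in L$ with $|z| > B(|N|)$; then every derivation tree for $z$ has a root-to-leaf path of length exceeding $|N|$, and by pigeonhole some nonterminal $A$ labels two nodes $\nu_{\mathrm{out}}$ (ancestor) and $\nu_{\mathrm{in}}$ (descendant) on that path. Choosing the lowest such pair keeps the inner subtree small and, by a minimality argument, lets me arrange that the induced context strictly increases total length, so that condition (b) will hold.

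First I would isolate the two partial derivations determined by this repetition: the \emph{lower context} $C$, mapping the $d(A)$-tuple derived at $\nu_{\mathrm{in}}$ to the $d(A)$-tuple derived at $\nu_{\mathrm{out}}$, and the \emph{upper context} $D$, mapping the tuple at $\nu_{\mathrm{out}}$ to the final one-dimensional string. Here the non-duplication of MCFGs is decisive: both $C$ and $D$ are \emph{linear}, meaning each input component occurs at most once across the whole output. Consequently $C$ acts on a tuple $(\eta_1, \dots, \eta_d)$ by producing, in each output component, an alternation of constant strings with a subset of the $\eta_j$, each $\eta_j$ used at most once overall, and $D$ produces the final string as a single constant-and-component alternation. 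Writing $\theta$ for the tuple derived by the subtree at $\nu_{\mathrm{in}}$, splicing the lower context back in repeatedly gives genuine derivations, so $z_i := D[\,C^{i}[\theta]\,] \in L$ for every $i \geq 0$, with $z_1$ the original string and $z_0 = D[\theta]$ the pumped-down string.

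The main obstacle is to massage $z_i = D[\,C^{i}[\theta]\,]$ into the rigid interleaved shape $u_1 v_1^i u_2 v_2^i \cdots u_{2m} v_{2m}^i u_{2m+1}$ with exactly the same exponent on every segment. The difficulty is that, although $C$ is linear, a single application may \emph{permute or merge} the $d$ components, so the naive iteration $C^i$ need not deposit the new material of each step at a fixed position; a component-swap, for instance, produces alternating rather than repeated blocks. To overcome this I would observe that $C$ induces a map on the finite set of possible linear ``shapes'' (which input components land in which output components, together with their relative order), so some power $C^{o}$ stabilises this shape; after stabilisation each further application of $C^{o}$ inserts a bounded block of new symbols at the \emph{same} left and right positions of each of the $\leq d$ components, hence at at most $2d \leq 2m$ growth points of the final string. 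Taking $C^{o}$ as the pumping unit and reindexing so that all inserted blocks share a common exponent, I obtain a decomposition of $z_i$ into at most $2m$ pumped factors $v_1, \dots, v_{2m}$ separated by constants $u_1, \dots, u_{2m+1}$; padding any unused factors with $\varepsilon$ yields exactly $2m$ of them, and the length-increasing choice of the repetition guarantees that at least one $v_j$ is nonempty, establishing conditions (a)--(c).
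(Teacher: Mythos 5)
You should first note that the paper does not prove Theorem~\ref{thm:pumpMCFL} at all: it imports the statement from Seki et al.~\cite{seki1991multiple} (their Lemma~3.2) and only remarks that it is weaker than the usual pumping lemmas. So there is no in-paper proof to match; your proposal is an attempt to reprove a cited classical result. Its overall architecture is the right one and most of it can be made rigorous: the pigeonhole on a tall derivation tree, the linearity of the lower context $C$ and upper context $D$, the passage to a power of $C$ with \emph{idempotent} shape (this needs the finite-semigroup fact that every element has an idempotent power, not mere ``stabilisation''), and the consequence that, once the shape is idempotent, each variable-carrying component must feed into itself while all other placeholders denote constant components, so that new material accumulates only at the two ends of each of the at most $m$ components --- giving the $2m$ growth points and, after reindexing the exponent offset, the required shape in (a) and (c). The ``ends-only'' claim is asserted rather than proved in your sketch, but it is true and provable, so that omission is fixable.

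The genuine gap is condition (b). Your justification --- choose the lowest repeated pair and ``by a minimality argument \ldots arrange that the induced context strictly increases total length'' --- does not work, for two reasons. First, because MCFGs as defined here may \emph{erase} components, a context can strictly increase the total length of the $d(A)$-tuple while contributing nothing to the derived string: take $C(y_1,y_2)=(y_1,\,y_2c)$ with the upper context using only the first component; node-minimality does not exclude this pair when its removal changes $z$, and all your $v_j$ come out empty. Second, a purely permuting context, e.g.\ $C(y_1,y_2)=(y_2,y_1)$ with $\theta_1\neq\theta_2$, changes the string under every \emph{single} application (so minimality of the tree cannot excise it), yet its shape-stabilised power $C^{2}$ inserts nothing, and again every $v_j=\varepsilon$. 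So ``pumping changes $z$'' does not imply ``some $v_j$ is nonempty'' after passing to the idempotent power, and ``tuple length increases'' does not imply the string grows. The standard repair, which is the real content of Seki et al.'s argument, is to first pass to a non-erasing (information-lossless) normal form, so that every component of the repeated nonterminal's tuple provably reaches the final string, and then to select among the many occurrences of $A$ on a long path a pair whose composite context contributes at least one terminal symbol; such a pair exists in a node-minimal tree because terminal-free linear contexts range over a finite set of functions, and two equal composites could be excised without changing $z$, contradicting minimality. Without some device of this kind your proof establishes (a) and (c) but not (b), and (b) is exactly the delicate point --- as underscored by the fact that the stronger ``for all sufficiently long $z_0$'' version of this lemma is known to fail for MCFLs, so no naive strengthening of the pigeonhole choice can be expected to succeed.
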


Note that this pumping lemma is somewhat weaker than ordinary ones. It just assures existence of such a string $z_0$ and does not say ``for all but finitely many strings $z_0$.''

\thmnotmcfg*

\begin{proof} 
    Let $\alpha = (_1 a^\ast )_1\,(c \bs 1)^\ast$.
    Assume to the contrary that $L(\alpha)$ is an $m$-MCFL for some $m$. Let $r_m$ denote the regular expression $a^\ast c a^\ast c \cdots c a^\ast$, where $c$ occurs $2m$ times. Since $m$-MCFL is closed under intersections with regular languages~\cite{seki1991multiple}, $L \defeq L(\alpha) \cap L(r_m) = \syuugou[w (cw)^{2m}]{w \in \syuugou{a}^\ast}$ is also an $m$-MCFL. In what follows, we adopt the idea from the proof of Lemma~3.3 in \cite{seki1991multiple}. 
    Fix $z_0 = w(cw)^{2m} \in L$ and strings $u_1, v_1, \dots, u_{2m}, v_{2m}, u_{2m+1}$ that satisfy the conditions of Theorem~\ref{thm:pumpMCFL}.  We can write $z_1 = x(cx)^{2m}$ by condition (c).  Condition (b) implies that some $v_{i}$ is nonempty, hence $\zettaiti{x} > \zettaiti{w}$ must hold.
    On the other hand, observe that none of $v_1, \dots, v_{2m}$ has $c$ because if so, $z_1$ would have more than $2m$ $c$'s and contradict $z_1 \in L$. 
    By the pigeonhole principle, at least one $x$ of $2m+1$ $x$'s in $z_1 = x(cx)^{2m}$ has none of $2m$ possible $v_1, \dots, v_{2m}$, thus $x = w$ also must hold.
    This is a contradiction.
\end{proof}

\section{Omitted proofs in Section~\ref{sec:rewbscmcfg}}
\label{app:rewbscmcfgpf}

We cite the following definition of closedness (originally, NUR condition) of rewbs~\cite{chida2022lookaheads}:
\footnote{\cite{chida2022lookaheads} used the name ``no unassigned references (NUR)'' to refer to the property following the prior work~\cite{campeanu2003formal,carle2009extended} that also studied this property.  We prefer the name ``closed'' because, while those prior work globally enforced the property on the entire given rewb, we assert the property locally on each of certain subexpressions of the rewb (i.e., those inside stars) rather than globally on the entire rewb.}
\begin{definition2} 
    \label{def:closed}
    Let $\alpha$ be a rewb.
    \begin{alphaenumerate}
        \item Define the mapping $\mycapt$ as follows: $\mycapt(a) = \mycapt(\varepsilon) = \mycapt(\bs i) = \emptyset$, $\mycapt(\alpha^\ast) = \mycapt(\alpha)$, $\mycapt(\alpha_1 \alpha_2) = \mycapt(\alpha_1) \cup \mycapt(\alpha_2)$, $\mycapt(\alpha_1 + \alpha_2) = \mycapt(\alpha_1) \cap \mycapt(\alpha_2)$ and $\mycapt(\,(_i \alpha )_i\,) = \mycapt(\alpha) \cup \syuugou{i}$, where $a \in \Sigma$ and $i \in \mynat$. Roughly, $\mycapt(\alpha)$ is the set of number characters that are guaranteed to be bound in the continuation.\footnote{By ``continuation'', we mean that capturing groups can be viewed as variable definitions in programming.  For example, when $\alpha = \alpha_1 \alpha_2$ where $\alpha_1 = (_1 a^\ast )_1$ and $\alpha_2 = \bs 1$, we can view $\alpha_2$ as the continuation of $\alpha_1$ and $\alpha_1$ can be seen as defining the content of variable $1$ for its continuation.}
        \item Let $S$ denote a subset of $\mynat$.  Define the relation $\mycl{S}{\alpha}$ as follows: $\mycl{S}{a}$, $\mycl{S}{\varepsilon}$ and
            \begin{gather*}
                \infer{\mycl{S}{\bs i}}{i \in S}
                \quad
                \infer{\mycl{S}{\alpha^\ast}}{\mycl{S}{\alpha}}
                \quad
                \infer{\mycl{S}{\alpha_1 \alpha_2}}{\mycl{S}{\alpha_1} & \mycl{S\cup \mycapt(\alpha_1)}{\alpha_2}}
                \quad
                \infer{\mycl{S}{\alpha_1 + \alpha_2}}{\mycl{S}{\alpha_1} & \mycl{S}{\alpha_2}}
                \quad
                \infer{\mycl{S}{(_i \alpha )_i}}{\mycl{S}{\alpha}}
            \end{gather*}
            where $a \in \Sigma$ and $i \in \mynat$.  Roughly, $\mycl{S}{\alpha}$ if and only if $S$ contains every number character referenced externally by $\alpha$.\footnote{We say that a rewb subexpression references a number character $i$ externally if the corresponding capturing group $(_i\,)_i$ to be backreferenced is not in the subexpression.  Again, when $\alpha = \alpha_1 \alpha_2$ where $\alpha_1 = (_1 a^\ast )_1$ and $\alpha_2 = \bs 1$ for example, $\alpha_2$ references $1$ externally because the corresponding capturing group $(_1 a^\ast )_1$ is not inside $\alpha_2$ itself but is inside $\alpha_1$.}
            We call $\alpha$ \emph{closed} if $\mycl{\emptyset}{\alpha}$.
    \end{alphaenumerate}
\end{definition2}

We first verify the following properties of $\mycapt$ and closed rewbs.
\begin{restatable}{claim}{clmcapt} \label{clm:capt}
    Let $\alpha$ be a rewb and $i \in \mynat$.  If $i \in \mycapt(\alpha)$, then every $v \in \reflang(\alpha)$ has an $i$-bracketed substring $\lbrack_i v' \rbrack_i$.
\end{restatable}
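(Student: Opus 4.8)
The plan is to prove Claim~\ref{clm:capt} by structural induction on the rewb $\alpha$, carrying throughout the single observation that the property ``contains a substring of the form $\lbrack_i v' \rbrack_i$'' is monotone under taking superstrings: if $u$ is a substring of $w$ and $u$ already contains some $\lbrack_i v' \rbrack_i$, then so does $w$. With this remark in hand, every inductive case reduces to locating the required bracketed substring inside a ref-word of one of the immediate subexpressions and then invoking monotonicity to lift it to $v$ itself.

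The atomic and looping cases carry no content. For $\alpha = a$, $\alpha = \varepsilon$, or $\alpha = \bs i$ we have $\mycapt(\alpha) = \emptyset$, so the hypothesis $i \in \mycapt(\alpha)$ is never met and the claim holds vacuously. The star case $\alpha = \beta^\ast$ is the one that requires care: since $\varepsilon \in \reflang(\beta^\ast)$ and $\varepsilon$ contains no bracket at all, the claim forces $\mycapt(\beta^\ast) = \emptyset$, and the case is again vacuous. The capturing-group case $\alpha = (_j \beta)_j$ uses $\mycapt(\alpha) = \mycapt(\beta) \cup \syuugou{j}$ together with $\reflang(\alpha) = \syuugou{\lbrack_j}\reflang(\beta)\syuugou{\rbrack_j}$: writing $v = \lbrack_j v'' \rbrack_j$ with $v'' \in \reflang(\beta)$, if $i = j$ then $v$ is itself $i$-bracketed, and if instead $i \in \mycapt(\beta)$ the induction hypothesis applied to $v''$ supplies the bracketed substring, which survives inside $v$.

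The two genuinely informative cases are concatenation and alternation, and they correspond exactly to the two set operations appearing in the definition of $\mycapt$. For $\alpha = \alpha_1 \alpha_2$ with $\mycapt(\alpha) = \mycapt(\alpha_1) \cup \mycapt(\alpha_2)$, any $v \in \reflang(\alpha)$ factors as $v = v_1 v_2$ with $v_r \in \reflang(\alpha_r)$; the hypothesis $i \in \mycapt(\alpha)$ places $i$ in at least one of $\mycapt(\alpha_1), \mycapt(\alpha_2)$, and the induction hypothesis on the corresponding factor $v_r$ yields the substring, which persists in $v$. For $\alpha = \alpha_1 + \alpha_2$ with $\mycapt(\alpha) = \mycapt(\alpha_1) \cap \mycapt(\alpha_2)$, a word $v$ lies in $\reflang(\alpha_1)$ or in $\reflang(\alpha_2)$, and here the intersection is precisely what is needed: because $i$ belongs to \emph{both} $\mycapt(\alpha_1)$ and $\mycapt(\alpha_2)$, the induction hypothesis applies no matter which disjunct $v$ is drawn from. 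I expect this alternation step to be the crux of the argument — had $\mycapt$ used a union here, a word taken from the disjunct not containing $i$ would immediately furnish a counterexample. Thus no single step poses a serious computational obstacle; the whole proof amounts to verifying that the union/intersection/empty-set choices built into $\mycapt$ are exactly the ones that make the ``guaranteed bracketed substring'' invariant propagate through the induction.
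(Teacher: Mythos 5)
Your induction is exactly the ``straightforward induction on $\alpha$'' that the paper's one-line proof invokes, and the atomic, concatenation, alternation, and capturing-group cases are handled correctly --- in particular you are right that the intersection in $\mycapt(\alpha_1 + \alpha_2)$ is precisely what makes the alternation case go through for either disjunct. The problem is the star case, and it is logical rather than computational: you write that ``the claim forces $\mycapt(\beta^\ast) = \emptyset$,'' which derives the definition from the statement being proved. That is circular --- $\mycapt$ is a fixed, previously defined function, and its star clause must be read off Definition~\ref{def:closed}, not inferred from the desired conclusion. As it happens, the clause printed in the paper is $\mycapt(\alpha^\ast) = \mycapt(\alpha)$, under which your case analysis does not match the definition and, worse, the claim itself is literally false: for $\alpha = ((_1 a)_1)^\ast$ we get $1 \in \mycapt(\alpha) = \{1\}$, while $\varepsilon \in \reflang(\alpha)$ contains no $1$-bracketed substring.

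So you have in effect uncovered an erratum rather than proved this case: the intended clause must be $\mycapt(\alpha^\ast) = \emptyset$ (a loop may run zero times, so nothing inside it is guaranteed to be captured), and this is forced elsewhere in the paper as well --- under the printed clause, the rewb $((_1 a)_1)^\ast\,\bs 1$ would be closed, yet $v = 1 \in \reflang(((_1 a)_1)^\ast\,\bs 1)$ has an empty prefix before the number character, contradicting Proposition~\ref{prop:stronglymatching}. With the corrected clause the star case is vacuous exactly as you say, and your proof is complete and coincides with the paper's intended argument. To repair the write-up, replace the circular sentence with an appeal to the (corrected) definition, or state explicitly that you prove the claim for the definition with $\mycapt(\alpha^\ast) = \emptyset$ and flag the discrepancy with the printed one.
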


\begin{proof}
    By straightforward induction on $\alpha$.
\end{proof}
\begin{restatable}{proposition}{propstronglymatching} \label{prop:stronglymatching}
    Let $\alpha$ be a closed rewb and $i \in \mynat$.  If $x i y \in \reflang(\alpha)$, then the prefix $x$ has an $i$-bracketed substring $\lbrack_i v \rbrack_i$.
\end{restatable}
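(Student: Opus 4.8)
The plan is to prove a statement generalized in the set parameter $S$ of the relation $\mycl{S}{\alpha}$, since even when the whole rewb is closed (i.e.~$\mycl{\emptyset}{\alpha}$), the subexpressions met during a structural induction carry nonempty $S$'s. Concretely, I would establish the following: for every rewb $\alpha$ and every $S \subseteq \mynat$ with $\mycl{S}{\alpha}$, if $xiy \in \reflang(\alpha)$, then either the prefix $x$ contains an $i$-bracketed substring $\lbrack_i v \rbrack_i$, or $i \in S$. The proposition is then the instance $S = \emptyset$: since $i \notin \emptyset$, the disjunction collapses to the desired conclusion. I would prove the generalization by structural induction on $\alpha$, following the inference rules defining $\mycl{S}{\alpha}$.

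First I would dispatch the base and routine cases. For $a \in \Sigma$ and $\varepsilon$ no number character occurs in $\reflang(\alpha)$, so the hypothesis is vacuous; for $\bs j$ we have $\reflang(\bs j)=\syuugou{j}$, and if $i \neq j$ the claim is again vacuous, while if $i = j$ then $x = \varepsilon$ and the rule for $\mycl{S}{\bs j}$ forces $i = j \in S$. The star, disjunction, and capturing-group cases are routine because the relevant premise of $\mycl{S}{\alpha}$ keeps the same $S$: the distinguished occurrence of $i$ falls inside a single ref-word of the subexpression, the induction hypothesis applies to that ref-word, and any $i$-bracketed substring it produces stays within the prefix $x$. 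For $(_j \beta)_j$ one observes that the outer bracket symbols $\lbrack_j,\rbrack_j$ are never equal to the number character $i$, so the occurrence lies strictly inside the inner ref-word and prepending $\lbrack_j$ preserves the witness.

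The hard part will be the concatenation case $\alpha = \alpha_1 \alpha_2$, where $i$ may occur in the $\alpha_2$-part while its witnessing brackets belong to the $\alpha_1$-part. Here closedness gives $\mycl{S}{\alpha_1}$ and $\mycl{S \cup \mycapt(\alpha_1)}{\alpha_2}$. I would split $xiy = w_1 w_2$ with $w_1 \in \reflang(\alpha_1)$ and $w_2 \in \reflang(\alpha_2)$. If the distinguished $i$ lies in $w_1$, then $w_1 = x i y'$ and the induction hypothesis on $\alpha_1$ finishes immediately. If it lies in $w_2$, then $w_2 = x' i y$ with $x = w_1 x'$, and the induction hypothesis on $\alpha_2$ yields either an $i$-bracketed substring inside $x'$ (hence inside $x$), or $i \in S \cup \mycapt(\alpha_1)$. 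The genuinely new subcase is $i \in \mycapt(\alpha_1)$, and this is exactly where Claim~\ref{clm:capt} is needed: it guarantees that $w_1$ itself already contains an $i$-bracketed substring $\lbrack_i v \rbrack_i$, which lies in the prefix $x = w_1 x'$, completing the argument. Thus the crux of the proof is threading the two halves of the disjunction through the induction and invoking Claim~\ref{clm:capt} to convert membership in $\mycapt(\alpha_1)$ into a concrete bracketed substring sitting in the prefix.
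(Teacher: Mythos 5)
Your proposal is correct and takes essentially the same route as the paper: the paper proves the identical strengthening (stated as ``if $\mycl{S}{\alpha}$ then for all $i \notin S$, any decomposition $v = x\,i\,y$ of $v \in \reflang(\alpha)$ puts an $i$-bracketed substring in $x$,'' which is just the quantified form of your disjunction) by induction on the derivation of $\mycl{S}{\alpha}$, which coincides with your structural induction since the rules are syntax-directed. In particular, your treatment of the concatenation case matches the paper's exactly, including the split on where the distinguished $i$ falls and the use of Claim~\ref{clm:capt} to turn $i \in \mycapt(\alpha_1)$ into a concrete $\lbrack_i v' \rbrack_i$ inside the first factor.
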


\begin{proof}
    We prove that the following stronger statement by induction on the depth of the derivation tree for $\mycl{S}{\alpha}$: if $\mycl{S}{\alpha}$ holds, then $\forall v \in \reflang(\alpha).\;\forall i \notin S.\;\forall x,y.\; v = x i y \Longrightarrow x$ has an $i$-bracketed substring.  
    We consider the case $\alpha = \alpha_1 \alpha_2$ (the other cases are straightforward).   
    Take any $v = v_1 v_2 \in \reflang(\alpha)$ where $v_l \in \reflang(\alpha_l)$ for $l = 1, 2$ and any $i \notin S$.  Assume $i \in v$. If $i \in v_1$, then the claim holds by the induction hypothesis for $\alpha_1$.  Otherwise (i.e., $i \in v_2$), we have the following subcases: 
    for the case $i \notin \mycapt(\alpha_1)$, the claim holds by the induction hypothesis for $\alpha_2$, and for the other case $i \in \mycapt(\alpha_1)$, by Claim~\ref{clm:capt}, $v_1$ has an $i$-bracketed substring.
\end{proof}


The rest of this appendix is devoted to the proof of Theorem~\ref{thm:rewbscmcfg}.
Let $\alpha$ be a rewb using number characters each of whose value is bounded by $\kbound$.  
Let $\nfaalph = \Sigma \mydisjointu \syuugou[\lbrack_i,\rbrack_i,i]{1 \leq i \leq \kbound}$.
\begin{definition2} \label{def:t}
    Let $v = v_1 \cdots v_{\zettaiti{v}} \in \nfaalph^\ast$ where $v_1, \dots, v_{\zettaiti{v}} \in \nfaalph$.  Define $t_i^v$ (or $t_i$ when $v$ is clear) for $i = 0, 1, \dots, \zettaiti{v}$ inductively as follows: (a) $t_{\zettaiti{v}} = (0, \dots, 0)$ ($\kbound$ times), (b) if $v_i$ is a character in $\Sigma$ or $\rbrack_j$, then $t_{i-1} = t_i$, (c) if $v_i$ is $\lbrack_j$, then $t_{i-1,k} = t_{i,k}$ for $k \neq j$ and $t_{i-1,j} = 0$, and (d) if $v_i$ is a number character $j$, then $t_{i-1,k} = t_{i,k}$ for $k \neq j$ and $t_{i-1,j} = t_{i,j} + 1$.
    Let $\tau$ and $\tau'$ be tuples of dimension $\kbound$.  We write $\tau \tarrow{c} \tau'$ when $\tau$ is determined by $c \in \nfaalph$ and $\tau'$ by the inductive step (i.e., as how $t_{i-1}$ is determined by $v_i$ and $t_i$).
\end{definition2}

\begin{restatable}{proposition}{proptconst} \label{prop:tconst}
    Suppose that $\alpha$ is closed-star.
    There exists $\tconst$ such that for all $v \in \reflang(\alpha)$, $k \in \{1, \dots, \kbound\}$ and $i \in \{ 0, 1, \dots, \zettaiti{v} \}$, $t_{i,k} \leq \tconst$ holds.
\end{restatable}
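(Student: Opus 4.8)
The plan is to reinterpret $t_{i,k}$ as a purely combinatorial quantity counting references in an "epoch", and then to bound it by structural induction on the rewb, invoking closedness only at stars (via Proposition~\ref{prop:stronglymatching}).

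First I would unwind Definition~\ref{def:t} to observe that $t_{i,k}$ is exactly the number of occurrences of the number character $k$ in the suffix $v_{i+1}\cdots v_{\zettaiti{v}}$ that precede the first occurrence of $\lbrack_k$ in that suffix (or all of them, if no such $\lbrack_k$ exists), since each $k$ increments and each $\lbrack_k$ resets cell $k$ as one reads backward. Equivalently, partition $v$ into maximal factors delimited by the occurrences of $\lbrack_k$ (and by the two word boundaries); then the value $t_{i,k}$ is the number of $k$'s lying in the suffix of such a factor that starts just after position $i$, so the maximum of $t_{i,k}$ over all $i$ equals $e_k(v)$, the largest number of $k$'s occurring in any single $\lbrack_k$-free factor of $v$. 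Hence it suffices to bound $e_k(v)$ uniformly over $v\in\reflang(\alpha)$, and then set $\tconst := \max_{1\le k\le\kbound}\sup_{v\in\reflang(\alpha)} e_k(v)$.

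For the induction I would track, for each subexpression $\gamma$ of $\alpha$ and each $k\le\kbound$, three suprema over $w\in\reflang(\gamma)$: the epoch maximum $e_k(w)$ above, the \emph{leading} count $\ell_k(w)$ (number of $k$'s before the first $\lbrack_k$ of $w$), and the \emph{trailing} count $r_k(w)$ (number of $k$'s after the last $\lbrack_k$ of $w$), and I would show all three are finite. The base cases ($a$, $\varepsilon$, $\bs i$), the union $\gamma_1+\gamma_2$, and the capturing group $(_i\gamma')_i$ are routine: a group only inserts a delimiter $\lbrack_i$ (resetting cell $i$) and is transparent to cells $k\ne i$. For concatenation $\gamma_1\gamma_2$ only one epoch can straddle the split point, and its $k$-count is bounded by $r_k(\gamma_1)+\ell_k(\gamma_2)$, while the leading and trailing counts propagate in the evident way; all these stay finite by the induction hypothesis.

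The crux is the star case $\gamma=\beta^\ast$, and this is exactly where the closed-star hypothesis enters. Since $\alpha$ is closed-star, its star bodies are closed, and closed-star-ness is inherited by every subexpression (the starred subexpressions of a subexpression are among those of $\alpha$, by Definition~\ref{def:sc}), so the induction hypothesis legitimately applies to $\beta$. By Proposition~\ref{prop:stronglymatching}, because $\beta$ is closed, every occurrence of $k$ in a word of $\reflang(\beta)$ is preceded by a matching $\lbrack_k$; thus $\ell_k(w)=0$ for every $w\in\reflang(\beta)$. Writing a word of $\reflang(\beta^\ast)$ as $w_1\cdots w_n$ with each $w_j\in\reflang(\beta)$, this vanishing-leading-count forces that no epoch can accumulate $k$'s across an iteration boundary: an iteration containing no $\lbrack_k$ contains no $k$ at all, and the initial portion of any iteration before its first $\lbrack_k$ contains no $k$. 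Consequently every $\lbrack_k$-free factor of $w_1\cdots w_n$ carries at most as many $k$'s as a factor already occurring inside a single $w_j$, giving $e_k(\beta^\ast)\le e_k(\beta)$, $\ell_k(\beta^\ast)=0$, and $r_k(\beta^\ast)\le e_k(\beta)$, all finite. I expect this star case to be the main obstacle: in a general (non-closed-star) rewb a star body may emit a $k$ with no accompanying $\lbrack_k$ — as in $(_1 a^\ast)_1(c\bs 1)^\ast$ — so iterating piles arbitrarily many references onto one captured string and $e_k$ blows up (consistent with Theorem~\ref{thm:notmcfg}); the closed-star condition is precisely what rules this out, and the remaining effort is the bookkeeping that turns $\ell_k=0$ on star bodies into the cross-iteration non-accumulation claim.
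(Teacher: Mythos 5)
Your proposal is correct, and it rests on the same skeleton as the paper's proof --- structural induction on the rewb, with Proposition~\ref{prop:stronglymatching} applied to the (closed) star body to prevent cross-iteration accumulation --- but the bookkeeping is genuinely different. The paper never reformulates $t_{i,k}$: it strengthens the induction hypothesis to a parameterized quantity $t^{v,(p_1,\dots,p_\kbound)}$ with arbitrary initial counter values, handles concatenation by threading the values computed on $v_2$ into the analysis of $v_1$ (together with a monotonicity observation), and handles the star case by an inner induction on the number of iterates, showing via Proposition~\ref{prop:stronglymatching} that at each iterate boundary the counter for cell $j$ is $0$ (if $j$ occurs later) or in $\{0,p_j\}$ (if not). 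You instead characterize $t_{i,k}$ exactly and statically --- as the number of $k$'s in the portion of the maximal $\lbrack_k$-free factor lying to the right of position $i$, so that $\max_i t_{i,k}$ is the largest $k$-count of any $\lbrack_k$-free factor --- and then track the three suprema $(e_k,\ell_k,r_k)$ through the induction, with closedness entering only as ``$\ell_k \equiv 0$ on star bodies,'' which immediately kills accumulation across iterates (an iterate without $\lbrack_k$ has no $k$ at all, and the pre-$\lbrack_k$ prefix of any iterate has no $k$). Your reading of Definition~\ref{def:t} is accurate, the concatenation bound $r_k(\gamma_1)+\ell_k(\gamma_2)$ for the unique straddling factor is right, and the star-case conclusion $e_k(\beta^\ast)\le e_k(\beta)$ is justified because the trailing factor is itself a factor, so $r_k(w)\le e_k(w)$ (a small point worth making explicit in a write-up). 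What each approach buys: yours is position-free and arguably more transparent, replacing the paper's inner induction on iterates by a direct factor analysis; the paper's parameterized formulation, on the other hand, is reused essentially verbatim for the companion bound in Proposition~\ref{prop:sconst}, where your factor-count characterization would not transfer directly, since the update for $s$ in Definition~\ref{def:s} takes a maximum over the currently $\myopen$ cells rather than incrementing a per-cell count.
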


\begin{proof}
    Let $t^{v,(p_1,\dots,p_\kbound)}$ denote $t^{v}$ but with the initial value $(p_1,\dots,p_\kbound)$ instead of $(0,\dots,0)$.  We prove by induction on $\alpha$ the following stronger claim: for all $p_1, \dots, p_\kbound$, there exists $\tconst$ such that for all $v \in \reflang(\alpha), k$ and $i$, the inequality $t^{v,(p_1,\dots,p_\kbound)}_{i,k} \leq \tconst$ holds.  Notice that letting all $p_1, \dots, p_\kbound$ be zeros coincides with the claim of the proposition.
    We consider the cases $\alpha = \alpha_1 \alpha_2$ and $\alpha = \alpha_1 ^\ast$ (the other cases are straightforward).
    
        Case $\alpha = \alpha_1 \alpha_2$:  Take $p_1, \dots, p_\kbound$.  Fix $\tconst_2$ by the induction hypothesis for $\alpha_2$ over $p_1,\dots,p_\kbound$ and $\tconst_1$ by the induction hypothesis for $\alpha_1$ over $\tconst_2, \dots, \tconst_2$ ($\kbound$ times).  Introduce $v \in \reflang(\alpha), k$ and $i$.  Decompose $v = v_1 v_2$ where $v_l \in \reflang(\alpha_l)$ for $l = 1, 2$.  
            Observe by easy induction on $i$ that $t^{v_1,(p'_1,\dots,p'_\kbound)}_{i,k} \leq t^{v_1,(\tconst_2,\dots,\tconst_2)}_{i,k}$ for $i = 0,1,\dots,\zettaiti{v_1}$ where $(p'_1,\dots,p'_\kbound) = t^{v_2,(p_1,\dots,p_\kbound)}_0$.  Therefore, we obtain $t^{v,(p_1,\dots,p_\kbound)}_{i,k} = t^{v_1,(p'_1,\dots,p'_\kbound)}_{i,k} \leq t^{v_1,(\tconst_2,\dots,\tconst_2)}_{i,k} \leq \tconst_1$ by the induction hypothesis for $\alpha_1$.  Take $\tconst = \max\syuugou{\tconst_1, \tconst_2}$.

        Case $\alpha = \alpha_1 ^\ast$:  Take $p_1, \dots, p_\kbound$.  Fix $\tconst_1^{\bar{p_1},\dots,\bar{p_\kbound}}$ by the induction hypothesis of $\alpha_1$ for $\bar{p_k} \in \syuugou{p_k,0}$ where $k \in \syuugou{ 1, \dots, \kbound}$.  Let $\tconst$ be the maximum of the union of $\{ \tconst_1^{\bar{p_1},\dots,\bar{p_\kbound}} \mid \bar{p_k} \in \{ p_k,0 \}, k \in \{ 1, \dots, \kbound \} \}$ and $\syuugou[p_k]{k \in \syuugou{1,\dots,\kbound}}$.
            Take $v = v_1 \cdots v_m \in \reflang(\alpha)$ where each $v_r \in \reflang(\alpha_1)$.  We prove by induction on $m$. The base case $m = 0$ is obvious.  For the inductive case, assume that the inequality of the claim holds for $v_2 \cdots v_m$.  For each number character $j$, if $j \in v_2 \cdots v_m$, then fix the smallest $r \geq 2$ such that $j \in v_r$.  Since $\alpha_1$ is closed, by Proposition~\ref{prop:stronglymatching}, there is a $j$-bracketed substring in $v_r$ that appears before the $j$.  Therefore, from Definition~\ref{def:t}, $t^{v,(p_1,\dots,p_\kbound)}_{\zettaiti{v_1},j}$ is zero.  Otherwise (i.e., $j \notin v_2 \cdots v_m$), $t^{v,(p_1,\dots,p_\kbound)}_{\zettaiti{v_1},j}$ is zero or $p_j$.
          Therefore, 
            there exists $\bar{p_k} \in \{ p_k, 0\}$ for each $k \in \{1,\dots,\kbound\}$ such that for all $l \in \{ 0,1,\dots, \zettaiti{v_1} \}$,
            $t^{v,(p_1,\dots,p_\kbound)}_{l,j} = t^{v_1,(\bar{p_1},\dots,\bar{p_\kbound})}_{l,j}$.
          Hence, $\tconst$ is sufficient for the claim.
\end{proof}

\begin{definition2} \label{def:s}
    Let $v = v_1 \cdots v_{\zettaiti{v}} \in \nfaalph^\ast$ where $v_1, \dots, v_{\zettaiti{v}} \in \nfaalph$.  Define $s_i^v$ (or $s_i$ when $v$ is clear) for $i = 0, 1, \dots, \zettaiti{v}$ inductively as follows: (a) $s_{\zettaiti{v}} = (0, \dots, 0)$ ($\kbound$ times), (b) if $v_i$ is a character in $\Sigma$ or $\rbrack_j$, then $s_{i-1} = s_i$, (c) if $v_i$ is $\lbrack_j$, then $s_{i-1,k} = s_{i,k}$ for $k \neq j$ and $s_{i-1,j} = 0$, and (d) if $v_i$ is a number character $j$, then $s_{i-1,k} = s_{i,k}$ for $k \neq j$ and $s_{i-1,j}$ is $\max \{ 1, s_{i,j} \}\cup \syuugou[s_{i,l}+1]{l \in \myopen(v_1\cdots v_i)}$.
    Let $\sigma$ and $\sigma'$ be tuples of dimension $\kbound$.  We write $\sigma \sarrow{c} \sigma'$ when $\sigma$ is determined by $c \in \nfaalph$ and $\sigma'$ by the inductive step as in Definition~\ref{def:t}.
    
\end{definition2}

\begin{restatable}{proposition}{propsconst} \label{prop:sconst}
    Suppose that $\alpha$ is closed-star.
    There exists $\sconst$ such that for all $v \in \reflang(\alpha)$, $k \in \{1, \dots, \kbound\}$ and $i \in \{ 0, 1, \dots, \zettaiti{v} \}$, $s_{i,k} \leq \sconst$ holds.
\end{restatable}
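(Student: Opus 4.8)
The plan is to follow the proof of Proposition~\ref{prop:tconst} almost verbatim, since $s$ and $t$ obey identical recurrences except in case~(d) of Definitions~\ref{def:s} and~\ref{def:t}. As there, I would prove the stronger statement by induction on $\alpha$: writing $s^{v,(p_1,\dots,p_\kbound)}$ for the sequence defined as $s^v$ but with the initial value $s_{\zettaiti{v}} = (p_1,\dots,p_\kbound)$, I claim that for every choice of $p_1,\dots,p_\kbound$ there is a bound $\sconst$ with $s^{v,(p_1,\dots,p_\kbound)}_{i,k}\le\sconst$ for all $v\in\reflang(\alpha)$, $k$ and $i$; taking all $p_k=0$ recovers the proposition. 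The base cases ($a$, $\varepsilon$, $\bs i$) and the sum case are immediate, exactly as for $t$.

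Before the inductive cases I would record two facts that replace the per-cell reasoning available for $t$. First, \textbf{monotonicity}: if $p_k\le p'_k$ for all $k$ then $s^{v,(p_1,\dots,p_\kbound)}_{i,k}\le s^{v,(p'_1,\dots,p'_\kbound)}_{i,k}$, by a trivial induction on $\zettaiti{v}-i$ using that each of (a)--(d) is monotone in the incoming tuple (case~(d) is a maximum of monotone terms). Second, since each factor of $\reflang(\alpha_1)$ has balanced brackets (Corollary~\ref{cor:finalopisempty}), the set $\myopen$ occurring in case~(d) agrees with the one computed inside the subexpression, so the recurrence restricted to a factor coincides with the standalone recurrence for that subexpression. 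With these, the \emph{concatenation} case $\alpha=\alpha_1\alpha_2$ goes exactly as for $t$: fix $\sconst_2$ from the hypothesis for $\alpha_2$ over $(p_1,\dots,p_\kbound)$ and $\sconst_1$ from the hypothesis for $\alpha_1$ over $(\sconst_2,\dots,\sconst_2)$; writing $(p'_1,\dots,p'_\kbound)=s^{v_2,(p_1,\dots,p_\kbound)}_0$ we have $p'_l\le\sconst_2$, so by monotonicity $s^{v_1,(p'_1,\dots,p'_\kbound)}_{i,k}\le s^{v_1,(\sconst_2,\dots,\sconst_2)}_{i,k}\le\sconst_1$, and $\sconst=\max\{\sconst_1,\sconst_2\}$ works. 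The \emph{star} case $\alpha=\alpha_1^\ast$ is likewise identical: for $v=v_1\cdots v_m$, the closed-star hypothesis makes $\alpha_1$ closed, so by Proposition~\ref{prop:stronglymatching} every cell $j$ referenced in $v_2\cdots v_m$ is preceded by a matching $\lbrack_j$, whence case~(c) forces the boundary value $s^{v,(p_1,\dots,p_\kbound)}_{\zettaiti{v_1},j}$ to lie in $\{0,p_j\}$; running over the finitely many boundary vectors $\bar p\in\prod_j\{0,p_j\}$ and taking the maximum of the corresponding $\sconst_1^{\bar p}$ together with the $p_k$ yields a single $\sconst$, and an induction on $m$ finishes as in Proposition~\ref{prop:tconst}.

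The genuinely new difficulty, and the step I expect to be the main obstacle, is the \emph{capturing-group} case $\alpha=(_i\alpha_1)_i$, which is ``straightforward'' for $t$ only because case~(d) of $t$ updates cell $j$ in isolation, whereas case~(d) of $s$ couples all cells through $\myopen$. For a word $\lbrack_i w\rbrack_i$ with $w\in\reflang(\alpha_1)$, the crucial observation is that the rewb syntax (Definition~\ref{def:rewb}) forbids $\alpha_1$ from containing $(_i\,)_i$ or $\bs i$, so $w$ contains none of $\lbrack_i,\rbrack_i,i$; consequently cell $i$ is untouched throughout $w$ and stays fixed at its incoming value $p_i$, while being permanently \emph{open} there. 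Thus every application of case~(d) inside $w$ carries the extra term $p_i+1$ in its maximum. I would show by induction on position that this phantom open cell inflates every other component only additively, namely $s^{(_i\alpha_1)_i,(p_1,\dots,p_\kbound)}_{\ell,j}\le s^{\alpha_1,(p_1,\dots,p_\kbound)}_{\ell,j}+p_i$ at corresponding positions $\ell$ inside $w$; the inductive step compares the two maxima termwise, using that a freshly assigned standalone component is always at least $1$ to dominate the phantom term $p_i+1$. The closing $\lbrack_i$ then resets cell $i$ via case~(c) and the trailing $\rbrack_i$ does nothing, so fixing $\sconst_1$ from the hypothesis for $\alpha_1$ over $(p_1,\dots,p_\kbound)$, the bound $\sconst=\sconst_1+p_i$ (which also dominates $p_i$) suffices. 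Establishing this additive-inflation inequality cleanly is where the care lies; everything else is a transcription of the argument for $\tconst$.
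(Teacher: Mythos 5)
Your proposal is correct and takes essentially the same route as the paper's proof: the parametrized strengthening $s^{v,(p_1,\dots,p_\kbound)}$, induction on $\alpha$, the monotonicity-in-the-initial-tuple observation, the use of Corollary~\ref{cor:finalopisempty} to identify the in-context recurrence on a factor with the standalone one, and the concatenation and star cases mirroring Proposition~\ref{prop:tconst} are exactly what the paper does (it writes out only the concatenation case and declares the remaining cases similar). Your additive-inflation analysis for the capturing-group case, $s^{(_i\alpha_1)_i,(p_1,\dots,p_\kbound)}_{\ell,j}\le s^{\alpha_1,(p_1,\dots,p_\kbound)}_{\ell,j}+p_i$, exploiting that cell $i$ stays open with constant value $p_i$ throughout $w$ and that every freshly assigned component is at least $1$, is a sound and correct filling-in of a detail the paper leaves implicit.
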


\begin{proof}
    As with Proposition~\ref{prop:tconst}, let $s^{v,(p_1,\dots,p_\kbound)}$ denote $s^{v}$ but with the initial value $(p_1,\dots,p_\kbound)$ instead of $(0,\dots,0)$.  We prove by induction on $\alpha$ the following stronger claim: for all $p_1, \dots, p_\kbound$, there exists $\sconst$ such that for all $v \in \reflang(\alpha), k$ and $i$, the inequality $s^{v,(p_1,\dots,p_\kbound)}_{i,k} \leq \sconst$ holds.
    Letting all $p_1, \dots, p_\kbound$ be zeros coincides with the claim of the proposition.
    
    We can do this by an argument similar to that in the proof of Proposition~\ref{prop:tconst}.  
    We consider the case $\alpha = \alpha_1 \alpha_2$ (the other cases are similar).
    Take $p_1, \dots, p_\kbound$.  Fix $\sconst_2$ by the induction hypothesis for $\alpha_2$ over $p_1,\dots,p_\kbound$ and $\sconst_1$ by the induction hypothesis for $\alpha_1$ over $\sconst_2, \dots, \sconst_2$ ($\kbound$ times).  Introduce $v \in \reflang(\alpha), k$ and $i$.  Decompose $v = v_1 v_2 = v_1 v_{2,1} \cdots v_{2,\zettaiti{v_2}}$ where $v_l \in \reflang(\alpha_l)$ for $l = 1, 2$ and $v_{2,1}, \dots, v_{2,\zettaiti{v_2}} \in \nfaalph$.  
    Because $\myopen(v_1) = \emptyset$ by Corollary~\ref{cor:finalopisempty}, $\myopen(v_1 v_{2,1} \cdots v_{2,i}) = \myopen(v_{2,1} \cdots v_{2,i})$ and thus $s^{v,(p_1,\dots,p_\kbound)}_{\zettaiti{v_1}+i,k} = s^{v_2,(p_1,\dots,p_\kbound)}_{i,k} \leq \sconst_2$ for $i = 0, 1, \dots, \zettaiti{v_2}$ by the induction hypothesis for $\alpha_2$.  In particular, each $p'_k \leq \sconst_2$ where $(p'_1,\dots,p'_\kbound) = s^{v_2,(p_1,\dots,p_\kbound)}_0$.
    From this, observe by easy induction on $i$ that $s^{v_1,(p'_1,\dots,p'_\kbound)}_{i,k} \leq s^{v_1,(\sconst_2,\dots,\sconst_2)}_{i,k}$ for $i = 0,1,\dots,\zettaiti{v_1}$.  Therefore, we obtain $s^{v,(p_1,\dots,p_\kbound)}_{i,k} = s^{v_1,(p'_1,\dots,p'_\kbound)}_{i,k} \leq s^{v_1,(\sconst_2,\dots,\sconst_2)}_{i,k} \leq \sconst_1$ by the induction hypothesis for $\alpha_1$.  Take $\sconst = \max\syuugou{\sconst_1, \sconst_2}$.
\end{proof}


Recall the main theorem of the section:
\thmrewbscmcfg*
Let $\alpha$ be a closed-star rewb using number characters each of whose value is bounded by $\kbound$.  In what follows, we shall present our unary-MCFG construction equivalent to $L(\alpha)$.
Fix $\tconst$ and $\sconst$ of Propositions~\ref{prop:tconst} and \ref{prop:sconst}.  We can assume $\tconst, \sconst \geq 1$.
Define $\mydepth_l$ inductively as follows: $\mydepth_0 = 0$ and $\mydepth_l = \tconst (1 + \kbound\cdot \mydepth_{l-1} )$.  Let $\rbound = \mydepth_{\sconst}$.
Fix an NFA $\myautomaton_{\alpha} = (Q,\nfaalph,\delta,q_0,F)$ equivalent to $\reflang(\alpha)$ satisfying the NFA conditions (a) and (b) given in the paragraph immediately above Lemma~\ref{lem:om} in the previous section.  Let $n = \zettaiti{Q}$ and $Q = \{q_0,\dots,q_{n-1}\}$.
Let $G'_\alpha$ be a $(\kbound \cdot \rbound+1)$-unary-MCFG $(\mynonterminals,\Sigma,\mathcal{F},P,S)$:
$\mynonterminals$ consists of the initial nonterminal $S$ and nonterminals $A_i^{\tau,\sigma}$ for all $0 \leq i < n, 1 \leq k \leq \kbound, \tau = (\tau_1,\dots,\tau_\kbound)\;(0 \leq \tau_1,\dots,\tau_\kbound \leq \tconst)$ and $\sigma = (\sigma_1,\dots,\sigma_\kbound)\;(0 \leq \sigma_1,\dots,\sigma_\kbound \leq \sconst)$.
$\mathcal{F}$ consists of $\symo{},\symi_{a}^{M},\symp_{k}^{M,\tau,\sigma},\symr_{k}$ and $\symid$ for all $a \in \Sigma, 1 \leq k \leq \kbound, M \subseteq \syuugou{1,\dots,\kbound}, \tau = (\tau_1,\dots,\tau_\kbound)\;(0 \leq \tau_1,\dots,\tau_\kbound \leq \tconst, \tau_k \geq 1)$ and $\sigma = (\sigma_1,\dots,\sigma_\kbound)\;(0 \leq \sigma_1,\dots,\sigma_\kbound \leq \sconst, \sigma_k \geq 1)$.  Here,
\begin{itemize}
    \item $\symo = \lambda (x,(y_1),\dots,(y_\kbound)).x$, $\symid = \lambda(x,(y_1),\dots,(y_\kbound)).(x,(y_1),\dots,(y_\kbound))$,
    \item $\symi_{a}^{M} = \lambda(x,(y_1),\dots,(y_\kbound)).(xa,(z_1),\dots,(z_\kbound))$ where $z_{j,r} = y_{j,r} a$ $(1 \leq r \leq \rbound)$ for $j \in M$ and $(z_j) = (y_j)$ for $j \notin M$.
    \item Let $\psi(k,0) = ((\tau_k-1)/\tconst)\mydepth_{\sigma_k} + 1$ and $\psi(k,j) = \psi(k,0) + j\mydepth_{\sigma_{k}-1}$ for $1 \leq k,j \leq \kbound$.  Then, $\symp_{k}^{M,\tau,\sigma} = \lambda(x,(y_1),\dots,(y_\kbound)).(xy_{k,\psi(k,0)},(z_1),\dots,(z_\kbound))$, where in the case $j \in M$ and $j \neq k$, $z_{j,r} = y_{j,r} y_{k,\psi(k,j-1)+r}$ for $1 \leq r \leq \mydepth_{\sigma_k -1}$ and $z_{j,r} = \varepsilon$ for $\mydepth_{\sigma_k -1} < r \leq \rbound$; in the case $j \notin M$ and $j \neq k$, $(z_{j}) = (y_{j})$; $z_{k,r} = y_{k,r}$ for $1 \leq r < \psi(k,0)$ and $z_{k,r} = \varepsilon$ for $\psi(k,0) \leq r \leq \rbound$,
    \item $\symr_{k}^M = \lambda(x,(y_1),\dots,(y_\kbound)).(x,(z_1),\dots,(z_\kbound))$ where $(z_{k}) = (\varepsilon) = (\varepsilon, \dots, \varepsilon)$ ($\rbound$ times) and $(z_{j}) = (y_{j})$ for $j \neq k$.
\end{itemize}
As before, we shall denote $\varphi_{a}^{M,\tau,\sigma} = \symi_{a}^{M}$, $\varphi_{k}^{M,\tau,\sigma} = \symp_{k}^{M,\tau,\sigma}$, $\varphi_{\lbrack_k}^{M,\tau,\sigma} = \symr_{k}$ and $\varphi_{\rbrack_k}^{M,\tau,\sigma} = \symid$.
$P$ consists of (1) $S \to \symo A_f^{(0,\dots,0),(0,\dots,0)}$ for all $q_f \in F$, (2) $A_j^{\tau',\sigma'} \to \varphi_c^{M_j,\tau,\sigma} A_i^{\tau,\sigma}$ for all $q_i \myto{c}{\myautomaton_\alpha} q_j, \tau \tarrow{c} \tau'$ and $\sigma \sarrow{c} \sigma'$, and (3) $A_0^{\tau,\sigma} \to (\varepsilon, (\varepsilon), \dots, (\varepsilon))$ for all $1 \leq k \leq \kbound, \tau = (\tau_1,\dots,\tau_\kbound)\;(0 \leq \tau_1,\dots,\tau_\kbound \leq \tconst)$ and $\sigma = (\sigma_1,\dots,\sigma_\kbound)\;(0 \leq \sigma_1,\dots,\sigma_\kbound \leq \sconst)$.

\begin{restatable}{lemma}{lemmcfgbase}
    \label{lem:mcfgbase}
    Let $v = v_1 \cdots v_{\zettaiti{v}} \in \reflang(\alpha)$ where $v_1, \dots, v_{\zettaiti{v}} \in \nfaalph$.  Then, for every $i \geq 0$,
    \ifcondensed
        $\varphi_{v_i}^{\myopen(v_1\cdots v_i),t_{i-1},s_{i-1}} \cdots \varphi_{v_1}^{\myopen(v_1),t_{0},s_{0}} (\varepsilon,(\varepsilon),\dots,(\varepsilon)) = (\deref(v_1\dots v_i),(\nu_1^{i}),\dots,(\nu_\kbound^{i}))$
    \else
        \[
            \varphi_{v_i}^{\myopen(v_1\cdots v_i),t_{i-1},s_{i-1}} \cdots \varphi_{v_1}^{\myopen(v_1),t_{0},s_{0}} (\varepsilon,(\varepsilon),\dots,(\varepsilon)) = (\deref(v_1\dots v_i),(\nu_1^{i}),\dots,(\nu_\kbound^{i}))
        \]
    \fi
    where for each $1 \leq k \leq \kbound$, $\nu_{k,r}^{i} = \mu_k(v_1 \cdots v_i)$ holds for $1 \leq r \leq (t_{i,k}/\tconst) \mydepth_{s_{i,k}}$.
    In particular, $\symo\cdot\varphi_{v_{\zettaiti{v}}}^{\myopen(v_1\cdots v_{\zettaiti{v}}),t_{\zettaiti{v}-1},s_{\zettaiti{v}-1}} \cdots \varphi_{v_1}^{\myopen(v_1),t_0,s_0} (\varepsilon, (\varepsilon), \dots, (\varepsilon)) = \deref(v)$.
\end{restatable}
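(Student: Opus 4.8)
The plan is to prove the displayed identity by induction on $i$, carrying as invariant exactly the structure summarized in the diagram preceding Example~\ref{ex:mcfgproof}: after processing $v_1\cdots v_i$, the first component equals $\deref(v_1\cdots v_i)$ and, for each cell $k$, the block $(y_k)$ splits into $\tconst$ consecutive subblocks of common length $\mydepth_{s_{i,k}}/\tconst = 1+\kbound\cdot\mydepth_{s_{i,k}-1}$, of which the first $t_{i,k}$ subblocks (the ``valid copies'' region, of total length $(t_{i,k}/\tconst)\mydepth_{s_{i,k}}$) all hold $\mymem_k(v_1\cdots v_i)$. First I would record the arithmetic that makes this well-posed: since $\mydepth_l=\tconst(1+\kbound\mydepth_{l-1})$ we have $\tconst\mid\mydepth_l$ for $l\ge 1$, so $(t_{i,k}/\tconst)\mydepth_{s_{i,k}}$ and each $\psi(k,\cdot)$ are integers, and as $t_{i,k}\le\tconst$, $s_{i,k}\le\sconst$ and $\mydepth$ is monotone, the valid region fits inside $[1,\rbound]$. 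The base case $i=0$ is immediate since every component is $\varepsilon$ and $\mymem_k(\varepsilon)=\varepsilon$.

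For the inductive step I would apply $\varphi_{v_{i+1}}^{\myopen(v_1\cdots v_{i+1}),t_i,s_i}$ to the tuple provided by the hypothesis and case-split on $v_{i+1}$, using Proposition~\ref{prop:memind} for the evolution of $\mymem$ and Definitions~\ref{def:t} and~\ref{def:s} for that of $t$ and $s$. The three easy cases are: $v_{i+1}=a\in\Sigma$, where $\symi_a$ appends $a$ to the generating string and to each open block, matching $\mymem_k(va)=\mymem_k(v)a$ for $k\in\myopen(v)$ with $t,s$ unchanged; $v_{i+1}=\rbrack_h$, where $\symid$ changes nothing, matching Proposition~\ref{prop:memind}(d) with $t,s$ unchanged; and $v_{i+1}=\lbrack_h$, where $\symr_h$ zeros block $h$ entirely, which is consistent because $\mymem_h(v\lbrack_h)=\varepsilon$ so the valid region is trivially correct whatever $t_{i+1,h},s_{i+1,h}$ are, while blocks $k\ne h$ are untouched.

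The crux is $v_{i+1}=j\in\mynat$, handled by $\symp_j^{\cdot,t_i,s_i}$. Note first that the rewb syntax of Definition~\ref{def:rewb} forbids $\bs j$ inside $(_j\;)_j$, so $j\notin\myopen(v_1\cdots v_i)$, whence Proposition~\ref{prop:memind}(b) gives $\mymem_j(v_1\cdots v_i\,j)=\mymem_j(v_1\cdots v_i)$. I would then verify three things. (i) The generating string receives $y_{j,\psi(j,0)}$; since $\psi(j,0)=\bigl((t_{i,j}-1)/\tconst\bigr)\mydepth_{s_{i,j}}+1$ lies in block $j$'s valid region, the hypothesis gives $y_{j,\psi(j,0)}=\mymem_j(v_1\cdots v_i)$, and by the definition of $\deref^\circ$ we get $\deref(v_1\cdots v_i\,j)=\deref(v_1\cdots v_i)\,\mymem_j(v_1\cdots v_i)$. (ii) Block $j$ retains its prefix below $\psi(j,0)$ and is zeroed afterward; since $t_{i+1,j}=t_{i,j}-1$ and $s_{i+1,j}\le s_{i,j}$, its new valid region has length $\bigl((t_{i,j}-1)/\tconst\bigr)\mydepth_{s_{i+1,j}}\le\psi(j,0)-1$, hence is covered by the retained prefix, which still equals $\mymem_j$. (iii) For each $l\in\myopen(v_1\cdots v_{i+1})$ with $l\ne j$, the $l$-th subblock of block $j$ is appended pointwise to block $l$, producing $\mymem_l(v_1\cdots v_i)\,\mymem_j(v_1\cdots v_i)=\mymem_l(v_1\cdots v_i\,j)$ by Proposition~\ref{prop:memind}(b); and since $\psi(j,\kbound)=(t_{i,j}/\tconst)\mydepth_{s_{i,j}}$, the fetched indices $\psi(j,l-1)+r$ lie exactly in block $j$'s valid region, so each fetched entry is $\mymem_j$.

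The main obstacle is the covering condition inside (iii): the $l$-th subblock has length $\mydepth_{s_{i,j}-1}$, and I must check it is long enough to overwrite block $l$'s entire valid region, i.e. $(t_{i,l}/\tconst)\mydepth_{s_{i,l}}\le\mydepth_{s_{i,j}-1}$. This is exactly where the dependency-depth bookkeeping pays off: Definition~\ref{def:s}(d) yields $s_{i,j}\ge s_{i,l}+1$ for every $l\in\myopen(v_1\cdots v_{i+1})$, so $\mydepth_{s_{i,l}}\le\mydepth_{s_{i,j}-1}$ by monotonicity, and $t_{i,l}\le\tconst$ then gives the inequality. The finiteness of $\tconst,\sconst$ from Propositions~\ref{prop:tconst} and~\ref{prop:sconst}, which is where closed-starness is used, guarantees $\rbound=\mydepth_{\sconst}$ is finite so that finitely many subcomponents suffice. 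The ``In particular'' clause follows at once by applying $\symo$ to the tuple for $i=\zettaiti{v}$, projecting out its first component $\deref(v)$.
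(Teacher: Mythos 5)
Your proposal is correct and follows essentially the same route as the paper's proof: induction on $i$ with the same case analysis, Proposition~\ref{prop:memind} driving the evolution of $\mymem$, the identity $\psi_i(k,\kbound)=(t_{i,k}/\tconst)\mydepth_{s_{i,k}}$ for the in-range check, and the key inequalities $t_{i+1,l}\le\tconst$ and $s_{i,j}\ge s_{i+1,l}+1$ (from Definition~\ref{def:s}(d)) for the covering condition in the number-character case. The only cosmetic difference is that you derive $j\notin\myopen(v_1\cdots v_i)$ directly from the syntactic restriction of Definition~\ref{def:rewb}, whereas the paper routes this through the matching property (Claim~\ref{clm:mat}), which is the formalization of the same fact.
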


\begin{proof}
    We prove by induction on $i$.  The base case $i = 0$ is obvious.  
    Suppose the inductive case $i \Longrightarrow i+1$.  By the induction hypothesis, it suffices to show
    \[
        \varphi_{v_{i+1}}^{\myopen(v_1\cdots v_{i+1}),t_{i},s_{i}} (\deref(v_1\cdots v_{i}), (\nu_1^{i}), \dots, (\nu_\kbound^{i})) = (\deref(v_1\cdots v_{i+1}), (\nu_1^{i+1}), \dots, (\nu_\kbound^{i+1}))
    \]
    where all $(\nu_1^{i+1}), \dots, (\nu_\kbound^{i+1})$ satisfy the conditions of the claim.
    In the cases $v_{i+1} = a \in \Sigma$ or $\rbrack_k$, it clearly holds by Proposition~\ref{prop:memind} because $t_{i+1} = t_{i}$ and $s_{i+1} = s_{i}$ from those definitions.
    In the case $v_{i+1} = \lbrack_k$, for $j \neq k$, we have $t_{i+1,j} = t_{i,j}$ and $s_{i+1,j} = s_{i,j}$, and thus $\nu_j^{i+1} = \nu_j^i$ satisfies the conditions of the claim by Proposition~\ref{prop:memind}.
    For $j = k$, $\symr_{k}$ resets all $\nu_{k,r}^i$ $(1 \leq r \leq \rbound)$ to $\varepsilon$'s and therefore $(\nu_k^{i+1}) = (\varepsilon)$ as required.
    The remaining case is $v_{i+1} = k$ $(1 \leq k \leq \kbound)$.  Let $\psi_i(k,0) = ((t_{i,k}-1)/\tconst)\mydepth_{s_{i,k}} + 1$ and $\psi_i(k,j) = \psi_i(k,0) + j\mydepth_{s_{i,k}-1}$ for $1 \leq k,j \leq \kbound$.
    Note that $t_{i,k}, s_{i,k} \geq 1$ since $v_{i+1} = k$.
    Because $\deref(v_1\cdots v_i k) = \deref(v_1 \cdots v_i) \mymem_k(v_1 \cdots v_i) = \deref(v_1 \cdots v_i) \nu_{k,\psi_i(k,0)}^{i}$, the first components of both sides of the equation coincide.  It remains to show that, for any $1 \leq j \leq \kbound$, $\nu_{j,r}^{i+1} = \mymem_j(v_1 \cdots v_i k)$ $(1 \leq r \leq (t_{i+1,j}/\theta)\mydepth_{s_{i+1,j}})$.  Take any $1 \leq j \leq \kbound$.
    \begin{itemize}
        \item If $j \in \myopen(v_1\cdots v_i)$ and $j\neq k$, then $\mymem_j(v_1 \cdots v_i k) = \mymem_j(v_1 \cdots v_i) \mymem_k(v_1 \cdots v_i)$ by Proposition~\ref{prop:memind}.  
        Note that $s_{i,k} > s_{i+1,j}$ from Definition~\ref{def:s} and $t_{i+1,j} \leq \tconst$.  Thus, for $1 \leq r \leq (t_{i+1,j}/\tconst)\mydepth_{s_{i+1,j}}$, $\psi_i(k,j-1)+r \leq \psi_i(k,\kbound-1)+\mydepth_{s_{i,k}-1} = (t_{i,k}/\tconst) \mydepth_{s_{i,k}}$.  
        Hence, for $1 \leq r \leq (t_{i+1,j}/\tconst)\mydepth_{s_{i+1,j}}$, we have $\nu_{j,r}^{i+1} = \nu_{j,r}^{i} \nu_{k,\psi_i(k,j-1)+r}^{i} = \mymem_j(v_1\cdots v_i) \mymem_k(v_1 \cdots v_i)$ by the induction hypothesis.  
        \item If $j \notin \myopen(v_1 \cdots v_i)$ and $j \neq k$, then $\mymem_j(v_1 \cdots v_i k) = \mymem_j(v_1 \cdots v_i)$ by Proposition~\ref{prop:memind}.  
        Because $t_{i+1,j} = t_{i,j}$ and $s_{i+1,j} \leq s_{i,j}$ from Definitions~\ref{def:t} and \ref{def:s}, and $(\nu_j^{i+1}) = (\nu_j^{i})$, we have $\nu_{j,r}^{i+1} = \nu_{j,r}^{i} = \mymem_j(v_1 \cdots v_i)$ for $1 \leq r \leq (t_{i+1,j}/\tconst) \mydepth_{s_{i+1,j}} \leq (t_{i,j}/\tconst) \mydepth_{s_{i,j}}$ by the induction hypothesis.
    \item If $j = k$, $v_1 \cdots v_{i}k$ is matching from Claim~\ref{clm:mat} in Appendix~\ref{app:rewbscnesapf} and thus $k \notin \myopen(v_1 \cdots v_i).$
            Thus, $\mymem_k(v_1 \cdots v_i k) = \mymem_k(v_1 \cdots v_i)$ by Proposition~\ref{prop:memind}.
            Because $t_{i+1,k} = t_{i,k} - 1$ and $s_{i+1,k} \leq s_{i,k}$ from Definitions~\ref{def:t} and \ref{def:s}, we have $\nu_{k,r}^{i+1} = \nu_{k,r}^{i} = \mymem_k(v_1\cdots v_i)$ for $1 \leq r \leq (t_{i+1,k}/\tconst) \mydepth_{s_{i+1,k}} \leq ((t_{i,k}-1)/\tconst) \mydepth_{s_{i,k}} < \psi_i(k,0)$ by the induction hypothesis.
    \end{itemize}
    This completes the proof.
\end{proof}

\begin{proof}[Proof of the main theorem (Theorem~\ref{thm:rewbscmcfg})]
    We give a proof of $L(G'_\alpha) \subseteq L(\alpha)$ (The other direction is quite similar to the proof of that direction of Theorem~\ref{thm:rewbpmcfg}).
    Take any $w \in L(G'_\alpha)$ and fix a derivation 
    \begin{align*}
        S &\to \symo A_{\myperm_m}^{\tau_{(m)},\sigma_{(m)}} \to \symo\cdot f_m^{M_{\myperm_m},\tau_{(m-1)},\sigma_{(m-1)}} A_{\myperm_{m-1}}^{\tau_{(m-1)},\sigma_{(m-1)}} \to \cdots \\
        &\to \symo\cdot f_m^{M_{\myperm_m},\tau_{(m-1)},\sigma_{(m-1)}} \cdots f_1^{M_{\myperm_1},\tau_{(0)},\sigma_{(0)}} A_{\myperm_0}^{\tau_{(0)},\sigma_{(0)}} \\ 
        &\to \symo\cdot f_m^{M_{\myperm_m},\tau_{(m-1)},\sigma_{(m-1)}} \cdots f_1^{M_{\myperm_1},\tau_{(0)},\sigma_{(0)}} (\varepsilon,(\varepsilon),\dots,(\varepsilon)) = w.
    \end{align*}
    For each $i = 1, \dots, m$, $G'_\alpha$ has the rule $A_{\myperm_i}^{\tau_{(i)},\sigma_{(i)}} \to f_{i}^{M_{\myperm_i},\tau_{(i-1)},\sigma_{(i-1)}} A_{\myperm_{i-1}}^{\tau_{(i-1)},\sigma_{(i-1)}}$.  From the construction of $G'_\alpha$, there exists $v_i$ such that $\varphi_{v_i}^{M_{\myperm_i},\tau_{(i-1)},\sigma_{(i-1)}} = f_{i}^{M_{\myperm_i},\tau_{(i-1)},\sigma_{(i-1)}}$, $q_{\myperm_{i-1}} \myto{v_i}{} q_{\myperm_i}$, $q_{\myperm_0} = q_0$, $q_{\myperm_m} \in F$, and $\tau_{(m)} = (0,\dots,0), \sigma_{(m)} = (0,\dots,0), \tau_{(i-1)} \tarrow{v_i} \tau_{(i)}, \sigma_{(i-1)} \sarrow{v_i} \sigma_{(i)}$.
    In particular, we have $v = v_1 \cdots v_m \in \reflang(\alpha)$ and $\tau_{(i)} = t_i, \sigma_{(i)} = s_i$ $(0 \leq i \leq m)$.
    Because $q_0 \mytto{v_1 \cdots v_i}{} q_{\myperm_i}$, by Lemma~\ref{lem:om}, $M_{\myperm_i}$ coincides with $\myopen(v_1\cdots v_i)$.
    Therefore, $w = \deref(v)$ by Lemma~\ref{lem:mcfgbase}.
\end{proof}

\section{Omitted proofs in Section~\ref{sec:rewbscnesa}}
\label{app:rewbscnesapf}

    First, we review NESA (and SA).
    Formally, a (one-way\footnote{That is, the input cursor does not move back to left.} and nondeterministic) SA $A$ is a 9-tuple $(Q,\Sigma,\Gamma,\delta,q_0,Z_0,\inputendmarker,\mydollar,F)$ satisfying the following conditions.
    The components $Q$, $\Sigma$, $q_0$ and $F$ are the same as those of NFA (Definition~\ref{def:nfa}).
    Besides, $\Gamma$ is a finite set of stack symbols and $Z_0 \in \Gamma$ is an initial stack symbol.  The special symbol $\inputendmarker \notin \Sigma$ denotes the endmarker of the input tape and symbol $\mydollar \notin \Gamma$ denotes the top of the stack.
    The transition function $\delta$ has two modes, where $\myleft,\mystop,\myright \notin (\Sigma\cup\Gamma) \mydisjointu \syuugou{\mydollar}$, $\dir_i = \syuugou{\mystop,\myright}$, and $\dir_s = \syuugou{\myleft,\mystop,\myright}$:
    \begin{romanenumerate}
        \item (pushdown mode) $Q\times (\Sigma \mydisjointu \{ \inputendmarker \}) \times \Gamma\mydollar \to \powerset({Q\times \dir_i \times \Gamma^\ast \mydollar})$,
        \item (stack reading mode) $Q \times (\Sigma \mydisjointu \{ \inputendmarker \}) \times (\Gamma\mydollar \mydisjointu \Gamma) \to \powerset({Q\times \dir_i\times \dir_s})$.
    \end{romanenumerate}
    %
	An NESA is an SA whose transition function $\delta$ satisfies the condition that, in (i) (pushdown mode), $\delta(q,a,Z\mydollar) \ni (q^\prime, d, w\mydollar)$ implies $w \neq \varepsilon$.
    Let $\mydtoz{\myleft}=-1$, $\mydtoz{\mystop}=0$ and $\mydtoz{\myright}=1$.
    %
        We define $I = Q \times \Sigma^\ast \{ \inputendmarker \} \times (\Gamma \mydisjointu \{ \,\mathord{\myspl} \})^\ast \{\mydollar\}$, the set of \emph{instantaneous descriptions} (ID).  Here, $\myspl\,\notin \Gamma$ stands for the position of stack pointer.  Also, let $\myid{A}$ (or $\myid{}$ whenever $A$ is clear) be the smallest binary relation over $I$ satisfying the following conditions for all $q, q' \in Q$, $k \geq 0$, $a_0, \dots, a_k \in \Sigma \mydisjointu \{ \inputendmarker \}$, $Z, Z_1, \dots, Z_n \in \Gamma$, $\gamma, y \in \Gamma^\ast$, $d \in \dir_i$ and $e \in \dir_s$ with $k = 0 \rightarrow d \neq \myright$:
		\begin{romanenumerate}
            \item if $\delta(q,a_0,Z\mydollar) \ni (q^\prime, d, y\mydollar)$, $(q,a_0 \cdots a_k, \gamma Z\myspl\!\mydollar) \myid{A} (q^\prime, a_{\mydtoz{d}} \cdots a_k, \gamma y \myspl\!\mydollar)$,
            \item if $\delta(q,a_0,\square) \ni (q^\prime, d, e)$ where $\square \in \{ Z_j \mydollar, Z_j \}$ and $1 \leq j \leq n$ with 
                $j = 1 \to e \neq \myleft$ and $j = n \to e \neq \myright$, $(q,a_0 \cdots a_k, Z_1 \cdots Z_{j} \myspl\! \cdots Z_n \mydollar) \myid{A} (q^\prime,a_{\mydtoz{d}} \cdots a_k, Z_1 \cdots Z_{j+\mydtoz{e}} \myspl\! \cdots Z_n \mydollar)$.
		\end{romanenumerate}
        Note that $\myleft \notin \dir_i$ asserts that it is one-way.
        We say that $A$ accepts $w \in \Sigma^\ast$ if there exist $\gamma_1, \gamma_2 \in \Gamma^\ast$ and $q_f \in F$ such that $(q_0, w \inputendmarker, Z_0 \myspl\!\mydollar) \myid{A}^\ast (q_{f}, \inputendmarker, \gamma_1 \myspl\! \gamma_2\, \mydollar)$. Let $L(A)$ denote the set of all strings accepted by $A$.

Recall the main theorem of the section:

\thmrewbscisnesa*
Let $\alpha$ be a closed-star rewb.  Hereafter, we shall present our NESA construction $A_\alpha$ equivalent to $L(\alpha)$.
Fix an NFA $\myautomaton = (Q_\myautomaton,\nfaalph,\delta_\myautomaton,q_0,F)$ equivalent to $\reflang(\alpha)$ whose every state can reach some final state.
Also, fix $\tconst$ and $\sconst$ as in Propositions~\ref{prop:tconst} and \ref{prop:sconst}.
Let $A_\alpha$ be the NESA $(Q,\Sigma,\Gamma,\delta,q_0,Z_0,\inputendmarker,\mydollar,F)$ defined as follows.  Let $\cmdset = \{ \symcall_i^{x}, \symexec_i^{x}, \symret_i^{x} \mid i \in \{ 1, \dots, \kbound \}, x \in \{ 0, \dots, \tconst \} \}$.  The state set $Q$ is $\{ q\xi \mid q \in Q_\myautomaton, \xi \in \cmdset^{\leq \sconst}\}$, where $\cmdset^{\leq \sconst}$ is the set of all strings over $\cmdset$ of length less than or equal to $\sconst$.  Note that $Q$ is finite.  The stack symbol set $\Gamma$ is $\nfaalph \mydisjointu \{Z_0\}$.  The transition function $\delta$ is defined as the smallest relation satisfying the following 10 rules for all $q, q' \in Q_\myautomaton$, $a \in \Sigma$, $b \in \{ \lbrack_i, \rbrack_i \mid i \in \{ 1, \dots, \kbound \} \}$, $c \in \nfaalph \mydisjointu \{ \inputendmarker \}$, $Z \in \Gamma$, $i \neq j \in \{ 1, \dots, \kbound \}$, $x \in \{ 0, \dots, \tconst \}$ and $\xi \in \cmdset^{\leq \sconst}$.  

\begin{figure}[htb]
    \begin{minipage}[t]{0.54\linewidth}
\begin{bracketenumerate}
    \item $\delta_\myautomaton(q,a) \ni q' \Rightarrow \delta(q,a,Z\mydollar) \ni (q', \myright, Za\mydollar)$
    \item $\delta_\myautomaton(q,b) \ni q' \Rightarrow \delta(q,c,Z\mydollar) \ni (q', \mystop, Zb\mydollar)$
    \item $\delta_\myautomaton(q,i) \ni q' \Rightarrow \delta(q,c,Z\mydollar) \ni (q' \symcall_i^{0}, \mystop, Zi \mydollar)$
    \item $\delta(q \xi \symcall_i^{x}, c, \square)$ \\
        $= \left\{
        \begin{array}{ll}
            \{ (q \xi \symcall_i^{x+1}, \mystop, \myleft) \} & \mytumeru (x < \tconst, \square \in \{ i \mydollar, i \}) \\
            \{ (q \xi \symexec_i^{x}, \mystop, \myright) \} & \mytumeru (\square = \lbrack_i) \\
            \{ (q \xi \symret_i^{x}, \mystop, \myright) \} & \mytumeru (\square = Z_0) \\
            \{ (q \xi \symcall_i^{x}, \mystop, \myright) \} & \mytumeru (otherwise)
        \end{array}
        \right.$ 
    \item $\delta(q \xi \symexec_i^{x}, a, a) = \{ (q \xi \symexec_i^{x}, \myright, \myright) \}$
\end{bracketenumerate}
    \end{minipage}
    \begin{minipage}[t]{0.45\linewidth}
\begin{bracketenumerate}
    \setcounter{enumi}{5}
    \item $\delta(q \xi \symexec_i^{x}, c, \mathrm{bk}_j) = \{ (q \xi \symexec_i^{x}, \mystop, \myright) \}$ \\
        where $\text{bk} \in \syuugou{\mathord{\lbrack},\mathord{\rbrack}}$
    \item $\delta(q \xi \symexec_i^{x}, c, j) = \{ (q \xi \symexec_i^{x} \symcall_j^{0}, \mystop, \mystop) \}$       
    \item $\delta(q \xi \symexec_i^{x}, c, \rbrack_i) = \{ (q \xi \symret_i^{x}, \mystop, \myright) \}$
    \item $\delta(q \xi \symret_i^{x}, c, \square)$ where $x > 0$ \\
        $= \left\{
         \begin{array}{ll}
             \{ (q \xi \symret_i^{x-1}, \mystop, \mystop) \} & \mytumeru (\square = i \mydollar) \\
             \{ (q \xi \symret_i^{x-1}, \mystop, \myright) \} & \mytumeru (\square = i) \\
             \{ (q \xi \symret_i^{x}, \mystop, \myright) \} & \mytumeru (otherwise)
         \end{array}
         \right.$
     \item $\delta(q \xi \symret_i^{0}, c, \square) = \{ (q \xi, \mystop, \mystop) \}$ \\
        where $\square \in \{ Z \mydollar, Z \}$
\end{bracketenumerate}
    \end{minipage}
    %
\end{figure}

The constructed NESA $A_\alpha$ consumes $w \in L(\alpha)$ while tracing a path in the NFA $\myautomaton$ and guessing some $v \in \reflang(\alpha)$ such that $\deref(v) = w$ character by character.
        When guessing $a \in \Sigma$, $A_\alpha$ consumes $a$ from $w$ and pushes $a$ to the stack by rule (1) (\emph{$a$-rule}).  Guessing $b \in \brkset$, $A_\alpha$ only pushes $b$ to the stack by rule (2) (\emph{$b$-rule}).  Guessing $i \in \mynat$, $A_\alpha$ pushes $i$ and then consumes $\mymem_i(\gamma)$ where the stack is $Z_0 \gamma \myspl\!\mydollar$ at that time, namely the string which should be consumed in the dereference of this $i$.  This is done by a big ``transition'' that consists of several transitions that use rules (3)--(10) (\emph{$i$-rules}).  The most significant part of the proof is verifying the correctness of this big-step semantics of $i$-rules (Lemma~\ref{lem:irules}).  

    Specifically, $A_\alpha$ simulates the dereference of a guessed $i$ as follows.
        $A_\alpha$ regards its states as the snapshots of a call-stack of length $\eta + 1$.  In what follows, we call the last character of a state a \emph{state top}.
    In the ``transition,'' $A_\alpha$ decides its next move in accordance with the state top, the input character, and the stack symbol indicated by the stack pointer.  
    
    Let us assume that $A_\alpha$ is at state $q$.  $A_\alpha$ first pushes the guessed $i$ to its stack and goes to state $q \symcall_i^{0}$.
    Then, the first and last branches in rule (4) move its stack pointer downwardly until meeting $\mathord{\lbrack_i}$ (or $Z_0$ if it does not exist). During this process, the superscript $x$ of the state top $\symcall_i^{x}$ is incremented every time $i$ appears.  For brevity, we assume that $\mathord{\lbrack_i}$ was found.  Consequently, when the descending move has finished, the stack pointer indicates the most recent $\mathord{\lbrack_i}$ and the superscript $x$ of the state top stores the number of occurrences of $i$ between this $\mathord{\lbrack_i}$ and the starting $i$.
    Remark that the remaining tasks are (i) to consume the string between the $\mathord{\lbrack_i}$ and the right nearest $\mathord{\rbrack_i}$ from the input (or to consume nothing if it falls down to $Z_0$) and (ii) to come back to the starting $i$.
    
    For task (i), the second branch in rule (4) ``switches'' the state top from $\symcall_i^{x}$ to $\symexec_i^{x}$.
    Rules (5)--(8) stipulate the behavior of $A_\alpha$ when the state top is $\symexec_i^{x}$.  Roughly, while moving its stack pointer upward, $A_\alpha$ skips brackets except $\rbrack_i$ (rules (6) and (8)), and compares the input symbol with the pointed stack symbol whenever it is in $\Sigma$, making progress by consuming the input symbol only if the two symbols coincide (rule (5)).  If a number character $j$ was found, it ``pushes'' $\symcall_j^{0}$ to its state and starts with rule (4) again to simulate the dereference of this $j$ (rule (7)).  If $\mathord{\rbrack_i}$ was found, rule (8) ``switches'' the state top from $\symexec_i^{x}$ to $\symret_i^{x}$ for task (ii) mentioned above.
    In contrast with the case of $\symcall_i^{x}$, $A_\alpha$ now climbs up the stack while decrementing the superscript $x$ every time $i$ appears in its way (rule (9)).  Recall what $x$ had stored.  Also, by the definition of rewbs (Definition~\ref{def:rewb}), no $i$ appears between the $\mathord{\lbrack_i}$ and the $\mathord{\rbrack_i}$.  Therefore, the position to come back is where the superscript becomes zero (rule (10)).
    Note that this big ``transition'' is designed to be one-way, namely the applicable rule, if any, is uniquely determined from an ID being calculated within the ``transition.'' 
    We refer to Proposition~\ref{prop:det} in Appendix~\ref{app:rewbscnesapf} for more details.

    For each $n$, we write $\mathord{\myid{(n)}}$ for the subrelation of $\mathord{\myid{}}$ derived from rule $(n)$.
    Below, the two lemmas characterize the semantics of $a$-rule, $b$-rule and $i$-rules.
    \begin{lemma} \label{lem:abrule}
        Let $q,q' \in Q_\myautomaton$, $w,w' \in \Sigma^\ast$, $\gamma \in \Gamma^\ast$ and $\beta \in (\Gamma \mydisjointu \{\, \myspl \})^\ast$.  
        Then,
        \begin{alphaenumerate}
            \item $(q,w\inputendmarker,Z_0 \gamma \myspl\! \mydollar) \myid{(1)} (q',w'\inputendmarker,\beta \mydollar) \iff \exists a \in \Sigma.\, q \myto{a}{\myautomaton} q' \land w = aw' \land \beta = Z_0 \gamma a \myspl$,
            \item $(q,w\inputendmarker,Z_0 \gamma \myspl\! \mydollar) \myid{(2)} (q',w'\inputendmarker,\beta \mydollar) \iff \exists b \in \brkset.\, q \myto{b}{\myautomaton} q' \land w = w' \land \beta = Z_0 \gamma b \myspl$.
        \end{alphaenumerate}
    \end{lemma}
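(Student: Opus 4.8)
The plan is to prove both biconditionals by directly unfolding the definitions of the one-step relations $\myid{(1)}$ and $\myid{(2)}$, which consist precisely of the instances of the pushdown-mode ID clause (i) that arise from transition rules (1) and (2), respectively. Both of these transition rules have the pushdown-mode shape $\delta(q,\cdot,Z\mydollar) \ni (q',d,\cdots\mydollar)$, so clause (i) is the only applicable ID clause in each case, and the argument is purely definitional in both directions. First I would fix the observation that, matching the stack $Z_0\gamma\myspl\!\mydollar$ against the template $\gamma' Z\myspl\!\mydollar$ of clause (i), the relevant top symbol $Z$ is the last symbol of $Z_0\gamma$, which in any case lies in $\Gamma$ (as $\Gamma = \nfaalph \mydisjointu \syuugou{Z_0}$), so rules (1) and (2) are indeed applicable to it regardless of whether $\gamma$ is empty.

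For part (a) in the direction $\Rightarrow$, I would start from a step $(q,w\inputendmarker,Z_0\gamma\myspl\!\mydollar) \myid{(1)} (q',w'\inputendmarker,\beta\mydollar)$ and note that, since $\myid{(1)}$ collects exactly the clause-(i) steps justified by rule (1), the underlying NESA transition must be $\delta(q,a,Z\mydollar) \ni (q',\myright,Za\mydollar)$ for some $a \in \Sigma$; the side condition of rule (1) then forces $q \myto{a}{\myautomaton} q'$. Reading off clause (i) with $d = \myright$ (so $\mydtoz{d}=1$), the consumed input symbol is $a$, whence $w = aw'$, and the new stack content is $Z_0\gamma a$, giving $\beta = Z_0\gamma a\myspl$. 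In the direction $\Leftarrow$, given $a \in \Sigma$ with $q \myto{a}{\myautomaton} q'$, $w = aw'$ and $\beta = Z_0\gamma a\myspl$, I would simply instantiate rule (1) at the top symbol $Z$ of $Z_0\gamma$ and apply clause (i); since $a \in \Sigma$ is not the endmarker there is at least $\inputendmarker$ to its right, so the side condition ($k = 0 \Rightarrow d \neq \myright$) of clause (i) is satisfied and the $\myright$ move is legal.

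Part (b) is entirely analogous, the differences being cosmetic: rule (2) pushes a bracket $b \in \brkset$ with direction $\mystop$ and fires for every current input symbol. Thus in the $\Rightarrow$ direction I read off $b \in \brkset$ with $q \myto{b}{\myautomaton} q'$, and observe that $d = \mystop$ (so $\mydtoz{d}=0$) leaves the input untouched, yielding $w = w'$ and $\beta = Z_0\gamma b\myspl$. In the $\Leftarrow$ direction I instantiate rule (2) -- which is available irrespective of whether the input cursor currently sits on a $\Sigma$-symbol or on $\inputendmarker$, since its input argument is universally quantified -- and apply clause (i), with $d = \mystop$ trivially meeting the side condition even when only $\inputendmarker$ remains on the tape.

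Since everything reduces to matching the two concrete rules against clause (i) and reading off the stack and input updates, I expect no real difficulty. The only point requiring care is the input-tape bookkeeping at the endmarker: confirming that the $\myright$ move in (a) never runs off the tape, and that the $\mystop$ move in (b) remains applicable when only $\inputendmarker$ is left. Both checks are immediate, so this lemma serves mainly as the base-case characterization that will be combined with the big-step semantics of the $i$-rules (Lemma~\ref{lem:irules}) to establish correctness of $A_\alpha$.
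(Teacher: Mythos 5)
Your proposal is correct and takes exactly the route the paper intends: the paper dismisses Lemma~\ref{lem:abrule} as ``immediate,'' meaning precisely the definitional unfolding you carry out, matching rules (1) and (2) against clause (i) of the ID relation and reading off the input and stack updates. Your extra care about the top-symbol matching of $Z_0\gamma$ and the endmarker side condition $k = 0 \rightarrow d \neq \myright$ is sound and simply makes explicit what the paper leaves implicit.
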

    \begin{restatable}{lemma}{lemirules} \label{lem:irules}
        Let $q,q',p \in Q_\myautomaton$, $w,w' \in \Sigma^\ast$, $\gamma \in \Gamma^\ast$ and $\beta \in (\Gamma \mydisjointu \{\, \myspl \})^\ast$.  
        Suppose that $q_0 \mytto{\gamma}{\myautomaton} q$.
        For each $i \in \mynat$, the following equivalence holds:
        $(q,w\inputendmarker,Z_0 \gamma \myspl\! \mydollar) \myid{(3)} (q' \symcall_i^{0}, w\inputendmarker, Z_0 \gamma i \myspl\! \mydollar) \myid{} \cdots \myid{} (p,w'\inputendmarker,\beta \mydollar)$ where no ID with a state in $Q_\myautomaton$ appears in the intermediate steps $\cdots$ $\iff q \myto{i}{\myautomaton} q' \land p = q' \land w = \mymem_i(\gamma) w' \land \beta = Z_0 \gamma i \myspl\!$.
    \end{restatable}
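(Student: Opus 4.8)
The plan is to prove the $\Leftarrow$ direction constructively — exhibiting the (forced) run of the $i$-rules and verifying it reaches the claimed ID — and then to read off $\Rightarrow$ from the determinism of the big ``transition'' (Proposition~\ref{prop:det}). Indeed, from $(q'\symcall_i^0, w\inputendmarker, Z_0\gamma i\myspl\mydollar)$ every applicable $i$-rule is uniquely determined and no ID with a state in $Q_\myautomaton$ occurs until the transition completes, so the run is the unique maximal one; moreover rule (5) can fire only when the scanned input symbol equals the scanned stack symbol, so the run can reach a pure $Q_\myautomaton$ state at all only if the consumed prefix of $w$ equals $\mymem_i(\gamma)$. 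Hence the existence of a completing run (the hypothesis of $\Rightarrow$) forces $w = \mymem_i(\gamma)w'$, and the construction then pins down $p = q'$ and $\beta = Z_0\gamma i\myspl$. (The opening step $\myid{(3)}$ already encodes $q\myto{i}{}q'$ through the side condition $\delta_\myautomaton(q,i)\ni q'$ of rule~(3).)

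For the $\Leftarrow$ direction I would split the run into the three phases flagged by the state top — \emph{call} ($\symcall$), \emph{execute} ($\symexec$) and \emph{return} ($\symret$) — and analyse them in order. First, the call phase (rule~(4)): starting with the pointer on the freshly pushed $i$, the pointer descends through $\gamma$, incrementing the superscript on each number character $i$ it passes, until it reaches the most recent $\lbrack_i$ (rule~(4), second branch), or the bottom marker $Z_0$ if no such bracket exists. The existence and location of this $\lbrack_i$, together with the fact that between it and its matching $\rbrack_i$ there is no occurrence of $i$, $\lbrack_i$ or $\rbrack_i$, follow from the well-nested bracket structure of ref-word prefixes (Corollary~\ref{cor:prefixdyck}) and the syntactic restriction on rewbs (Definition~\ref{def:rewb}). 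The $Z_0$ case corresponds exactly to $\mymem_i(\gamma)=\varepsilon$ (no $i$-bracket in $\gamma$), where the execute phase is empty and the return phase consumes nothing; I would dispatch it first as a base case.

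The heart of the argument is the execute phase (rules~(5)--(8)), which I would establish by well-founded induction mirroring the recursion of $\deref^\circ$. The claim is that, scanning rightward from just after $\lbrack_i$ to the matching $\rbrack_i$, the automaton consumes from the input exactly $\mymem_i(\gamma)$: each terminal $a$ inside the bracket is matched and consumed (rule~(5)), the symbols of nested brackets $\lbrack_j,\rbrack_j$ ($j\neq i$) are skipped (rule~(6)), and each number character $j$ triggers a recursive sub-transition (rule~(7)) that, by the induction hypothesis applied at the strictly earlier prefix ending at this $j$, consumes exactly $\mymem_j$ of that prefix. Summing these contributions and using the inductive characterization of $\mymem_i$ (Proposition~\ref{prop:memind}, clauses~(a) and~(b), valid because $i$ stays in $\myopen$ throughout the bracket) shows the total consumed string is $\mymem_i(\gamma)$; I would take the measure of the induction to be the length of the stack prefix being dereferenced, which strictly decreases at each recursive call since the number character handled by a sub-call sits strictly below the one that spawned it. Finally, the return phase (rules~(9)--(10)) climbs back up from $\rbrack_i$, decrementing the superscript on each $i$ it passes, and by construction the counter reaches $0$ precisely at the pushed $i$ at the top, where rule~(10) pops the command and restores a pure $Q_\myautomaton$ state — leaving the automaton in state $q'$ with stack $Z_0\gamma i\myspl\mydollar$, as required.

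The main obstacle is the execute-phase induction together with the return-phase bookkeeping: I must verify that the superscript counter (set during the call phase to the number of $i$'s between $\lbrack_i$ and the pushed $i$, inclusive) is exactly consumed by the return phase, so that control returns to the correct position, and simultaneously that the recursively invoked sub-transitions for inner number characters neither consume too much nor leave the pointer misplaced when they hand control back. Here the syntactic constraint that an $i$-bracket contains no $i$, $\lbrack_i$ or $\rbrack_i$ is essential: it guarantees that the matching $\rbrack_i$ is the first one the execute phase meets and that the counter is not disturbed by stray $i$'s inside the bracket, so that the call and return counters remain consistent. The closed-star hypothesis does not enter the correctness identity directly; rather, via Propositions~\ref{prop:tconst} and~\ref{prop:sconst} it bounds the counters and the recursion depth, which is what keeps the state set $\{\,q\xi \mid \xi\in\cmdset^{\leq\sconst}\,\}$ finite and hence makes $A_\alpha$ a legitimate NESA in the first place.
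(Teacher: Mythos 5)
Your overall architecture is the same as the paper's: a phase decomposition (call/execute/return) with an induction mirroring the recursion of $\deref^{\circ}$ for the $\Leftarrow$ direction, and a determinism argument (the paper's relation $\myonlymove{}$, packaged in Proposition~\ref{prop:det}) to force the $\Rightarrow$ direction. Your base-case dispatch ($\lbrack_i \notin \gamma$, so $\mymem_i(\gamma)=\varepsilon$), your use of the well-nestedness of ref-word prefixes and of Definition~\ref{def:rewb} to keep the counter undisturbed inside the $i$-bracket, and your induction measure (position of the dereferenced number character, strictly decreasing at sub-calls) all match the paper's proof of Proposition~\ref{prop:det}. One structural remark: to make the execute-phase induction hypothesis applicable at an inner number character $j$, you must strengthen the statement to a call occurring under an arbitrary call-stack context $\xi$ and with an arbitrary stack suffix $\gamma'$ above the scanned character (the paper's Proposition~\ref{prop:det} is exactly this strengthening); your sketch gestures at this but never states it, and the lemma as stated (with $\xi=\gamma'=\varepsilon$) is not directly inductible.

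There is, however, a genuine gap in your final paragraph: the claim that the closed-star hypothesis ``does not enter the correctness identity directly'' and only serves to keep $Q$ finite is wrong, and the omission breaks your $\Leftarrow$ direction. The bounds $\tconst$ and $\sconst$ are hard-wired into $\delta$: the incrementing branch of rule (4) fires only when $x < \tconst$, and the sub-call of rule (7) produces the state $q\,\xi\,\symexec_i^{x}\symcall_j^{0}$, which lies in $Q$ only when $\zettaiti{\xi}+2 \leq \sconst$. If either bound were violated along the run you construct, the automaton would simply get stuck and the claimed completing run would not exist; finiteness of $Q$ is true by definition and is not the issue. So the correctness proof must verify, at every increment and every sub-call, that the bounds hold for the actual ref-word prefix at hand. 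The paper does this by threading two hypotheses through the induction of Proposition~\ref{prop:det} --- that $\gamma i \gamma'$ is a prefix of some $v \in \reflang(\alpha)$, and that $s^{y}_{\zettaiti{\gamma},i} \geq \zettaiti{\xi}+1$ --- and then deriving $x = t^{\gamma i}_{\cdot,i} \leq t^{v}_{\cdot,i} \leq \tconst$ (Definition~\ref{def:t} and Proposition~\ref{prop:tconst}) for the counter, and the chain $\sconst \geq s^{v}_{\cdot,k_1} \geq s^{y}_{\cdot,k_1} \geq s^{y}_{\cdot,i}+1 \geq \zettaiti{\xi}+2$ (Definition~\ref{def:s} and Proposition~\ref{prop:sconst}) for the legality of the sub-call state. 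These invocations of the closed-star condition sit inside the inductive step of the correctness argument, not outside it; your proof needs to add precisely these verifications (and the strengthened statement that carries the hypotheses making them available) to be complete.
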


Lemma~\ref{lem:abrule} is immediate.
        To prove Lemma~\ref{lem:irules}, we first prepare tools from our previous study~\cite{mfcs2023}.
A string $v$ over $\Sigma \mydisjointu \brkset \mydisjointu \mynat$ is \emph{matching} when for all $i \in \mynat$ that appears in $v$, if $\mathord{\lbrack_i}$ appears in $v$ before this $i$, then $\mathord{\rbrack_i}$ appears between these $\mathord{\lbrack_i}$ and $i$~\cite{mfcs2023}.
    We cite the following facts with respect to the matching property from the same paper.
\begin{claim} \label{clm:mat}
    \begin{romanenumerate}
        \item Every $v \in \reflang(\alpha)$ is matching.
        \item A prefix of a matching string is matching.
    \end{romanenumerate}
\end{claim}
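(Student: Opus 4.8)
The plan is to prove part~(ii) directly from the definition and part~(i) by structural induction on $\alpha$, leaning on bracket-balance (Corollary~\ref{cor:finalopisempty}) and the syntactic restriction imposed in Definition~\ref{def:rewb}. Part~(ii) is routine: given a matching $v$ and a prefix $p$ with $v = pr$, I would fix any occurrence of a number character $i$ in $p$ and suppose some $\lbrack_i$ precedes it within $p$. Since $p$ sits inside $v$ at the same positions, this $\lbrack_i$ also precedes the same $i$ in $v$, so matching of $v$ furnishes an $\rbrack_i$ strictly between them; as that witness lies positionally between $\lbrack_i$ and $i$, both of which are in $p$, the $\rbrack_i$ is itself in $p$, whence $p$ is matching.

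For part~(i), I would induct on the structure of $\alpha$. The base cases $a$, $\varepsilon$ and $\bs i$ are immediate (the first two contain no number character, and the one-character string $i$ has no preceding $\lbrack_i$), and the union case follows at once from the induction hypothesis. The substantive cases are concatenation and the capturing group. For $\alpha = \alpha_1 \alpha_2$, write $v = v_1 v_2$ with $v_l \in \reflang(\alpha_l)$, and consider a number character $i$ in $v$ together with some $\lbrack_i$ preceding it. If both lie in $v_1$ (in particular if $i \in v_1$), the induction hypothesis for $v_1$ supplies the $\rbrack_i$ witness; if $i \in v_2$ and the $\lbrack_i$ also lies in $v_2$, the induction hypothesis for $v_2$ applies. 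The remaining case has $\lbrack_i \in v_1$ and $i \in v_2$, and here I would invoke Corollary~\ref{cor:finalopisempty}, which gives $\myopen(v_1) = \emptyset$; by the Dyck structure this forces the chosen $\lbrack_i$ to be closed by a matching $\rbrack_i$ later within $v_1$, hence strictly between the $\lbrack_i$ and $i$. The star case $\alpha = \beta^\ast$ then reduces to iterating this concatenation argument over $v = w_1 \cdots w_n$ with $w_j \in \reflang(\beta)$, again using that each factor is both matching (induction hypothesis) and balanced (Corollary~\ref{cor:finalopisempty}).

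The capturing-group case $\alpha = (_i \beta )_i$, where $v = \lbrack_i u \rbrack_i$ with $u \in \reflang(\beta)$, is where I would use the syntactic restriction of Definition~\ref{def:rewb}: since $\beta$ contains neither $\bs i$ nor $(_i\,)_i$ as a subexpression, the factor $u$ contains no number character $i$ and no bracket $\lbrack_i$ or $\rbrack_i$. Consequently any number character $k$ occurring in $v$ satisfies $k \neq i$ and lies in $u$, and every $\lbrack_k$ preceding it must also lie in $u$, because the outer brackets carry the index $i \neq k$; the induction hypothesis for $u$ then supplies an $\rbrack_k$ witness, which again lies within $u$ and hence between the $\lbrack_k$ and $k$. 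I expect the main obstacle to be precisely the concatenation subcase in which the opening bracket sits in the left factor while the number character sits in the right factor: this is the one place where matching cannot be read off from a single sub-derivation, and it is exactly what forces the appeal to bracket-balance, so care is needed to ensure that the $\rbrack_i$ produced really falls \emph{between} the chosen $\lbrack_i$ and the number character rather than merely somewhere in $v_1$.
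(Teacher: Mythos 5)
Your proof is correct, but be aware that this paper does not actually prove Claim~\ref{clm:mat}: both items are quoted from the earlier work \cite{mfcs2023} (``We cite the following facts\dots from the same paper''), so there is no in-paper proof to mirror, and your argument is a self-contained reconstruction. As such it holds up. Part~(ii) is the right positional argument, using that every position up to the occurrence of $i$ lies inside the prefix. In part~(i) the case analysis is exhaustive, and the genuinely delicate step --- the concatenation subcase with $\lbrack_i \in v_1$ and the number character $i \in v_2$ --- goes through, though it deserves one extra line: Corollary~\ref{cor:finalopisempty} literally says only that $i \notin \myopen(v_1)$, i.e.\ that the \emph{last} $i$-bracket occurring in $v_1$ (if any) is $\rbrack_i$, since membership of $i$ in $\myopen(w)$ is determined solely by the last occurrence among $\lbrack_i$ and $\rbrack_i$ in $w$; but this already implies that every occurrence of $\lbrack_i$ in $v_1$ is followed by some later $\rbrack_i$ within $v_1$, which is exactly the witness you need, and it sits strictly between the chosen $\lbrack_i$ and the $i$ in $v_2$. (Proposition~\ref{prop:refworddyck}, which gives the full alternation of $i$-brackets in any element of a ref-language, would deliver the same conclusion even more directly, and extends to the star case via its applicability to each factor $w_j \in \reflang(\beta)$.) Your capturing-group case correctly isolates where Definition~\ref{def:rewb} is indispensable --- $u$ contains neither the number character $i$ nor any $i$-bracket, so the outer $\lbrack_i$ can never act as an opening bracket for a number character inside $u$ --- and the star case is rightly the concatenation argument iterated. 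Note also that the two natural readings of the matching condition (witness for every preceding $\lbrack_i$ versus only the nearest one) are equivalent, since a $\rbrack_i$ between the nearest $\lbrack_i$ and the $i$ lies between every earlier $\lbrack_i$ and that $i$, so your per-pair reading costs nothing.
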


    Additionally, we also recall the set $I_\bot = Q \times (\Sigma^\ast \{ \inputendmarker \} \mydisjointu \{ \bot \}) \times (\Gamma \mydisjointu \{ \,\mathord{\myspl} \})^\ast \{ \mydollar \}$ of IDs expanded with a distinguished symbol $\bot \notin \Sigma \mydisjointu \{ \inputendmarker \}$ and the relation $\myonlymove{(n)}$ over $I_\bot$ defined as $C(u) \myonlymove{(n)} C'(u')$ for $u, u' \in \Sigma^\ast \{ \inputendmarker \} \mydisjointu \{ \bot \}$ if and only if $u' = \bot$ or there exist $w, w' \in \Sigma^\ast$ such that $u = w \inputendmarker \land u' = w' \inputendmarker \land \forall j, C''. C(u) \myid{(j)} C'' \iff (j = n \land C'' = C'(u'))$.\footnote{A notation like $C(u)$ stands for an ID whose input string is $u$.  The $(u)$ is omitted if not needed.}  We often omit the subscript $(n)$ and simply write $\myonlymove{}$.  The following claim is immediate from the definition.

    \begin{claim}  Let $w, w' \in \Sigma^\ast$ and $n \in \mynat$.
        \begin{romanenumerate}
            \item If $C(w \inputendmarker), C'(w' \inputendmarker) \in I$, then $C(w \inputendmarker) \myid{(n)} C'(w' \inputendmarker) \iff C(w \inputendmarker) \myonlymove{(n)} C'(w' \inputendmarker)$.
            \item Suppose that $C(w \inputendmarker) \myonlymove{} C'(u)$.  For all $C''$, $C(w \inputendmarker) \myid{(n)} C''$ implies $u \neq \bot \land C'' = C'(u)$.
        \end{romanenumerate}
    \end{claim}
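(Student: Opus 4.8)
The plan is to unfold the definition of $\myonlymove{(n)}$ directly; both parts are just the two reading directions of the biconditional $\forall j, C''.\ C(u) \myid{(j)} C'' \iff (j = n \land C'' = C'(u'))$ that sits in its non-trivial disjunct. Before doing so I would record two facts that drive the unfolding. First, $\myid{}$ is a relation on $I$ only, so every $\myid{}$-successor automatically carries a proper input in $\Sigma^\ast\{\inputendmarker\}$ and can never be a $\bot$-configuration. Second, within the big $i$-transition the next move is forced: the applicable rule and its successor are uniquely determined by the state top, the scanned symbol, and the pointed stack symbol — this one-wayness is the content of Proposition~\ref{prop:det}.

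For part (i) I would fix $C(w\inputendmarker), C'(w'\inputendmarker) \in I$ and note that, the target input $w'\inputendmarker$ being proper, the $\bot$-disjunct of $\myonlymove{(n)}$ is unavailable, so $C(w\inputendmarker) \myonlymove{(n)} C'(w'\inputendmarker)$ is equivalent to the defining biconditional. The direction $(\Leftarrow)$ is then immediate: instantiating the biconditional at $j = n$, $C'' = C'(w'\inputendmarker)$ reads off $C(w\inputendmarker) \myid{(n)} C'(w'\inputendmarker)$. For $(\Rightarrow)$, assuming $C(w\inputendmarker) \myid{(n)} C'(w'\inputendmarker)$ supplies the right-to-left half of the biconditional, while its left-to-right half — that every step out of $C(w\inputendmarker)$ uses rule $n$ and reaches $C'(w'\inputendmarker)$ — is exactly the uniqueness of the forced move from Proposition~\ref{prop:det}; together these give $C(w\inputendmarker) \myonlymove{(n)} C'(w'\inputendmarker)$.

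For part (ii) I would assume $C(w\inputendmarker) \myonlymove{} C'(u)$ together with the existence of some step $C(w\inputendmarker) \myid{(n)} C''$. By the first recorded fact $C'' \in I$, so it is a genuine successor with proper input, and by Proposition~\ref{prop:det} it is the unique such step; hence the hypothesis cannot be realised only through the trivial $\bot$-disjunct, placing us in the non-trivial disjunct. That disjunct forces $u \neq \bot$, and instantiating its biconditional at $j = n$ forces $C'' = C'(u)$, as required. The hard part — indeed the only part beyond mechanically unfolding the definition — is this repeated appeal to the uniqueness of forced moves; I expect it to be the sole, and minor, obstacle, everything else being immediate from the definition of $\myonlymove{}$ and the fact that $\myid{}$ yields only proper configurations.
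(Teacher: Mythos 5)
Your unfolding of the definition is the right frame for part (i)'s $(\Leftarrow)$ direction and for extracting $C'' = C'(u)$ in part (ii), but the way you discharge the uniqueness obligations is where the attempt genuinely diverges from the paper and breaks down. The paper gives no argument at all: the claim is stated as ``immediate from the definition,'' and the reason it can be is that every piece of uniqueness the claim involves is \emph{hypothesized}, not proved. In part (ii), the assumption $C(w\inputendmarker) \myonlymove{} C'(u)$ already contains, in its non-$\bot$ disjunct, the clause $\forall j, C''.\; C(w\inputendmarker) \myid{(j)} C'' \iff (j = n \land C'' = C'(u))$; given an actual step $C(w\inputendmarker) \myid{(n)} C''$, one merely instantiates this clause, and the $\bot$ disjunct is excluded because a $\bot$ target signifies a failed computation (the required input is unavailable, so no actual step exists), contradicting the assumed step. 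No fact about the transition table of $A_\alpha$ enters anywhere.

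Your appeal to Proposition~\ref{prop:det} is therefore a genuine flaw, for two reasons. First, it is structurally backwards: the claim is the definitional bridge between $\myid{}$ and $\myonlymove{}$ that must be in place \emph{before} Proposition~\ref{prop:det} and Lemma~\ref{lem:irules} deploy this machinery; Proposition~\ref{prop:det} is where individual $\myonlymove{}$ steps are \emph{established}, by checking rule by rule that exactly one of the ten rules applies, so it cannot also be the source of the uniqueness that makes the claim true. Second, and more concretely, Proposition~\ref{prop:det} only concerns IDs arising inside the big $i$-transition, whose state top is a command symbol such as $\symcall_i^{x}$; the claim quantifies over arbitrary IDs, including those whose state lies in $Q_\myautomaton$, where $A_\alpha$ genuinely branches (several instances of rules (1)--(3) may apply when the NFA $\myautomaton$ has several outgoing transitions), so no forced-move principle is available there and your left-to-right half of the inner biconditional in part (i) cannot be obtained this way --- to the extent that direction needs a unique-successor property, that property is exactly what the relation $\myonlymove{(n)}$ itself encodes, to be verified per instance later, not imported as a global fact about the machine. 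Relatedly, your inference in part (ii) that uniqueness of the actual step ``places us in the non-trivial disjunct'' is a non-sequitur: which disjunct of $\myonlymove{}$ holds is determined by $u$ and the hypothesis, not by how many successors $C(w\inputendmarker)$ has. The correct proof is the mechanical unfolding you started with, with all uniqueness read off from the hypothesized $\myonlymove{}$ relation rather than derived from the automaton.
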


Let $w \backslash w_1$ denote $w_2$ if $w = w_1 w_2$, and $\bot$ otherwise.

\begin{proposition} \label{prop:det}
    Let $q' \in Q$, $\xi \in \cmdset^\ast$, $w \in \Sigma^\ast$, $\gamma, \gamma' \in \Gamma^\ast$ and $i \in \mynat$.
    Suppose that (a) $q' \xi \symcall_i^{0} \in Q$, (b) $y = \gamma i \gamma'$ is a prefix of some $v \in \reflang(\alpha)$, and (c) $s_{\zettaiti{\gamma},i}^{y} \geq \zettaiti{\xi} + 1$ (see Definition~\ref{def:s} for $s$).  Then,
    \[
        (q' \xi \symcall_i^{0}, w\inputendmarker, Z_0 \gamma i \myspl\! \gamma' \mydollar) \myonlymove{}^\ast (q' \xi, w\inputendmarker\backslash\mymem_i(\gamma), Z_0 \gamma i \!\myspr\! \gamma' \mydollar)
    \]
    holds, and no ID with a state of $Q_\myautomaton$ appears in the steps $\myonlymove{}^\ast$.\footnote{A stack representation $\cdots \!\myspr\! Z \cdots \mydollar$ with head-reversed arrow $\mathord{\myspr}$ equals $\cdots Z \myspl\! \cdots \mydollar$.  In the rightmost ID, $i \!\myspr\! \gamma' \mydollar$ means $i \myspl\! \mydollar$ if $\gamma' = \varepsilon$.}
\end{proposition}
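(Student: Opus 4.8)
The plan is to prove the statement by strong induction on $\zettaiti{\gamma}$, the length of the stack content lying below the pointer. The guiding intuition is that the big ``transition'' implemented by the $i$-rules mirrors, phase by phase, the inductive characterization of $\mymem_i$ from Proposition~\ref{prop:memind}, and that the only genuine recursion---processing a nested number character inside the bracketed region---always occurs on a strictly shorter stack prefix, so the induction hypothesis applies to it. Throughout I would work with the deterministic relation $\myonlymove{}$ and check at each step that exactly one of the rules (3)--(10) is applicable (so $\myid{}$ and $\myonlymove{}$ agree on these configurations); since every intermediate state carries at least one pending command and is therefore of the form $q'\xi'$ with $\xi'\neq\varepsilon$, no intermediate state lies in $Q_\myautomaton$, as required.

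I would split the computation into the three phases of the informal description. In the \emph{call} phase, starting from $q'\xi\symcall_i^{0}$ with the pointer just past the top $i$, rule (4) walks the pointer down, incrementing the superscript once for each occurrence of $i$ it passes, until it meets the nearest $\lbrack_i$ or the bottom marker $Z_0$. Using Claim~\ref{clm:mat} together with the syntactic restriction of Definition~\ref{def:rewb} (no $\bs i$ inside $(_i\;)_i$), I would argue that $i\notin\myopen(\gamma)$, so the most recent $i$-bracketed block before the starting $i$ is already closed; hence the descent skips its $\rbrack_i$ and lands exactly on its opening $\lbrack_i$ (or reaches $Z_0$, in which case $\mymem_i(\gamma)=\varepsilon$ and nothing is consumed). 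In the \emph{exec} phase (rules (5)--(8)) the pointer moves back up across this block: each literal $a$ is matched and consumed from the input, each stray bracket is skipped, and each nested number character $k$ fires rule (7), pushing $\symcall_k^{0}$ onto the state. In the \emph{return} phase (rules (9)--(10)) the pointer climbs from $\rbrack_i$ back to the starting $i$, decrementing the superscript on each occurrence of $i$; the count recorded during the call phase makes the superscript reach $0$ precisely at the starting $i$, where rule (10) pops the command and leaves the pointer at its initial position, giving the final stack $Z_0\gamma i\!\myspr\!\gamma'\mydollar$.

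The crux is the exec phase, where two things must be verified together. First, I would match the consumed input against $\mymem_i(\gamma)$ using exactly the clauses of Proposition~\ref{prop:memind}: literal symbols contribute the $g(\cdot)$ part, while for each nested $k$ the induction hypothesis---applicable because the prefix ending at that $k$ is a proper prefix of $\gamma$---shows the sub-transition consumes $\mymem_k$ of the relevant prefix, restores the pointer just past $k$, and pops $\symcall_k^{0}$ so that execution resumes in state $q'\xi\symexec_i^{x}$; this is exactly clause (b) of that proposition. Second, I must check that each recursive invocation still meets precondition (c): the new command string $\xi\symexec_i^{x}$ has length $\zettaiti{\xi}+1$, and since cell $i$ is open when the inner $k$ is read, clause (d) of Definition~\ref{def:s} forces $s_{\,\cdot\,,k}\ge s_{\,\cdot\,,i}+1\ge\zettaiti{\xi}+2$, which is precisely the hypothesis the inner call needs and which, by Proposition~\ref{prop:sconst}, keeps every state inside $\cmdset^{\leq\sconst}$.

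I expect the propagation of condition (c) to the nested calls to be the main obstacle: it is the only step that genuinely uses the closed-star assumption (via the bound on $s$) and it is what guarantees the call-stack encoded in the states never exceeds length $\sconst$, so that the finite state set $Q$ really contains every state the construction needs. In parallel, Proposition~\ref{prop:tconst} must be invoked to bound the superscript $x$ by $\tconst$, so that each $\symcall_i^{x}$ likewise lies in $\cmdset$. The remaining work---the off-by-one bookkeeping of the superscript across the call and return phases, and the case analysis establishing determinism of $\myonlymove{}$ at each rule---is routine but delicate; in particular, the symmetry between the increments of rule (4) and the decrements of rule (9) over the occurrences of $i$ strictly between the matching $\rbrack_i$ and the starting $i$ is exactly what makes the bookmark land back at the starting $i$.
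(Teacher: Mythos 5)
Your proposal is correct and takes essentially the same approach as the paper's proof: the same call/exec/return phase analysis, the same use of the matching property (Claim~\ref{clm:mat}) to isolate the $i$-bracketed block with no $i$ inside it, the same decomposition of $\mymem_i(\gamma)$ into $g(\cdot)$-parts and nested $\mymem_k$-parts handled by the induction hypothesis, the same propagation of hypothesis (c) via $s_{\cdot,k} \geq s_{\cdot,i}+1$ from clause (d) of Definition~\ref{def:s}, and the same $\tconst$-bound on the superscript via Proposition~\ref{prop:tconst}. The only deviation is cosmetic: you induct on $\zettaiti{\gamma}$ whereas the paper inducts on the number of number characters, but both measures strictly decrease at exactly the same recursive invocations, so the arguments coincide.
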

\begin{proof}
    For the case $\mathord{\lbrack_i} \notin \gamma$, we have $\mymem_i(\gamma) = \varepsilon$ and thus $w\inputendmarker \backslash \mymem_i(\gamma) = w\inputendmarker$.  Hence, the following calculation verifies the claim:
    \[
        \begin{array}{ll}
            & (q' \xi \symcall_i^{0}, w\inputendmarker, Z_0 \gamma i \myspl\! \gamma' \mydollar) \\
            \myonlymove{} & (q' \xi \symcall_i^{1}, w\inputendmarker, Z_0 \gamma \myspl\! i \gamma' \mydollar) \\
        \myonlymove{}^\ast & (q' \xi \symcall_i^{1}, w\inputendmarker, Z_0 \myspl\! \gamma i \gamma' \mydollar) \\
        \myonlymove{} & (q' \xi \symret_i^{1}, w\inputendmarker,  Z_0 \!\myspr\! \gamma i \gamma' \mydollar) \\
        \myonlymove{}^\ast & (q' \xi \symret_i^{1}, w\inputendmarker, Z_0 \gamma i \myspl\! \gamma' \mydollar) \\
        \myonlymove{} & (q' \xi \symret_i^{0}, w\inputendmarker,  Z_0 \gamma i \!\myspr\! \gamma' \mydollar) \\
        \myonlymove{} & (q' \xi, w\inputendmarker, Z_0 \gamma i \!\myspr\! \gamma' \mydollar).
    \end{array}
    \]

    The case $\mathord{\lbrack_i} \in \gamma$ can be proved by induction on the number of number characters.  By assumption (b) and Claim~\ref{clm:mat}, $\gamma i$ is matching and $\gamma$ has the decomposition $\gamma = \gamma_1 \lbrack_i \gamma_2 \rbrack_i \gamma_3$ where $\gamma_2$ has neither $\mathord{\lbrack_i}$ and $\mathord{\rbrack_i}$, and $\gamma_3$ has no $\mathord{\lbrack_i}$.

    \begin{claim} \label{clm:inotgamma2}
        For the decomposition $\gamma = \gamma_1 \lbrack_i \gamma_2 \rbrack_i \gamma_3$ above, $i \notin \gamma_2$ holds.
    \end{claim}
    \begin{claimproof}
        In fact, $i \in \gamma_2$ leads to a decomposition $\gamma_2 = \gamma_{2,1} i \gamma_{2,2}$.  By assumption (b) and Claim~\ref{clm:mat}, $\gamma_1 \lbrack_i \gamma_{2,1} i$ is matching.  Thus, we obtain $\mathord{\rbrack_{i}} \in \gamma_{2,1}$, which contradicts $\mathord{\rbrack_i} \notin \gamma_2$.
    \end{claimproof}

    \begin{claim}
        Fix an arbitrary decomposition $\gamma_3 i = \gamma_{3,1} i \gamma_{3,2}$.
        Let $x = \zettaiti{\gamma_{3,2}}_i + 1 = t_{\zettaiti{\gamma_1 \lbrack_i \gamma_2 \rbrack_i \gamma_{3,1}},i}^{\gamma i}$ (see Definition~\ref{def:t} for $t$).\footnote{The notation $\zettaiti{\gamma}_i$ denotes the number of occurrences of $i$ in a string $\gamma$.}
        Then, $x \leq \tconst$ and 
        \[
            (q' \xi \symcall_i^{0}, w\inputendmarker, Z_0 \gamma i \myspl\! \gamma' \mydollar) \myonlymove{}^\ast (q' \xi \symcall_i^{x}, w\inputendmarker, Z_0 \gamma_1 \lbrack_i \gamma_2 \rbrack_i \gamma_{3,1} \myspl\! i \gamma_{3,2} \gamma' \mydollar).
        \]
        In particular, the decomposition in which $\gamma_{3,1}$ has no $i$ validates the following calculation with $x = \zettaiti{\gamma_3}_i + 1$:
        \[
            \begin{array}{ll}
                & (q' \xi \symcall_i^{0}, w\inputendmarker, Z_0 \gamma i \myspl\! \gamma' \mydollar) \\
                \myonlymove{}^\ast & (q' \xi \symcall_i^{x}, w\inputendmarker,  Z_0 \gamma_1 \lbrack_i \gamma_2 \rbrack_i \gamma_{3,1} \myspl\! i \gamma_{3,2} \gamma' \mydollar) \\
                \myonlymove{}^\ast & (q' \xi \symcall_i^{x}, w\inputendmarker, Z_0 \gamma_1 \lbrack_i \gamma_2 \rbrack_i \myspl\! \gamma_{3} \gamma' \mydollar) \\
                \myonlymove{} & (q' \xi \symcall_i^{x}, w\inputendmarker, Z_0 \gamma_1 \lbrack_i \gamma_2 \myspl \rbrack_i  \gamma_{3} \gamma' \mydollar) \\
                \myonlymove{}^\ast & (q' \xi \symcall_i^{x}, w\inputendmarker, Z_0 \gamma_1 \lbrack_i \,\myspl\! \gamma_2  \rbrack_i  \gamma_{3} \gamma' \mydollar).
            \end{array}
        \]
    \end{claim}
    \begin{claimproof}
        We prove by induction on $\zettaiti{\gamma_{3,2}}_i$.  The base case $\zettaiti{\gamma_{3,2}}_i = 0$ follows from
        \[
            (q' \xi \symcall_i^{0}, w\inputendmarker, Z_0 \gamma i \myspl\! \gamma' \mydollar) \myonlymove{} (q' \xi \symcall_i^{1}, w\inputendmarker, Z_0 \gamma \myspl\! i \gamma' \mydollar).
        \]
        Here, note that $\tconst \geq 1$ holds by assumption (b) and Proposition~\ref{prop:tconst}.\footnote{Alternatively, retake $\tconst$ of Proposition~\ref{prop:tconst} to be $\theta \geq 1$ if necessary.}
        Suppose that $\zettaiti{\gamma_{3,2}}_i \geq 1$ and the decomposition $\gamma_{3,2} = \gamma_{3,2,1} i \gamma_{3,2,2}$ where $\gamma_{3,2,1}$ has no $i$.  By the induction hypothesis, 
        \[
            \begin{array}{ll}
                & (q' \xi \symcall_i^{0}, w\inputendmarker, Z_0 \gamma i \myspl\! \gamma' \mydollar) \\
                \myonlymove{}^\ast & (q' \xi \symcall_i^{x'}, w, Z_0 \gamma_1 \lbrack_i \gamma_2 \rbrack_i \gamma_{3,1} i \gamma_{3,2,1} \myspl\! i \gamma_{3,2,2} \gamma' \mydollar) \\
                \myonlymove{}^\ast & (q' \xi \symcall_i^{x'}, w\inputendmarker, Z_0 \gamma_1 \lbrack_i \gamma_2 \rbrack_i \gamma_{3,1} i \myspl\! \gamma_{3,2,1} i \gamma_{3,2,2} \gamma' \mydollar)
            \end{array}
        \]
        where $x' = \zettaiti{\gamma_{3,2,2}}_i + 1 = t_{\zettaiti{\gamma_1 \lbrack_i \gamma_2 \rbrack_i \gamma_{3,1} i \gamma_{3,2,1}}, i}^{\gamma i}$.
        We claim that $x = x' + 1 \leq \tconst$.  In fact, $x' = t_{\zettaiti{\cdots \gamma_{3,1} i \gamma_{3,2,1}}, i}^{\gamma i} = t_{\zettaiti{\cdots \gamma_{3,1} i}, i}^{\gamma i} = t_{\zettaiti{\cdots \gamma_{3,1}}, i}^{\gamma i} - 1$ because $i \notin \gamma_{3,2,1}$ and from the definition of $t$.  Hence, $x = t_{\zettaiti{\cdots \gamma_{3,1}}, i}^{\gamma i} = x' + 1$.  Furthermore, by assumption (b), $\gamma i$ is a prefix of some $v \in \reflang(\alpha)$.  Therefore, 
        \[
            x' + 1 = t_{\zettaiti{\cdots \gamma_{3,1}}, i}^{\gamma i} \leq t_{\zettaiti{\cdots \gamma_{3,1}}, i}^{v} \leq \tconst.
        \]
        So, we can take the next calculation step with rule (4):
        \[
            \begin{array}{ll}
                & (q' \xi \symcall_i^{x'}, w\inputendmarker, Z_0 \gamma_1 \lbrack_i \gamma_2 \rbrack_i \gamma_{3,1} i \myspl\! \gamma_{3,2,1} i \gamma_{3,2,2} \gamma' \mydollar) \\
                \myonlymove{} & (q' \xi \symcall_i^{x}, w\inputendmarker, Z_0 \gamma_1 \lbrack_i \gamma_2 \rbrack_i \gamma_{3,1} \myspl\! i \gamma_{3,2} \gamma' \mydollar).
            \end{array}
        \]
        The proof has completed.
    \end{claimproof}

    Recall that $g: (\Sigma \mydisjointu \brkset)^\ast \to \Sigma^\ast$ denotes the free monoid homomorphism defined by $a \in \Sigma \longmapsto a$ and $b \in \brkset \longmapsto \varepsilon$ (Definition~\ref{def:deref}).
    When we write $\gamma_2 = z_0 k_1 z_1 \cdots k_m z_m$ where $k_1, \dots, k_m \in \mynat$ and none of $z_0, z_1, \dots, z_m$ has a number character, it can be easily checked that
    \[
        \mymem_i(\gamma) = g(z_0) \mymem_{k_1}(\gamma \lbrack_i z_0) g(z_1) \cdots \mymem_{k_m}(\gamma_1 \lbrack_i z_0 k_1 z_1 \cdots k_{m-1} z_{m-1}) g(z_m).
    \]
    We splice the calculation steps as follows:
    \[
        \begin{array}{ll}
            & (q' \xi \symcall_i^{x}, w\inputendmarker, Z_0 \gamma_1 \lbrack_i \,\myspl\! \gamma_2 \rbrack_i \gamma_3 i \gamma' \mydollar) \\
            \myonlymove{} & (q' \xi \symexec_i^{x}, w\inputendmarker,  Z_0 \gamma_1 \lbrack_i \myspr\! z_0 k_1 z_1 \cdots k_m z_m \rbrack_i \gamma_3 i \gamma' \mydollar) \\
            \myonlymove{}^\ast & (q' \xi \symexec_i^{x}, w\inputendmarker \backslash g(z_0), Z_0 \gamma_1 \lbrack_i z_0 k_1 \myspl\! z_1 \cdots k_m z_m \rbrack_i \gamma_3 i \gamma' \mydollar).
        \end{array}
    \]
    By assumption (b), $y = \gamma i \gamma'$ is a prefix of some $v \in \reflang(\alpha)$.  With (c), 
    \[
        \sconst \geq s_{\zettaiti{\gamma_1 \lbrack_i z_0},k_1}^{v} \geq s_{\zettaiti{\gamma_1 \lbrack_i z_0},k_1}^{y}
        \geq s_{\zettaiti{\gamma_1 \lbrack_i z_0 k_1},i}^{y} + 1 \geq s_{\zettaiti{\gamma},i}^{y} + 1 \geq \zettaiti{\xi} + 2.
    \]
    Therefore, $q \xi \symexec_i^{x} \symcall_{k_1}^{0} \in Q$.
    Note that the penultimate inequality follows from the definition of $s$ and the fact that $\gamma_2$ and $\gamma_3$ have no $\mathord{\lbrack_i}$.
    So, we can take the next calculation step with rule (7).  Moreover, by the induction hypothesis,\footnote{We do not need the induction hypothesis when $m = 0$, the base case of the induction.} we continue as follows:
    \[
        \begin{array}{ll}
            & (q' \xi \symexec_i^{x}, w\inputendmarker \backslash g(z_0), Z_0 \gamma_1 \lbrack_i z_0 k_1 \myspl\! z_1 \cdots k_m z_m \rbrack_i \gamma_3 i \gamma' \mydollar) \\
            \myonlymove{} & (q' \xi \symexec_i^{x} \symcall_{k_1}^{0}, w\inputendmarker \backslash g(z_0), Z_0 \gamma_1 \lbrack_i z_0 k_1 \myspl\! z_1 \cdots k_m z_m \rbrack_i \gamma_3 i \gamma' \mydollar) \\
            \myonlymove{}^\ast & (q' \xi \symexec_i^{x}, w\inputendmarker \backslash g(z_0) \mymem_{k_1}(\gamma_1 \lbrack_i z_0), Z_0 \gamma_1 \lbrack_i z_0 k_1 \!\myspr\! z_1 \cdots k_m z_m \rbrack_i \gamma_3 i \gamma' \mydollar) \\
            \myonlymove{}^\ast & (q' \xi \symexec_i^{x}, w\inputendmarker \backslash g(z_0) \mymem_{k_1}(\gamma_1 \lbrack_i z_0) g(z_1), Z_0 \gamma_1 \lbrack_i z_0 k_1 z_1 \!\myspr\! \cdots k_m z_m \rbrack_i \gamma_3 i \gamma' \mydollar).  
        \end{array}
    \]
    We can repeat this also for $k_2, \dots, k_m$.  Therefore, the following calculation
    \[ 
    \begin{array}{ll}
      & (q' \xi \symexec_i^{x}, w\inputendmarker \backslash g(z_0) \mymem_{k_1}(\gamma_1 \lbrack_i z_0) g(z_1), Z_0 \gamma_1 \lbrack_i z_0 k_1 z_1 \!\myspr\! \cdots k_m z_m \rbrack_i \gamma_3 i \gamma' \mydollar) \\
            \myonlymove{}^\ast & (q' \xi \symexec_i^{x}, w\inputendmarker \backslash \mymem_i(\gamma), Z_0 \gamma_1 \lbrack_i \gamma_2 \rbrack_i \,\myspl\! \gamma_3 i \gamma' \mydollar) \\
            \myonlymove{} & (q' \xi \symret_i^{x}, w\inputendmarker \backslash \mymem_i(\gamma), Z_0 \gamma_1 \lbrack_i \gamma_2 \rbrack_i \myspr\! \gamma_3 i \gamma' \mydollar) \\
            \myonlymove{}^\ast & (q' \xi \symret_i^{1}, w\inputendmarker \backslash \mymem_i(\gamma), Z_0 \gamma i \myspl\! \gamma' \mydollar) \\
            \myonlymove{} & (q' \xi \symret_i^{0}, w\inputendmarker \backslash \mymem_i(\gamma), Z_0 \gamma i \!\myspr\! \gamma' \mydollar) \\
            \myonlymove{} & (q' \xi, w\inputendmarker \backslash \mymem_i(\gamma), Z_0 \gamma i \!\myspr\! \gamma' \mydollar)
        \end{array}
    \]
    holds, and we complete the proof.
\end{proof}

\begin{proof}[Proof of Lemma~\ref{lem:irules}]
    We first show the ``$\Leftarrow$'' direction.  Let $\xi = \gamma' = \varepsilon$ in Proposition~\ref{prop:det}.  Fix $i$ with $q \myto{i}{\myautomaton} q'$ and splice $v$ as $q_0 \mytto{\gamma}{\myautomaton} q \myto{i}{\myautomaton} q'$.  From the condition of $\myautomaton$, $\gamma i$ is a prefix of some string in $\reflang(\alpha)$ and matching by Claim~\ref{clm:mat}.  The other requirements $q' \symcall_i^{0} \in Q$ and $s_{\zettaiti{\gamma},i}^{\gamma i} \geq 1$ are obvious.  With rule (3) and $w = \mymem_i(\gamma) w'$, we obtain the following calculation, where the steps $\myid{}^\ast$ has no ID with a state of $Q_\myautomaton$:
    \[
        \begin{array}{ll}
            & (q,w\inputendmarker, Z_0 \gamma \myspl\! \mydollar) \\
            \myid{(3)} & (q' \symcall_i^{0}, w\inputendmarker,  Z_0 \gamma i \myspl\! \mydollar) \\
            \myid{}^\ast & (q', w'\inputendmarker, Z_0 \gamma i \myspl\! \mydollar).
        \end{array}
    \]

    We next show the ``$\Rightarrow$'' direction.  The first application of the assumed calculation implies $q \myto{i}{\myautomaton} q'$.  Therefore, by the same argument discussed above, we obtain the deterministic calculation $(q' \symcall_i^{0}, w\inputendmarker, Z_0 \gamma i \myspl\! \mydollar) \myonlymove{}^\ast (q', w\inputendmarker\backslash\mymem_i(\gamma), Z_0 \gamma i \myspl\! \mydollar)$ with the condition of $\myonlymove{}^\ast$.  Thanks to the determinism and the condition, the two calculations must coincide and so must both ends.  This completes the proof.
\end{proof}

\begin{proof}[Proof of the main theorem (Theorem~\ref{thm:rewbscisnesa})]
        Take any $w \in L(\alpha)$ and fix $v = v_1 \cdots v_{\zettaiti{v}} \in \reflang(\alpha)$ with $\deref(v) = w$ and $v_1, \dots, v_{\zettaiti{v}} \in \nfaalph$.  Let $m = \zettaiti{v}$.  Also fix a run $q_0 \myto{v_1}{\myautomaton} q_1 \cdots \myto{v_m}{\myautomaton} q_m \in F$.
        Decompose $v = x_0 i_1 x_1 \cdots i_n x_n$ where $x_r \in (\Sigma \mydisjointu \brkset)^\ast$ and $i_r \in \mynat$.
        Let $y_{r-1} = x_0 \cdots i_{r-1} x_{r-1}$ for each $r \in \{ 1, \dots, n \}$.
        We can easily check that $w = \deref(v) = g(x_0) \mymem_{i_1}(y_1) g(x_1) \cdots \mymem_{i_n}(y_{n}) g(x_n)$.
        Hence, the ``$\Leftarrow$'' directions of Lemmas~\ref{lem:abrule} and \ref{lem:irules} weave the run into the calculation $(q_0, w\inputendmarker, Z_0 \myspl\! \mydollar) \myid{}^\ast (q_m, \inputendmarker, Z_0 v \myspl\! \mydollar)$ while maintaining the invariant that the stack is $Z_0 y_{r-1} \myspl\! \mydollar$ at the step guessing $i_{r}$.
        Conversely, take $w \in L(A_\alpha)$ and fix a calculation $C_0 \myid{} C_1 \cdots \myid{} C_m$ where $C_r = (p_r, w_r\inputendmarker, \beta_r \mydollar)$ that is accepting, namely $p_0 = q_0$, $w_0 = w$, $\beta_0 = Z_0 \myspl$, $p_m \in F$ and $w_m = \varepsilon$.
        Enumerate all the IDs $C_{\ell_0} = C_0, C_{\ell_1}, \dots, C_{\ell_n} = C_m$ in order whose state is in $Q_\myautomaton$.  For each $j < n$, the rule applied to $C_{\ell_j}$ is one of rules (1)--(3).
        Therefore, the ``$\Rightarrow$'' directions of Lemmas~\ref{lem:abrule} and \ref{lem:irules} again translate the calculation into the path $p_{\ell_0} \mytto{\gamma_{\ell_j}}{\myautomaton} p_{\ell_j}$ for $j = 0, \dots, n$ with the invariants that $\beta_{\ell_j} = Z_0 \gamma_{\ell_j} \myspl$ and $w = \deref(\gamma_{\ell_j}) w_{\ell_j}$.  Letting $j = n$ concludes the proof.
    \end{proof}

\begin{landscape}
\section{A comprehensive table of the relations among the language classes}
\label{app:table}

\newcommand{\mytrue}{T}
\newcommand{\myfalse}{F}
\newcommand{\myemph}[1]{\cellcolor[gray]{0.9}\textbf{#1}}
\newcolumntype{I}{!{\vrule width 1.3pt}}
\newcolumntype{i}{!{\vrule width 1.15pt}}
\newcommand{\bhline}[1]{\noalign{\hrule height #1}}
\footnotesize
\begin{table}[htb!]
    \begin{tabular}{Icic|c|c|c|c|c|c|c|c|c|cI} \bhline{1.3pt}
        $\mathcal{L}_1 \backslash \mathcal{L}_2$ & IL  & SL & NESL & CFL & PMCFL & MCFL & U-PMCFL & U-MCFL & REWB & REWB{\textbackslash}CR & CS-REWB \\ \bhline{1.15pt}
        IL       & $=$ & \myfalse~\cite{aho1969nested,greibach1969checking} & \myfalse & \myfalse & \myfalse\textsuperscript{1} & \myfalse & \myfalse & \myfalse & \myfalse & \myfalse & \myfalse \\ \hline
        SL       & \mytrue & $=$ & \myfalse~\cite{ogden1969intercalation,greibach1969checking} & \myfalse & ? & \myemph{\myfalse} & \myfalse & \myfalse & \myfalse & \myfalse & \myfalse \\ \hline
        NESL     & \mytrue & \mytrue & $=$ & \myfalse\textsuperscript{2} & ? & \myemph{\myfalse~(\ref{cor:pmcfgsaandmcfgnesa})} & ? & \myemph{\myfalse} & \myfalse & \myfalse & \myfalse \\\hline
        CFL      & \mytrue & \mytrue & \myfalse~\cite{engelfriet1980stack} & $=$ & \mytrue & \mytrue & \myfalse~\cite{ehrenfeucht1977some} & \myfalse & \myfalse\textsuperscript{3} & \myfalse & \myfalse \\\hline
        PMCFL    & \myfalse& \myfalse & \myfalse & \myfalse & $=$ & \myfalse~\cite{seki1991multiple} & \myfalse & \myfalse & \myfalse & \myfalse & \myfalse \\ \hline
        MCFL     & \myfalse\textsuperscript{4} & \myfalse & \myfalse & \myfalse & \mytrue & $=$ & \myfalse & \myfalse & \myfalse & \myfalse & \myfalse \\ \hline
        U-PMCFL & \mytrue\textsuperscript{5} & \myemph{\myfalse~(\ref{cor:pmcfgsaandmcfgnesa})} & \myemph{\myfalse} & \myfalse & \mytrue & \myemph{\myfalse} & $=$ & \myfalse~\cite{rozenberg1978etol} & \myfalse & \myfalse & \myfalse \\ \hline
        U-MCFL & \mytrue & ? & ? & \myfalse~\cite{rozenberg1978etol} & \mytrue & \mytrue  & \mytrue & $=$ & \myfalse\textsuperscript{3} & \myfalse & \myfalse \\ \hline
        REWB & \mytrue~\cite{mfcs2023} & \myfalse~\cite{mfcs2023} & \myfalse & \myfalse & \myemph{\mytrue} & \myemph{\myfalse} & \myemph{\mytrue~(\ref{thm:rewbpmcfg})} & \myemph{\myfalse} & $=$ & \myfalse & \myemph{\myfalse} \\ \hline
        REWB{\textbackslash}CR & \mytrue & \mytrue & \mytrue~\cite{mfcs2023} & \myfalse\textsuperscript{2} & \myemph{\mytrue} & \myemph{\myfalse~(\ref{thm:notmcfg})} & \myemph{\mytrue} & \myemph{\myfalse} & \mytrue & $=$ & \myemph{\myfalse} \\ \hline
        CS-REWB & \mytrue & \myemph{\mytrue} & \myemph{\mytrue~(\ref{thm:rewbscisnesa})} & \myfalse\textsuperscript{2} & \myemph{\mytrue} & \myemph{\mytrue} & \myemph{\mytrue} & \myemph{\mytrue~(\ref{thm:rewbscmcfg})} & \mytrue & ? & $=$ \\ \bhline{1.3pt}
    \end{tabular}
    \caption{Summary of whether $\mathcal{L}_1 \subseteq \mathcal{L}_2$ for the language classes considered in this paper.  \mytrue{} means $\mathcal{L}_1 \subseteq \mathcal{L}_2$, \myfalse{} means $\mathcal{L}_1 \nsubseteq \mathcal{L}_2$, and ? means that the relation seems open.  U-PMCFL and U-MCFL stand for the classes of unary-PMCFLs and unary-MCFLs, respectively.  The relations proved for the first time by this paper are highlighted and each parenthesized number refers to the corresponding theorem or corollary in this paper (highlighted cells without parenthesized numbers are novel results that follow from other novel results of the paper).  A citation refers to the evidence and a superscript refers to the item of the list below that explains why the relation holds.}
\end{table}
    \begin{enumerate}
        \item Nishida and Seki showed an ET0L language that is not in PMCFL~\cite{nishida2000grouped}.  ET0L $\subseteq$ IL is obvious (see, e.g.,~\cite{salomaa1974parallelism}).
            Note that the fact that this inclusion is strict (i.e., $\text{ET0L} \subsetneq \text{IL}$) was firstly shown by Ehrenfeucht et al.~\cite{ehrenfeucht1976relationship} as remarked in the introduction.
        \item The copy language $\{ ww \,|\, w \in \{a, b\}^\ast \}$ is a witness.
        \item The language $\{ a^nb^n \,|\, n \geq 0 \}$ is a witness~\cite{berglund2023re}.
        \item It follows from the copying theorem for IL~\cite{engelfriet1976copying} and the existence of a CFL that is not an EDT0L~\cite{ehrenfeucht1977some} (see also \cite{kanazawa2010copying}). 
        \item $\text{Unary-PMCFL} = \text{EDT0L} \subseteq \text{ET0L} \subseteq \text{IL}$ holds, where the first equation follows from \cite{nishida2000grouped} as remarked in the introduction,
            and the latter two inclusions are obvious as mentioned in 1.
    \end{enumerate}
\end{landscape}

\end{document}